\documentclass[a4paper,UKenglish,cleveref, autoref, thm-restate]{lipics-v2021}

 \hideLIPIcs  

\nolinenumbers

\usepackage{graphicx}
\usepackage{todonotes}
\usepackage{amsfonts}  
\usepackage{amssymb}
\usepackage{tikz}
\usepackage{circuitikz}
\usetikzlibrary{automata, positioning}
\usepackage{amsmath} 
\usetikzlibrary{calc} 
\usetikzlibrary{shapes}
\usetikzlibrary{shapes.geometric, arrows.meta, positioning}
\usetikzlibrary{arrows.meta}
\usepackage{comment}
\usepackage{algorithm,algorithmicx,algpseudocode}
\usepackage{color}
\usepackage{todonotes}
\usepackage{bm}
\newcounter{rqrmnt}
 
\definecolor{darkgreen}{RGB}{0,140,0}
\newtheorem{hypothesis*}{Hypothesis}

\newcommand{\both}[1]{{(resp. #1)}}
\newcommand{\mvc}{\mathrm{mvc}}
\newcommand{\evc}{\mathrm{evc}}

\title{Eternal Vertex Cover Problem on Halin Graphs \footnote{\large Preprint submitted to Arxiv}} 

\author{Jasine Babu}{Department of Computer Science and Engineering, Indian Institute of Technology Palakkad, Palakkad, India}{jasine@iitpkd.ac.in}{https://orcid.org/0000-0001-7223-8657} {}

\author{Pratik Ghosal}{Department of Computer Science and Engineering, Indian Institute of Technology Palakkad, Palakkad, India}{pratik@iitpkd.ac.in}{https://orcid.org/0000-0002-4416-5160} {}

\author{Cipriyano Simoes}{Department of Computer Science and Engineering, Indian Institute of Technology Palakkad, Palakkad, India}{cipriyanosimoes99@gmail.com}{https://orcid.org/0009-0009-3557-5094} {}

\authorrunning{J.~Babu and P.~Ghosal and C.~Simoes} 

\ccsdesc[100]{Theory of computation -> Design and analysis of algorithms -> Graph algorithms analysis} 

\keywords{Graph Protection, Eternal Vertex Cover, Halin Graphs, Approximation Algorithm} 

\category{} 

\relatedversion{} 

\begin{document}

\maketitle


\begin{abstract}
Eternal vertex cover problem is a graph protection problem which is a dynamic two player game variant of the classical vertex cover problem. 
In this game, the minimum number of guards required to protect a graph $G$ is called the eternal vertex cover number of $G$, denoted by $\evc(G)$. Deciding whether $\evc(G)$ is at most $k$ or not is NP-hard even for bipartite graphs~\cite{misra2022eternal}. 

It is known that for any graph $G$, $\mvc(G) \le \evc(G) \le 2\mvc(G)$, where $\mvc(G)$ is the vertex cover number of $G$. For both upper and lower bounds, there are graphs that achieve them tightly~\cite{Klostermeyer2009}. 
Having a large number of cut vertices causes $\evc(G)$ to be significantly larger than $\mvc(G)$~\cite{BABU22newlb}.
However, no better lower bounds are known for the eternal vertex cover number of biconnected graphs other than $\mvc(G)$.
In addition, no biconnected graphs $G$ achieve $\evc(G) = 2\mvc(G)$~\cite{Klostermeyer2009}. 
In this work, we focus on biconnected graphs in graph families.
For infinite graph families $\mathcal{F}$, consider the parameter $\rho(\mathcal{F}) =   \sup \{r \in \mathbb{R} :\text{ for infinitely many graphs }G \in \mathcal{F}, \frac{\evc(G)}{\mvc(G)} \ge r \}$. 
For many classes of biconnected graphs $\mathcal{F}$, such as biconnected chordal graphs and locally connected graphs, all $G \in \mathcal{F}$ satisfy $\evc(G) \le \mvc(G)+1$~\cite{BABU2021}, and consequently $\rho(\mathcal{F})=1$.  On the other extreme, for graph classes such as series parallel graphs, bipartite graphs, and planar graphs, $\rho(\mathcal{F})=2$, even when restricted to their biconnected graphs~\cite{BABU22newlb}. 
No class of biconnected graphs $\mathcal{F}$ is known yet, for which $1 < \rho(\mathcal{F}) < 2$.  

In this paper, 
we show that when $\mathcal{F}$ is the family of Halin graphs, $\frac{7}{6} \le \rho(\mathcal{F}) \le \frac{3}{2}$. 
Halin graphs are $3$-connected and they have treewidth three. The contrast in the value of $\rho$ for Halin graphs and series parallel graphs, which are precisely the class of graphs of treewidth at most two, is noteworthy.  To show the upper bound for $\rho(F)$ for Halin graphs, we give a linear time algorithm that gives a defense strategy using at most $\frac{3}{2}\mvc(G)$ guards and just two configurations. It follows that this serves as a $\frac{3}{2}$ factor approximation algorithm for computing eternal vertex cover number of Halin graphs. 
For several subclasses of Halin graphs, the same algorithm gives the upper bound $\rho(\mathcal{F}) \le \frac{4}{3}$. For caterpillar Halin graphs, we give an improved algorithm that attains the upper bound $\rho(\mathcal{F}) \le \frac{4}{3}$. Consequently, for these classes we get an improved approximation factor $\frac{4}{3}$ for computing eternal vertex cover number. Whether computing eternal vertex cover number is NP-hard for Halin graphs remains an open problem, as is the case with treewidth two graphs. 
\end{abstract}

\newpage
\section{Introduction}\label{sec:intro}
Graph protection problems are studied extensively~\cite{goddard2005, cockayne2004roman, klostermeyer2017eternal, Hartnell2014} . 
One such problem is the eternal vertex cover problem, which was introduced by Klostermeyer and Mynhardt in 2009~\cite{Klostermeyer2009}. 
The problem is posed as a  multi-round attacker-defender game, which is a dynamic extension of the classical vertex cover problem~\cite{Karp72}. 

Let $G=(V, E)$ be a finite, simple, undirected graph. A \emph{vertex cover} of $G$ is a subset $S \subseteq V$ such that every edge $e\in E$ is incident with some vertex $v \in S$. A \emph{minimum vertex cover} of $G$ is a vertex cover of $G$ of minimum cardinality, and its size is the \emph{vertex cover number} of $G$, which is denoted by $\mvc(G)$. Computing vertex cover number is a well-known NP-hard problem~\cite{Karp72}, but it admits polynomial time algorithms for certain graph classes such as bipartite graphs~\cite[Chapter 1]{Vazirani} and constant treewidth graphs~\cite{Bodlaender1996tw}. 

The \emph{eternal vertex cover problem} considers a dynamic setting in which guards are placed on vertices and must defend the graph against an infinite sequence of adversarial edge attacks. In this work, we consider the standard model in which at most one guard is allowed on a vertex at any instant. A \emph{configuration} is a vertex cover $C\subseteq V$ representing the current positions of the guards. The game begins with an initial configuration $C_0$, chosen by the defender. At each time step $i\geq 0$, given a configuration $C_i$, the adversary selects an edge $uv\in E$. The defender must move a guard located at $u$ or $v$ along the edge $uv$ to defend the attack. Simultaneously, every other guard can either be moved to an adjacent vertex or be retained in its position. The resulting set of occupied vertices forms the next configuration $C_{i+1}$, which must again be a vertex cover of $G$. If the defender is not able to execute such a reconfiguration, the adversary wins. Otherwise, the game proceeds to the next round with the adversary choosing an edge to attack. The number of guards remains the same throughout the game. If the defender can successfully defend any infinite sequence of attacks on $G$ using $k$ guards, we say that $G$ is $k$-defendable, and the configurations used in the game are said to form an eternal vertex cover class of $G$.  The \emph{eternal vertex cover number} of $G$, denoted by $\evc(G)$, is the minimum $k$ such that $G$ is $k$-defendable. The \emph{eternal vertex cover problem} takes any graph $G$ as input and has to compute $\evc(G)$.  It is known that, for any graph $G$, $\mvc(G)\leq \evc(G)\leq 2\mvc(G)$~\cite{Fomin2010, Klostermeyer2009}. The lower bound follows trivially from the definition of the problem. The graph $C_n$, which is a cycle on $n$ vertices, has $\mvc(C_n)=\evc(G)=\lceil\frac{n}{2}\rceil$, while for the graph $P_n$, which is a path on $n$ vertices, $\mvc(P_n)=\lfloor\frac{n}{2}\rfloor$, while $\evc(P_n)=n-1$. Eternal vertex cover number of a tree is one more than its number of internal vertices~\cite{Klostermeyer2009}. 

The eternal vertex cover problem was shown to be NP-hard and $2$-approximable in polynomial time by Fomin et al.~\cite{Fomin2010}. The hardness holds even for bipartite graphs~\cite{misra2022eternal} and restricted classes of planar graphs~\cite{BABU2021}. For certain graph classes such as chordal graphs~\cite{BPS2021}, cactus graphs~\cite{BPS2021} and generalized trees\cite{Hisashi_Gen_Trees}, the problem admits polynomial time algorithms. The complexity status of the problem remains open for constant treewidth graphs~\cite{Fomin2010} and some of its subclasses like series parallel graphs~\cite{TizianaseriesParallel}, outerplanar graphs~\cite{BabuKPW25} and Halin graphs. 

The class of graphs which achieve the upper bound $\evc(G)=2\mvc(G)$ was characterized by Klostermayer et al.~\cite{Klostermeyer2009}. A close look at this characterization would reveal that all graphs $G$ which satisfy $\evc(G)=2\mvc(G)$ are series parallel graphs and none of them are biconnected~\cite{Klostermeyer2009,BABU22newlb}. Further, no better lower bounds are known for biconnected graphs, even though $\evc(G)$ is lower bounded by the minimum size of a vertex cover that contains all cut vertices of $G$~\cite{BABU22newlb}. 
This makes the pursuit for better lower and upper bounds for $\evc(G)$ of biconnected graphs interesting. 

There is an infinite family of biconnected graphs $\mathcal{F}$ is known for which $\sup_{G\in\mathcal{F}}\left\{\frac{evc(G)}{mvc(G)}\right\} = 2$~\cite[Section 5.1]{BABU22newlb}. All graphs in this family are bipartite, planar and series parallel. Hence, the upper bound cannot be improved significantly for these classes. However, for graph classes such as biconnected chordal graphs and locally connected graphs, $\evc(G) \le \mvc(G)+1$~\cite{BABU2021}. In fact, for all classes studied so far, either every biconnected graph in the family satisfies $\evc(G)\le \mvc(G)+1$ or there exist infinitely many biconnected graphs $G$ in the family for which $\frac{evc(G)}{mvc(G)}$ is greater than $2 - \epsilon$, for all $\epsilon >0$. Hence, in this work, for infinite graph families $\mathcal{F}$ of biconnected graphs, we focus on the parameter $\rho(\mathcal{F}) =   \sup \{r \in \mathbb{R} :\text{ for infinitely many graphs }G \in \mathcal{F}, \frac{\evc(G)}{\mvc(G)} \ge r \}$. The parameter is chosen this way, so as to make it resilient to the addition of finite number of graphs to a family to artificially boost the parameter. In this notation, for biconnected chordal graphs and locally connected graphs $\rho(\mathcal{F})=1$, while for bipartite graphs, series parallel graphs and planar graphs, $\rho(\mathcal{F}) =2$, even when we restrict $\mathcal{F}$ to their subclass of biconnected graphs.  

An interesting question is whether there are classes of biconnected graphs $\mathcal{F}$ for which $1 < \rho(\mathcal{F}) < 2$. We answer this question in the affirmative, by showing that for the family of Halin graphs $\mathcal{H}$, which is a well studied graph class, $\frac{7}{6} \le \rho(\mathcal{H}) \le \frac{3}{2}$.  A Halin graph is obtained from a tree $T$ with no degree two vertex, by connecting all  leaves of $T$ to form a cycle $C$ that respects the order induced by a planar embedding of $T$~\cite{Halin1971orig}.
There are linear time algorithms known for the recognition of Halin graphs and decomposition of a Halin graph into the underlying tree $T$ and cycle $C$~\cite{Eppstein2016, FominT06}.
Halin graphs are $3$-connected planar graphs~\cite[Chapter H]{Hazewinkel1997}, Hamiltonian~\cite{Syslo1979halin} and have treewidth exactly three~\cite{Bodlaender98}.
Several classical problems such as the Traveling Salesman Problem~\cite{Cornuejols1983} and edge-coloring~\cite{Eppstein1988} admit efficient algorithms on Halin graphs due to their strong structural properties. 
Since Halin graphs have bounded treewidth, computing their vertex cover number is possible in linear time~\cite{Bodlaender1996tw}.

We prove the lower bound $ \rho(\mathcal{H})\ge \frac{7}{6}$ by demonstrating an infinite subfamily of Halin graphs $\mathcal{H'} \subseteq \mathcal{H}$, for which $\inf_{G \in \mathcal{H}'} \frac{\evc(G)}{\mvc(G)} = \frac{7}{6}$. To establish the upper bound $\rho(\mathcal{H})\le \frac{3}{2}$, we give a linear time algorithm that, given any $n$ vertex Halin graph $G$ as input,  constructs an eternal vertex cover class of $G$  with only two configurations, each of size at most $\frac{3n}{4} \le \frac{3}{2}\mvc(G)$. This algorithm works as a linear time approximation algorithm with approximation factor $\frac{3}{2}$ for computing eternal vertex cover number of Halin graphs. A tighter analysis of this algorithm shows that for all Halin graphs except those having $\lceil \frac{n}{2} \rceil \le \mvc(G) < \frac{9n}{16}$, the factor $\frac{3}{2}$ can be improved to $\frac{4}{3}$. Similarly, the algorithm achieves the ratio $\frac{4}{3}$ if the number of leaves in the underlying tree of $G$ is at least $\frac{2n}{3}$. A caterpillar is a tree with a dominating path. We call a Halin graph whose underlying tree is a caterpillar as a caterpillar Halin graph. There are some families of caterpillar Halin graphs where our first algorithm can only achieve the bound $\frac{3}{2}\mvc(G)$. We give another linear time algorithm that achieves the approximation ratio $\frac{4}{3}$ for caterpillar Halin graphs. 

The complexity status of the eternal vertex cover problem for Halin graphs as well as series parallel graphs remains open. Series parallel graphs are precisely the class of graphs with treewidth at most two. It is interesting to note that Halin graphs have treewidth $3$ and yet $\rho(\mathcal{H})\le \frac{3}{2}$, whereas for series parallel graphs, $\rho(\mathcal{F})=2$, even when we restrict to biconnected graphs in the family.

\section{Notation}\label{sec:notation}
We consider only simple, finite and undirected graphs in this paper. The family of Halin graphs will be denoted as $\mathcal{H}$. 
Given a Halin graph, $G$, it is convenient
to view $G$ as the union of its underlying tree and its outer cycle.
Throughout the paper, let $n=|V(G)|$ denote the order of the Halin graph,
$T$ denote its underlying tree, $C$ denote its outer cycle, and
$l=|V(C)|$ denote the number of cycle vertices. The leaf vertices of the tree $T$ is denoted as $L(T)$ and internal vertices of $T$ is denoted by $int(T)$.
Unless stated otherwise, the vertices of the cycle are named in
anticlockwise order as
$c_0,c_1,\ldots,c_{l-1}$.
Whenever an index is outside this range, it is interpreted modulo $l$. For any $v \in int(T)$ which is adjacent to at most one other internal vertex, the induced subgraph of $G$ on the vertex set consisting of $v$ and its neighbours in $L(T)$ is called a \emph{fan}, and the vertex $v$ is called the
\emph{fan center}. We repeatedly use the fact that every Halin graph whose
underlying tree contains more than one internal vertex has at least two
fans.

\section{Lower Bound for $\rho(\mathcal{H})$} \label{sec:lowerbound}

In this section, we describe a family of Halin graphs for which the eternal vertex cover number is larger than the vertex cover number by more than a constant additive gap. The construction is based on repeating a fixed local structure, which we call a unit.

Consider the Halin graph $G_1$ shown in Figure~\ref{fig:G1_G2}. One can verify that
$
\mvc(G_1)=5
$ and $
\evc(G_1)=6
$.
The two configurations
$
C_1=\{x,z,a_{1},a_2,a_{3},a_{5}\},
$
and
$
C_2=\{y,z,a_{2},a_{3},a_{4},a_{6}\}
$ form an eternal vertex cover class with six guards for graph $G_1$.
We call the subgraph induced by the vertices
$
\{a_{1},a_{2},a_{3},a_{4},a_{5},a_{6}\}
$
a \emph{unit graph}, and denote it by $U$.

\begin{figure}[ht]
    \centering
    \tiny
   \begin{minipage}{0.35\textwidth}
\scriptsize
\begin{tikzpicture}[scale=0.8,
    label distance= -0.11cm,
    every node/.style={circle,fill=black,inner sep=2pt}
    ]

\node[label=left:$x$] (x) at (0,2) {};
\node[label=left:$y$] (y) at (0,0) {};

\node[label=above:$z$] (z) at (1,1) {};
\node[label=above:$a_{2}$] (a12) at (2.5,1) {};
\node[label=below:$a_{1}$] (a11) at (2.5,0) {};

\node[label=right:$a_{6}$] (a16) at (5,2) {};
\node[label=above:$a_{3}$] (a13) at (4,1) {};
\node[label=right:$a_{5}$] (a15) at (5,1) {};
\node[label=right:$a_{4}$] (a14) at (5,0) {};

\draw (x)--(y);
\draw (x)--(a16);
\draw (y)--(a14);

\draw (x)--(z);
\draw (y)--(z);

\draw (z)--(a12);
\draw (a12)--(a13);
\draw (a12)--(a11);

\draw (a13)--(a16);
\draw (a13)--(a15);
\draw (a13)--(a14);

\draw (a16)--(a15);
\draw (a15)--(a14);
\end{tikzpicture} 
\vspace{-0.5cm}
\begin{center}
{ Graph $G_1$}
\end{center}
\end{minipage}
\hspace{0.4cm}
\begin{minipage}{0.58\textwidth}
\scriptsize
\begin{tikzpicture}[scale=0.8,
    label distance= -0.1cm,
    every node/.style={circle,fill=black,inner sep=2pt}]

\node[label=left:$x$] (x) at (0,2) {};
\node[label=left:$y$] (y) at (0,0) {};

\node[label=above:$z$] (z) at (1,1) {};

\node[label={above:$a_{12}$}] (a12) at (2,1) {};
\node[label=below:$a_{11}$] (a11) at (2,0) {};

\node[label=above:$a_{13}$] (a13) at (4,1) {};
\node[label=below:$a_{14}$] (a14) at (3,0) {};
\node[label=below:$a_{15}$] (a15) at (4,0) {};
\node[label=below:$a_{16}$] (a16) at (5,0) {};

\node[label=above:$a_{r2}$] (a22) at (6.5,1) {};
\node[label=below:$a_{r1}$] (a21) at (6.5,0) {};

\node[label=above:$a_{r3}$] (a23) at (7.7,1) {};
\node[label=right:$a_{r6}$] (a26) at (9,2) {};
\node[label=right:$a_{r5}$] (a25) at (9,1) {};
\node[label=right:$a_{r4}$] (a24) at (9,0) {};

\draw (x)--(y);
\draw (x)--(a26);
\draw (y)--(a11);

\draw (x)--(z);
\draw (y)--(z);

\draw (z)--(a12);
\draw (a12)--(a11);

\draw (a12)--(a13);
\draw (a11)--(a14);
\draw (a14)--(a15);
\draw (a15)--(a16);

\draw (a13)--(a14);
\draw (a13)--(a15);
\draw (a13)--(a16);

\draw[dashed] (a13)--(a22);
\draw[dashed] (a16)--(a21);
\draw (a22)--(a21);
\draw (a21)--(a24);

\draw (a22)--(a23);

\draw (a23)--(a26);
\draw (a23)--(a25);
\draw (a23)--(a24);

\draw (a26)--(a25);
\draw (a25)--(a24);
\end{tikzpicture}
\vspace{-0.5cm}
\begin{center}
{ Graph $G_r$}
\end{center}
\end{minipage}
    \caption{Graphs $G_1$ and $G_r$.}
    \label{fig:G1_G2}
\end{figure}

We construct larger Halin graphs by placing several copies of $U$ consecutively along the outer cycle, while preserving the Halin structure.
The graph obtained using $r$ units is denoted as $G_r$.
Figure~\ref{fig:G1_G2} shows the graph $G_r$.

For a unit $U_i$, let $\eta_X(U_i)$ denote the number of guards placed on the vertices of $U_i$ in a configuration $X$. We omit the subscript when it is clear from the context. It is easy to see that in any vertex cover of $G_r$, $\eta(U_i) \ge 3$.

\begin{claim}\label{clm:unit-mvc}
$\mvc(G_r)= 3r+2$. Moreover, the unique minimum vertex cover of $G_r$ has exactly three vertices from each unit.
\end{claim}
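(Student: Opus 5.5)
The plan is to prove matching lower and upper bounds of $3r+2$. For uniqueness, the idea is that equality in the lower bound forces every local choice. I label the vertices of unit $U_i$ as $a_{i1},\dots,a_{i6}$, following Figure~\ref{fig:G1_G2}.

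Edges incident to $U_i$:
\begin{itemize}
\item Edges inside $U_i$: $a_{i1}a_{i2}$, $a_{i2}a_{i3}$, $a_{i1}a_{i4}$, $a_{i3}a_{i4}$, $a_{i3}a_{i5}$, $a_{i3}a_{i6}$, $a_{i4}a_{i5}$, $a_{i5}a_{i6}$.
\item Edges leaving $U_i$: $a_{i2}$ is joined to $a_{(i-1)3}$ (or to $z$ when $i=1$). $a_{i1}$ is joined to $a_{(i-1)6}$ (or to $y$ when $i=1$). $a_{i6}$ is joined to $a_{(i+1)1}$ (or to $x$ when $i=r$). $a_{i3}$ is joined to $a_{(i+1)2}$ when $i<r$.
\end{itemize}

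\textbf{Step 1 (lower bound).} The triangle $xyz$ needs at least $2$ guards. These vertices are disjoint from all units. Using the already noted fact that $\eta(U_i)\ge 3$ for every unit in any vertex cover, we get $\mvc(G_r)\ge 3r+2$.

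\textbf{Step 2 (upper bound).} I will show that $S=\{x,z\}\cup\bigcup_{i=1}^r\{a_{i1},a_{i3},a_{i5}\}$ is a vertex cover.
\begin{itemize}
\item Inside a unit, $a_{i3}$ covers all edges at $a_{i3}$, and $a_{i1}$ covers $a_{i1}a_{i2}$ and $a_{i1}a_{i4}$. Finally, $a_{i5}$ covers $a_{i4}a_{i5}$ and $a_{i5}a_{i6}$.
\item Between units, $a_{(i-1)3}a_{i2}$ is covered by $a_{(i-1)3}$, and $a_{(i-1)6}a_{i1}$ is covered by $a_{i1}$.
\item At the ends, $ya_{11}$ is covered by $a_{11}$, $za_{12}$ by $z$, and $a_{r6}x$ by $x$.
\item The triangle edges are covered by $x$ and $z$.
\end{itemize}
Hence $\mvc(G_r)=3r+2$.

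\textbf{Step 3 (uniqueness).} Take any minimum vertex cover $S'$. Since $|S'|=3r+2$, both inequalities of Step 1 must be tight: $S'$ contains exactly $2$ of $x,y,z$ and exactly $3$ vertices of each $U_i$. I then show that the only $3$-subset of $U_i$ covering its internal edges is $\{a_{i1},a_{i3},a_{i5}\}$.
\begin{itemize}
\item If $a_{i3}\notin S'$, its four neighbours $a_{i2},a_{i4},a_{i5},a_{i6}$ in the unit must all be chosen, which is too many. So $a_{i3}\in S'$.
\item The remaining internal edges form the path $a_{i2}a_{i1}a_{i4}a_{i5}a_{i6}$ on five vertices.
\item A $2$-vertex cover of a $P_5$ must use the second and fourth vertices, namely $a_{i1}$ and $a_{i5}$.
\end{itemize}
On the triangle, $a_{12}\notin S'$ forces $z\in S'$ via the edge $za_{12}$. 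Likewise $a_{r6}\notin S'$ forces $x\in S'$ via $a_{r6}x$. Since only two of $x,y,z$ are chosen, $S'\cap\{x,y,z\}=\{x,z\}$, and thus $S'=S$.

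The only step needing care is the local enumeration in Step 3, which reduces to the unique minimum cover of $P_5$. Beyond that, the main risk is misreading the adjacencies of the figure, so I would recheck the edge list against both $G_1$ and $G_r$.
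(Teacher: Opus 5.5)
Your proof is correct and matches the paper's argument: the same cover $S=\{x,z\}\cup\{a_{i1},a_{i3},a_{i5}\}$, tightness forcing exactly two vertices of $xyz$ and exactly three per unit, the unique three-vertex cover of each unit, and then $x,z$ forced by the edges $za_{12}$ and $a_{r6}x$. The one difference is the lower bound: you derive $\mvc(G_r)\ge 3r+2$ from the local counts (two for the triangle plus $\eta(U_i)\ge 3$ per unit), whereas the paper uses Hamiltonicity of the $(6r+3)$-vertex graph. Your route is slightly more self-contained, since the paper needs those local counts for its uniqueness step anyway, and your explicit $P_5$ argument spells out what the paper only asserts as ``the only possible way''.
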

\begin{proof}
Note that $G_r$ has exactly $6r+3$ vertices. Since $G_r$ is Hamiltonian, there exists a cycle in $G_r$ that contains all its vertices. Hence $\mvc(G_r) \ge 3r+2$. It is easy to construct a vertex cover of $G_r$ of size $3r+2$ by  keeping $3$ vertices in each unit.  $S=\{x,z\} \cup \{a_{i1},a_{i3},a_{i5}\,|\,1 \le i\le r \}$ gives the required vertex cover. Note that in any minimum vertex cover, there must be exactly $3$ vertices from each unit and $2$ vertices from the triangle $xyz$. To cover the edges of a unit $U_i$ with exactly three guards, the only possible way is to keep the guards on $a_{i1},a_{i3}$ and $a_{i5}$. This forces the guards on the triangle $xyz$ to be on $x$ and $z$.  Hence, $S$ is the unique minimum vertex cover of $G_r$. 
\end{proof} 
  The family $\{ G_r \mid r \ge 1 \}$ is used to lower bound $\rho(\mathcal{H})$.
 The following lemma is crucial to prove a lower bound for $\evc(G_r)$. 
\begin{lemma}\label{lem:four_units}
Let $r \ge 4$ and $X$ be an eternal vertex cover configuration of $G_r$. Let $U_a,U_b,U_c$ and $U_d$ be four consecutive units of $G_r$. If $ \eta(U_a)=\eta(U_b) =3$ in $X$, then either
(i) $ \eta(U_{c}) \ge 5 $ or  (ii) 
$\eta(U_{c}) = 4 \text{ and } \eta(U_{d}) \ge 4$.
\end{lemma}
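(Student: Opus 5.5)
The plan is to exploit the rigid local structure of a unit together with the fact that every configuration reachable from $X$ is again an eternal vertex cover configuration, so that the counting constraints can be reapplied after a carefully chosen attack. Throughout, unit $U_i$ has tree path $a_{i2}a_{i3}$ with $a_{i1}$ a leaf of $a_{i2}$ and $a_{i4},a_{i5},a_{i6}$ leaves of $a_{i3}$. Consecutive units are joined only by the tree edge $a_{i3}a_{(i+1)2}$ and the cycle edge $a_{i6}a_{(i+1)1}$.

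\textbf{Step 1: the structure of a $3$-guard unit.} The edges $a_{i3}a_{i4},a_{i3}a_{i5},a_{i3}a_{i6},a_{i4}a_{i5},a_{i5}a_{i6}$ have the unique vertex cover $\{a_{i3},a_{i5}\}$ of size two, and one more guard is needed on $a_{i1}a_{i2}$. Hence a unit with $\eta(U_i)=3$ is $\{a_{i3},a_{i5}\}\cup\{a_{i1}\text{ or }a_{i2}\}$, and in particular $a_{i6}$ is empty. The cycle edge $a_{i6}a_{(i+1)1}$ then forces $a_{(i+1)1}$ to be guarded. Applying this to $U_a$ and $U_b$ gives that $U_b=\{a_{b1},a_{b3},a_{b5}\}$ exactly, that $a_{b2},a_{b4},a_{b6}$ are empty, and (via $a_{b6}$) that $a_{c1}$ is guarded. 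I also record the elementary bound $\eta(U_i)\ge 3$ in every configuration.

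\textbf{Step 2: a forcing attack on $U_b$.} I attack an edge of $U_b$ whose defence requires $U_b$ to hold at least four guards in the next configuration $X'$. For example, attacking $a_{b5}a_{b6}$ with the guard on $a_{b5}$ moving to $a_{b6}$ leaves $a_{b5}$ empty, so $a_{b3},a_{b4},a_{b6}$ and one of $a_{b1},a_{b2}$ must all be occupied in $X'$.

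\textbf{Step 3: tracking where the extra guard comes from.} Guards enter $U_b$ only along the four boundary edges. Using Step 1, I rule out the possible sources one by one:
\begin{itemize}
\item $a_{a6}$ and $a_{c2}$ are empty in $X$ (for $a_{c2}$ this must first be shown, or the case handled separately), so no guard can come from them;
\item moving the guard on $a_{a3}$ to $a_{b2}$ would drop $U_a$ to two guards unless $U_a$ is replenished from its other neighbour, which can only happen through its single cycle and tree edges on that side.
\end{itemize}
This should leave the move $a_{c1}\to a_{b6}$ (or a chain ending with it) as the only option. Then $U_c$ loses a guard, so $X'$ satisfies $\eta_{X'}(U_c)\ge 3$ only if $\eta_X(U_c)\ge 4$, or if $U_c$ is simultaneously refilled from $U_d$ via $a_{d1}\to a_{c6}$ or $a_{d2}\to a_{c3}$. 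Either way I get $\eta_X(U_c)\ge 4$ unless $U_d$ pays, which immediately excludes $\eta(U_c)=3$.

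\textbf{Step 4: the case $\eta(U_c)=4$.} When $\eta(U_c)=4$, I argue that $\eta(U_d)\ge 4$ by choosing a second attack (on $U_c$, or on $U_b$ as above) that forces $U_c$ to export a guard while still being covered. The goal is to show that a $4$-guard $U_c$ adjacent to the rigid $U_b$ cannot survive unless a guard crosses from $U_d$, and if $\eta(U_d)=3$, Step 1 applied to $U_d$ makes that crossing leave $U_d$ with only two guards. The hypothesis $r\ge 4$ guarantees that $U_a,U_b,U_c,U_d$ are distinct units separated from the triangle $xyz$ as needed.

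I expect the main obstacle to be this last case analysis. A $4$-guard unit has several admissible placements, and I must choose, for each placement, an attack after which the resulting configuration violates either Step 1 or the bound $\eta\ge 3$. I would first try attacks on $a_{c1}a_{c2}$ and $a_{c5}a_{c6}$, since these edges interact directly with the fixed pattern of $U_b$ and with the cycle edge into $U_d$.
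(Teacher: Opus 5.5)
There is a genuine gap in Steps 3 and 4.

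\textbf{Step 3.} You try to exclude $a_{c2}$ as a source because it is empty in $X$. Under the hypotheses, however, $a_{c2}$ is in fact always occupied, so that sub-claim is false. Moreover, with your attack on $a_{b5}a_{b6}$ the guard on $a_{b5}$ must land on $a_{b6}$. So the move $a_{c1}\to a_{b6}$ that you expect to be forced is impossible.

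A case analysis of this attack shows that the forced move is really $a_{c2}\to a_{b3}$. The split is on whether $a_{b3}$ stays, and it uses that $a_{a2},a_{a4},a_{a6}$ are empty: then $U_a$ cannot supply both $a_{b2}$ and $a_{a6}$ without uncovering the triangle $a_{a3}a_{a4}a_{a5}$.

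Your counting ``$U_c$ loses a guard'' cannot exclude $\eta(U_c)=3$, as your own ``unless $U_d$ pays'' concedes. What excludes it is rigidity: a $3$-guard unit must be exactly $\{a_{c1},a_{c3},a_{c5}\}$, so $a_{c2}$ is empty. Two smaller points. The cycle edge $a_{i1}a_{i4}$ already forces $a_{i1}$ into every $3$-guard unit. In Step 2, $a_{b5}$ need not become empty, since $a_{b3}\to a_{b5}$ is allowed; your bound $\eta_{X'}(U_b)\ge 4$ survives only because $a_{b6}$ is occupied.

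\textbf{Step 4.} This step cannot be completed with attacks on $U_b$ or $U_c$. Once $a_{c1},a_{c2}\in X$ and $\eta(U_c)=4$, the placement is forced to be $\{a_{c1},a_{c2},a_{c3},a_{c5}\}$. This unit absorbs attacks on $a_{b5}a_{b6}$, $a_{b3}a_{b4}$ or $a_{b6}a_{c1}$ by exporting a single guard: either $a_{c2}\to a_{b3}$, or $a_{c1}\to a_{b6}$ together with $a_{c2}\to a_{c1}$. It then falls back to the valid pattern $\{a_{c1},a_{c3},a_{c5}\}$, so no pressure reaches $U_d$.

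\textbf{The missing idea.} Attack inside $U_a$, on $a_{a3}a_{a4}$. This triggers a single forced cascade:
\begin{itemize}
\item $a_{a3}\to a_{a4}$ forces $a_{b1}\to a_{a6}$, because of the triangle $a_{a3}a_{a5}a_{a6}$.
\item Hence $a_{b3}\to a_{b2}$ and $a_{b5}\to a_{b4}$, because of the edges $a_{b1}a_{b2}$ and $a_{b1}a_{b4}$.
\item Now $a_{b5}$ cannot be refilled, so the triangle $a_{b3}a_{b5}a_{b6}$ forces both $a_{c2}\to a_{b3}$ and $a_{c1}\to a_{b6}$. This gives $\eta(U_c)\ge 4$.
\end{itemize}
Because $U_c$ must export two guards at once, the cascade continues when $U_c=\{a_{c1},a_{c2},a_{c3},a_{c5}\}$. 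It forces $a_{c3}\to a_{c2}$ and $a_{c5}\to a_{c4}$, and then $a_{d2}\to a_{c3}$ and $a_{d1}\to a_{c6}$. So $a_{d1},a_{d2}\in X$ and $\eta(U_d)\ge 4$. This is where $\eta(U_a)=3$ is genuinely used, and one attack yields both conclusions of the lemma.
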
 
\begin{proof}
\begin{figure}[h]
\centering
\scriptsize
\begin{tikzpicture}[scale=1.1,
    label distance= -0.1cm,
    every node/.style={circle,fill=black,inner sep=2pt},
    attacked/.style={red, very thick},
    move/.style={blue, thick, ->}]

\node[fill=black,label=above:$a_2$] (a2) at (0,1) {};
\node[fill=darkgreen,label=above:$a_3$] (a3) at (1.5,1) {};
\node[fill=black,label=above:$b_2$] (b2) at (3,1) {};
\node[fill=darkgreen,label=above:$b_3$] (b3) at (4.5,1) {};
\node[fill=darkgreen,label=above:$c_2$] (c2) at (6,1) {};
\node[fill=darkgreen,label=above:$c_3$] (c3) at (7.5,1) {};
\node[fill=darkgreen,label=above:$d_2$] (d2) at (9,1) {};
\node[fill=darkgreen,label=above:$d_3$] (d3) at (10.5,1) {};

\node[fill=darkgreen,label=below:$a_1$] (a1) at (0,0) {};
\node[fill=black,label=below:$a_4$] (a4) at (0.75,0) {};
\node[fill=darkgreen,label=below:$a_5$] (a5) at (1.5,0) {};
\node[fill=black,label=below:$a_6$] (a6) at (2.25,0) {};

\node[fill=darkgreen,label=below:$b_1$] (b1) at (3,0) {};
\node[fill=black,label=below:$b_4$] (b4) at (3.75,0) {};
\node[fill=darkgreen,label=below:$b_5$] (b5) at (4.5,0) {};
\node[fill=black,label=below:$b_6$] (b6) at (5.25,0) {};

\node[fill=darkgreen,label=below:$c_1$] (c1) at (6,0) {};
\node[fill=black,label=below:$c_4$] (c4) at (6.75,0) {};
\node[fill=darkgreen,label=below:$c_5$] (c5) at (7.5,0) {};
\node[fill=black,label=below:$c_6$] (c6) at (8.25,0) {};

\node[fill=darkgreen,label=below:$d_1$] (d1) at (9,0) {};
\node[fill=black,label=below:$d_4$] (d4) at (9.75,0) {};
\node[fill=darkgreen,label=below:$d_5$] (d5) at (10.5,0) {};
\node[fill=black,label=below:$d_6$] (d6) at (11.25,0) {};


\draw (-0.4,1)--(a2);
\draw (a2)--(a3);
\draw (a3)--(b2);
\draw (b2)--(b3);
\draw (b3)--(c2);
\draw (c2)--(c3);
\draw (c3)--(d2);
\draw (d2)--(d3);
\draw (d3)--(11.5,1);
\draw (-0.4,0)--(a1);
\draw (a1)--(a4);
\draw (a4)--(a5);
\draw (a5)--(a6);
\draw (a6)--(b1);
\draw (b1)--(b4);
\draw (b4)--(b5);
\draw (b5)--(b6);
\draw (b6)--(c1);
\draw (c1)--(c4);
\draw (c4)--(c5);
\draw (c5)--(c6);
\draw (c6)--(d1);
\draw (d1)--(d4);
\draw (d4)--(d5);
\draw (d5)--(d6);
\draw (d6)--(11.5,0);

\draw (a1)--(a2);
\draw (a3)--(a4);
\draw (a3)--(a5);
\draw (a3)--(a6);

\draw (b1)--(b2);
\draw (b3)--(b4);
\draw (b3)--(b5);
\draw (b3)--(b6);

\draw (c1)--(c2);
\draw (c3)--(c4);
\draw (c3)--(c5);
\draw (c3)--(c6);
\draw (d1)--(d2);
\draw (d3)--(d4);
\draw (d3)--(d5);
\draw (d3)--(d6);

\draw[attacked] (a3)--(a4);


\draw[move] ($(a3)+(-0.2,-0.05)$) -- ($(a4)+(0,0.25)$);
\draw[move] ($(b1)+(-0.2,-0.2)$) -- ($(a6)+(0.2,-0.2)$);
\draw[move] ($(b5)+(-0.2,-0.2)$) -- ($(b4)+(0.2,-0.2)$);
\draw[move] ($(c1)+(-0.2,-0.2)$) -- ($(b6)+(0.2,-0.2)$);
\draw[move] ($(c5)+(-0.2,-0.2)$) -- ($(c4)+(0.2,-0.2)$);
\draw[move] ($(d1)+(-0.2,-0.2)$) -- ($(c6)+(0.2,-0.2)$);
\draw[move] ($(d5)+(-0.2,-0.2)$) -- ($(d4)+(0.2,-0.2)$);


\draw[move] ($(b3)+(-0.3,0.2)$) -- ($(b2)+(0.3,0.2)$);
\draw[move] ($(c2)+(-0.3,0.2)$) -- ($(b3)+(0.3,0.2)$);
\draw[move] ($(c3)+(-0.3,0.2)$) -- ($(c2)+(0.3,0.2)$);
\draw[move] ($(d2)+(-0.3,0.2)$) -- ($(c3)+(0.3,0.2)$);
\draw[move] ($(d3)+(-0.3,0.2)$) -- ($(d2)+(0.3,0.2)$);


\end{tikzpicture}

\caption{The guard movements forced in response to an attack on the edge $a_3a_4$. }
\label{fig:four-units}

\end{figure}
Suppose $\eta(U_a)=\eta(U_{b})=3$.
Then the set of occupied vertices in these two units can only be 
$\{a_1,a_3,a_5,b_1,b_3,b_5\}$, as shown in Figure~\ref{fig:four-units}.
 Now, consider an attack on the edge $a_3a_4$. To defend the attack, the guard from $a_3$ must move to $a_4$. Since the triangle  $a_3a_5a_6$ requires at least two of its vertices to have guards, the guard on $b_1$ is forced to move to $a_6$.
 Now to defend the edge $b_1b_2$, the guard on $b_3$  must move to $b_2$.
 To defend edge $b_1b_4$, the guard from $b_5$ has to move to $b_4$.
 Since the triangle $b_3b_5b_6$ required at least two of its vertices to have guards, and both the guards which were previously on it were forced to move out, the two guards moving in must be from $c_1$ and $c_2$. This implies that $c_1,c_2 \in X$. Hence, $\eta(U_c) \ge 4$ in  configuration $X$.
 
 Now if $\eta(U_c) >4$, then we are done. So assume that $\eta(U_c) \le 4$. Then $X$ is a safe configuration only when the occupied vertices are $c_1,c_2,c_3$ and $c_5$.
 Since, the guards on $c_1$ and $c_2$ are forced to move to $b_3$ and $b_6$, respectively, to protect the edge $c_1c_2$, the guard on $c_3$ must move to $c_2$. Further, to protect the edge $c_1c_4$, the guard on $c_5$ must move to $c_4$.
 Now, to protect triangle $c_3c_5c_6$, two guards must move in from $U_d$.
  Hence, the configuration $X$ should be such that $d_1$ and $d_2$ both have guards. This implies $\eta(U_d) \ge 4$.  Hence, it is the case that $\eta(U_c) =4$ and $\eta(U_d) \ge 4$. This concludes the proof.
\end{proof}
\begin{definition}[Block]
    Let $X$ be an eternal vertex cover configuration of $G_r$. A block of $G_r$ is any maximal sequence of consecutive units $U_i,U_{i+1}, \ldots, U_j$ of $G_r$, in which there is an index $k \in [i,j]$ such that
        (i) $\forall t \in [i,k]$, $\eta(U_t) =3$    and
        (ii) $\forall  t \in [k+1,j]$, $\eta(U_t) \ge 4$. \\
        If all units $U_i$ of $G_r$ have $\eta(U_i) >3$ then $U_1, U_2, \ldots, U_r$ is the unique block of $G_r$. 
\end{definition}
For the rest of this section, we assume an arbitrary eternal vertex cover configuration of $G_r$.
Let $U_i,U_{i+1}, \ldots, U_j$ be a block  of $G_r$. From the maximality of a block, it follows that if $i>1$, then $\eta(U_{i-1}) \ge 4$ and if $j<r$, then $\eta(U_{j+1}) =3$. 
By the definition of a block and Lemma~\ref{lem:four_units}, the following observation is immediate. 
\begin{observation}\label{obs:block_prop}
 Every unit belongs to some block and a block has at most two units with exactly three guards.   
\end{observation}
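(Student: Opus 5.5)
We prove the two parts of Observation~\ref{obs:block_prop} separately: first that blocks cover all units, then the bound on units with three guards. Throughout we use the fact, noted before Claim~\ref{clm:unit-mvc}, that $\eta(U_t)\ge 3$ for every unit in any vertex cover. So each unit is either a \emph{3-unit} ($\eta=3$) or a \emph{heavy unit} ($\eta\ge 4$).

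\textbf{Every unit lies in a block.} Read the sequence $U_1,\dots,U_r$ as a word over the two symbols ``3-unit'' and ``heavy''. A block is exactly a maximal segment of the form ``a (possibly empty) run of 3-units followed by a (possibly empty) run of heavy units''. Segments in which the 3-run is empty occur only at the start, because maximality forces $\eta(U_{i-1})\ge 4$ when $i>1$. We cut the word just before every 3-unit that immediately follows a heavy unit. Each resulting piece has the form $3^{*}\,\text{heavy}^{*}$. Each piece is also maximal: extending it to the left would put a heavy unit before a 3-unit, and extending it to the right would put a 3-unit after a heavy unit, and neither is allowed. Hence the blocks partition the units. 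The case where every unit is heavy is covered by the stated convention.

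\textbf{At most two 3-units per block.} Inside a block, the 3-units form a consecutive prefix $U_i,\dots,U_k$. Suppose this prefix has at least three units. Then $U_i,U_{i+1},U_{i+2}$ are consecutive with $\eta=3$ each. Since $r\ge 4$, there is a fourth consecutive unit $U_{i+3}$ (indices are taken modulo $r$ if the units are considered around the cycle, and otherwise we use the next unit in $G_r$). Apply Lemma~\ref{lem:four_units} to the four consecutive units $U_a=U_i$, $U_b=U_{i+1}$, $U_c=U_{i+2}$ and $U_d=U_{i+3}$. The lemma gives $\eta(U_{i+2})\ge 4$, which contradicts $\eta(U_{i+2})=3$. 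So a block contains at most two 3-units.

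\textbf{Main obstacle.} The only delicate point is the boundary: we need a fourth unit after the three 3-units so that Lemma~\ref{lem:four_units} applies. Inspecting the proof of the lemma shows that the conclusion $\eta(U_c)\ge 4$ uses only $U_a$, $U_b$ and $U_c$. Hence this step goes through even when $U_c=U_r$ is the last unit. This works provided the forcing argument (guards moving from $c_1$ and $c_2$) still applies at the end of the chain, and we would check that against the structure of $G_r$ near $x$ and $y$.
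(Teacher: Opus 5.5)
Your proof is correct and takes the same approach as the paper, which calls the observation immediate from the definition of a block and Lemma~\ref{lem:four_units}; those are exactly your two steps. Your extra care is well placed. The lemma needs a fourth unit, which is missing when $U_{i+2}=U_r$. Your reading of a block as allowing an empty run of $3$-units is what puts a leading run of heavy units into a block. Your remaining hedge resolves at once: forcing guards onto $c_1,c_2$ uses only the edges $b_3c_2$ and $b_6c_1$ and never involves $x$ or $y$. Drop the ``indices modulo $r$'' remark, though, because $U_r$ and $U_1$ are separated by the triangle $xyz$ and are not consecutive in the lemma's sense.
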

From Observation~\ref{obs:block_prop}, we can show the following.
\begin{lemma}\label{lem:block_notlast}
    In each block $B = U_i,U_{i+1} \ldots, U_{j}$, except the last block, the average number of guards per unit is at least $3.5$. 
\end{lemma}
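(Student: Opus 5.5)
Fix a block $B = U_i,\ldots,U_j$ that is not the last block, and let $k$ be the index from the definition of a block. Since $B$ is not the last block, $j<r$ and, by maximality, $\eta(U_{j+1})=3$. The plan is to split on the number $m=k-i+1$ of units in $B$ with exactly three guards. By Observation~\ref{obs:block_prop}, $m\in\{0,1,2\}$; here $m=0$ can only happen for the unique all-$>3$ block, which is the last block, so we may take $m\ge 1$. Every other unit of $B$ has at least $4$ guards.

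\textbf{Case $m=1$.} I want to show $B$ has at least one unit with $\eta\ge 4$, i.e.\ $j\ge k+1$. If $j=k=i$, then $U_i$ and $U_{i+1}=U_{j+1}$ are consecutive units with $\eta=3$. Then $U_{i+1}$ would belong to the same maximal run of $3$'s, contradicting maximality of the block, since the block could be extended. So $B$ contains $3$ followed by $t\ge 1$ units each with at least $4$ guards. The average is at least $(3+4t)/(1+t)\ge 3.5$.

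\textbf{Case $m=2$.} Here $\eta(U_i)=\eta(U_{i+1})=3$, and I apply Lemma~\ref{lem:four_units} with $U_a=U_i$, $U_b=U_{i+1}$, $U_c=U_{i+2}$ and $U_d=U_{i+3}$. This needs $r\ge 4$ and wraps cyclically if necessary; since the block is not last, $U_{i+2}$ exists and is not a $3$-unit, so it lies in $B$. If $\eta(U_{i+2})\ge 5$, the units $U_i,U_{i+1},U_{i+2}$ contribute at least $11$ guards over $3$ units, a deficit of $-1/2$ relative to $3.5$ per unit. Any further units in $B$ carry at least $4$, so the deficit is at most $-1/2+\tfrac12(\text{number of further units})$. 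This is not enough on its own when there are no further units, and that is the delicate point. So I instead aim for the stronger claim that the block has at least two units with $\ge 4$ guards, or one with $\ge 5$ plus another with $\ge 4$. Lemma~\ref{lem:four_units} gives either $\eta(U_{i+2})\ge5$, or $\eta(U_{i+2})=4$ and $\eta(U_{i+3})\ge4$.

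In the second alternative, $\eta(U_{i+3})\ge 4$ means $U_{i+3}$ is not a $3$-unit, so it belongs to $B$. Then $B$ has at least $3+3+4+4=14$ guards over $4$ units, and adding more $\ge4$ units keeps the average at least $3.5$. In the first alternative, $3+3+5=11$ over $3$ units gives only $11/3<3.5$. So the main obstacle is exactly the configuration $3,3,5$ followed immediately by a $3$-unit. There I would rerun the attack argument from the proof of Lemma~\ref{lem:four_units}: the forced chain of moves pulls guards $c_1,c_2$ out of $U_{i+2}$, and with only $5$ guards in $U_{i+2}$ the cascade must continue into $U_{i+3}$. This would force $d_1,d_2\in X$, i.e.\ $\eta(U_{i+3})\ge4$, a contradiction. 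Concretely, I expect the five-guard case to either be impossible or to still require two guards entering triangle $c_3c_5c_6$ from $U_d$, so I would try to strengthen Lemma~\ref{lem:four_units} to: after $3,3$, the next two units together carry at least $8$ guards and both lie in the block.

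Given that, every non-last block has total guards at least $3.5\cdot|B|$ in all cases.
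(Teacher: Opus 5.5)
Your case split by the number $m$ of $3$-guarded units in $B$, with Lemma~\ref{lem:four_units} handling $m=2$, is the same as the paper's. Your case $m=1$ and the second alternative of $m=2$ are correct.

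The first alternative of $m=2$, however, rests on an arithmetic error. You call $3+3+5=11$ a deficit of $-\tfrac12$ and assert $11/3<3.5$. In fact $3\cdot 3.5=10.5<11$, so it is a surplus of $\tfrac12$, and $11/3\approx 3.67>3.5$. Any further units of $B$ carry at least $4>3.5$ guards each, so the average stays at least $3.5$. This is exactly how the paper closes the case where $B$ has only three units. The ``obstacle'' you identify (a $3,3,5$ pattern followed by a $3$-unit) therefore does no harm, and your entire last paragraph is unnecessary.

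As written, though, your proof is not complete, because it ends by relying on a strengthened Lemma~\ref{lem:four_units} that you do not prove. That strengthening says that after two $3$-units, the next two units both lie in $B$. The mechanism you sketch for it is also false. Take $X\cap U_c=\{c_1,c_2,c_3,c_4,c_5\}$; then $d_1\in X$ is forced, since $c_6\notin X$. After the forced moves $c_1\to b_6$ and $c_2\to b_3$, the guard on $c_4$ can move to $c_1$ while $c_3$ and $c_5$ stay. All edges of $U_c$, together with $c_1b_6$, $c_2b_3$, $c_3d_2$ and $c_6d_1$, remain covered, so the cascade need not reach $U_d$ at all.

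One smaller point: no cyclic wrap-around is needed, and none is available, since the units of $G_r$ are arranged linearly. Because $B$ is not the last block, maximality forces $B$ to contain a unit with at least $4$ guards, so $j\ge i+2$ and $j+1\le r$. Hence $U_{i+3}$ exists and $r\ge 4$, as Lemma~\ref{lem:four_units} requires.
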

\begin{proof}
    Recall that each unit requires at least $3$ guards. Since, $B$ is not the last block, $B$ has at least one unit that has more than $3$ guards.  If $B$ has no units with exactly three guards, then the lemma holds trivially. Now if $B$ has exactly one three guarded unit, then all units except the first one has at least $4$ guards. Hence, the lemma holds. 
    
    In the remaining case, $B$ has exactly two units with exactly three guards each. Here we have $\eta(U_i) =\eta(U_{i+1}) =3$. If $B$ has $4$ or more units, then the lemma holds.  If $B$ has only 3 units then $U_{i+3}$ belongs to the next block and $\eta(U_{i+3}) = 3$. Now, by Lemma~\ref{lem:four_units}, it follows that $\eta(U_{i+2}) \ge 5$ and the lemma holds for the block $B$. 
\end{proof}

By Observation~\ref{obs:block_prop}, every block $B$ has at least four guards on all except two units in it. From this the following is immediate.
\begin{lemma} \label{lem:block_last}
    If $B = U_i,U_{i+1} \ldots, U_{r}$ is the last block and it has $m$ units, then the total number of guards in $B$ is at least $3.5m - 1$.
\end{lemma}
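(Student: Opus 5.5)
By Observation~\ref{obs:block_prop}, the last block $B = U_i, U_{i+1}, \ldots, U_r$ contains at most two units with exactly three guards, and every other unit of $B$ carries at least four guards. Every unit carries at least three guards in any vertex cover, as noted before Claim~\ref{clm:unit-mvc}. I would therefore bound the total number of guards in $B$ by a direct count, splitting into cases according to $m$ and the number of three-guarded units.

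If $m \ge 2$, at most two units of $B$ have $\eta = 3$ and the remaining $m-2$ have $\eta \ge 4$. The total is then at least
\[
6 + 4(m-2) = 4m - 2 .
\]
For $m \ge 2$ this exceeds $3.5m - 1$, since $4m - 2 \ge 3.5m - 1$ holds exactly when $m \ge 2$.

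If $m = 1$, the single unit has at least $3$ guards. This is at least $3.5 - 1 = 2.5$, so the bound holds here as well. The case with fewer than two three-guarded units only increases the total, so the same count covers it.

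No step here is a genuine obstacle. The one point to check is that Observation~\ref{obs:block_prop} applies to the last block. By the definition of a block, the three-guarded units form a prefix $U_i, \ldots, U_k$ of $B$. If that prefix had three units, then Lemma~\ref{lem:four_units}, applied to the first two together with the next two units (wrapping around the cycle if needed, using $r \ge 4$), would force the third unit to have $\eta \ge 4$, a contradiction. So the prefix has at most two units, the observation holds for the last block as for any other, and the count above completes the proof.
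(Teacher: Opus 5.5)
Your argument is correct and is the paper's own: Observation~\ref{obs:block_prop} gives at most two three-guarded units with at least four guards on the rest, so the count $4m-2\ge 3.5m-1$ for $m\ge 2$, together with $3\ge 2.5$ for $m=1$, finishes it. One small fix to your side check: the units of $G_r$ form a path, with the triangle $xyz$ between $U_r$ and $U_1$ on the outer cycle, so you cannot wrap around to get a fourth unit after $U_r$. You also do not need one, because the first half of the proof of Lemma~\ref{lem:four_units} already forces $\eta(U_c)\ge 4$ using only $U_a,U_b,U_c$.
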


Now from Lemma~\ref{lem:block_notlast} and Lemma~\ref{lem:block_last}, we derive the following main theorem of this section.

\begin{theorem}
$evc(G_r)\ge 3.5r+1$. Consequently, for all $\epsilon>0$, there exist $r \in \mathbb{N}$ such that $\frac{\evc(G_r)}{\mvc(G_r)} \ge \frac{7}{6}- \epsilon$ and $\rho(\mathcal{H}) \ge \frac{7}{6}$.
\end{theorem}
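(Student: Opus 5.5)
We fix an arbitrary configuration $X$ of an eternal vertex cover class of $G_r$ (with $r \ge 4$ so that Lemma~\ref{lem:four_units} applies) and count guards unit by unit. We then add the guards forced on the triangle $xyz$.

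By Observation~\ref{obs:block_prop}, the units $U_1,\dots,U_r$ are partitioned into consecutive blocks $B_1,\dots,B_s$ with sizes $m_1,\dots,m_s$, where $\sum_t m_t = r$. Lemma~\ref{lem:block_notlast} gives at least $3.5m_t$ guards in each $B_t$ with $t<s$. Lemma~\ref{lem:block_last} gives at least $3.5m_s-1$ guards in $B_s$. Summing, the units contain at least $3.5r-1$ guards.

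The triangle $xyz$ is disjoint from all units and needs at least two guards in any vertex cover. This gives $|X| \ge 3.5r+1$, hence $\evc(G_r)\ge 3.5r+1$.

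The main point to verify is that the block decomposition is well defined and that the blocks are disjoint and cover all units. The unit $U_1$ must start a block, and each new block starts at a $3$-unit that follows a unit with $\ge 4$ guards, so this holds. We also need to check the degenerate cases. If every unit has $\ge 4$ guards, the single block already gives $4r \ge 3.5r$. If the last block consists only of $3$-units (at most two by Lemma~\ref{lem:four_units}), it has $3m \ge 3.5m-1$ guards since $m\le 2$. Both cases are consistent with the bounds used.

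For the ratio, Claim~\ref{clm:unit-mvc} gives $\mvc(G_r)=3r+2$, so
\[
\frac{\evc(G_r)}{\mvc(G_r)} \ge \frac{3.5r+1}{3r+2},
\]
which tends to $\frac{7}{6}$ as $r\to\infty$. Hence for every $\epsilon>0$ all sufficiently large $r$ satisfy the ratio bound $\frac{7}{6}-\epsilon$. Infinitely many $G_r$ achieve this for each $\epsilon$, so by the definition of $\rho$ we get $\rho(\mathcal{H})\ge \frac{7}{6}-\epsilon$ for all $\epsilon$, and therefore $\rho(\mathcal{H}) \ge \frac{7}{6}$.
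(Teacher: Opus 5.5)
Your proof is correct and follows the paper's argument: Lemma~\ref{lem:block_notlast} on the non-last blocks and Lemma~\ref{lem:block_last} on the last block give at least $3.5r-1$ guards on the units, the triangle $xyz$ adds two more, and Claim~\ref{clm:unit-mvc} gives $\mvc(G_r)=3r+2$. One small correction to your well-definedness check: under the paper's definition every block begins with a unit carrying exactly three guards (except when every unit has at least four), so if $\eta(U_1)\ge 4$ then $U_1$ does not start a block and the initial run of units with at least four guards lies in no block; each such unit still contributes at least $4>3.5$ guards, so your total bound is unaffected.
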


\begin{proof} Let $X$ be an eternal vertex cover class configuration of $G_r$.
 By Lemma~\ref{lem:block_notlast} and Lemma~\ref{lem:block_last}, the total number of guards on units $U_1$ to $U_r$ is $3.5r -1$ in $X$. The triangle $xyz$ requires at least two guards in $X$. From this, we get $\evc(G_r) \ge 3.5r +1$. By Claim~\ref{clm:unit-mvc}, $\mvc(G_r)=3r+2 $. Hence $\liminf\limits_{r\rightarrow \infty} \frac{\evc(G_r)}{\mvc(G_r)} = \frac{7}{6} $. This shows that  $\rho(\mathcal{H}) \ge \frac{7}{6}$.
\end{proof}

\section{General Bounds for Eternal Vertex Covers in Halin Graphs}\label{sec:generalbounds}
In this section, we propose a linear time algorithm that computes an eternal vertex cover class of size at most $\left\lceil \frac{3n}{4} \right\rceil$, for any Halin graph $G$ on $n$ vertices. Using the known lower bound for the vertex cover number of Halin graphs, we show that the size of the eternal vertex cover class computed is at most 1.5 times the vertex cover number of $G$.

The algorithm is stated below. 
\begin{algorithm*}
\caption{Computing an eternal vertex cover class of size $\left\lceil\frac{3n}{4}\right\rceil$}
\label{algo:evchalin11} 
\textbf{Input:} A Halin graph $G=(V,E)$.

\textbf{Output:} An eternal vertex cover class
$\mathcal{C}=\{C_1,C_2\}$ of $G$ and its size $k$, where
$k\le \left\lceil\frac{3n}{4}\right\rceil$.

\begin{algorithmic}

\State Compute the decomposition of $G$ into the underlying tree $T$ and the cycle $C$.
\State Order the vertices of $C$ as
$v=c_0,c_1,\ldots,c_{l-1},c_0$ in anticlockwise order.
\State $C_1 \gets int(T)\cup\{c_i \mid 0\le i\le l-1,\ i\text{ is even}\}$.
\State $C_2 \gets int(T)\cup\{c_{i+1\bmod l}\mid 0\le i\le l-1,\ i\text{ is even}\}$.
\State $\mathcal{C}\gets\{C_1,C_2\}$.
\State $k\gets \left\lceil\frac{l}{2}\right\rceil+|int(T)|$.
\State \Return $\mathcal{C},k$.

\end{algorithmic}
\end{algorithm*}

Since every Halin graph is Hamiltonian~\cite{Syslo1979halin}, it contains a cycle passing through all $n$ vertices. 
Therefore, any vertex cover must cover the edges of this Hamiltonian cycle, and hence we have the following lower bound for the vertex cover number of Halin graphs.
\begin{observation}\label{obs:mvchalin}
For any Halin graph $G$, $\mvc(G)\geq \left\lceil \frac{n}{2}\right\rceil$ .  
\end{observation}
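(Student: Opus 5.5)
The observation follows from one fact: a vertex cover must cover every edge of a Hamiltonian cycle, and covering the edges of a cycle on $n$ vertices needs at least $\lceil n/2\rceil$ vertices. The proof has two steps.

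\textbf{Step 1: a Hamiltonian cycle exists.} Take the cited result that Halin graphs are Hamiltonian~\cite{Syslo1979halin}. One can also build the cycle directly. The outer cycle $C$ visits all leaves of $T$. Pick a fan with center $v$ and consecutive cycle neighbours $c_j, c_{j+1}$ of $v$. Replace the edge $c_jc_{j+1}$ by a path from $c_j$ to $c_{j+1}$ that passes through all internal vertices of $T$. The standard induction contracts a fan into a single vertex, which yields a smaller Halin graph (or a wheel as the base case). Since the paper allows citing the result, I would simply invoke it. So let $H$ be a Hamiltonian cycle of $G$, of length $n \ge 4$.

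\textbf{Step 2: a cycle on $n$ vertices needs $\lceil n/2\rceil$ cover vertices.} Let $S$ be any vertex cover of $G$. Then $S$ covers every edge of $H$. Each vertex of $H$ is incident to exactly two edges of $H$, and $H$ has $n$ edges. Hence $2|S| \ge n$, which gives $|S| \ge n/2$. Since $|S|$ is an integer, $|S|\ge\lceil n/2\rceil$. Applying this to a minimum vertex cover gives $\mvc(G)\ge\lceil n/2\rceil$.

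Nothing here is a real obstacle. The only substantive ingredient is Hamiltonicity, which is available from the literature. The counting argument is the same one already used in Claim~\ref{clm:unit-mvc}.
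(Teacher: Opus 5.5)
Your proof is correct and is the same as the paper's: invoke Hamiltonicity of Halin graphs~\cite{Syslo1979halin}, then note that a vertex cover must cover all $n$ edges of the Hamiltonian cycle, and each vertex covers at most two of them. Your optional aside on building the cycle directly does not work as stated, because $c_j$ and $c_{j+1}$ both have $v$ as their only tree neighbour, so a $c_j$--$c_{j+1}$ path with interior in $int(T)$ must begin and end at $v$, which is possible only for a wheel; this does not affect the proof, since you rely on the citation (or on the fan-contraction induction) anyway.
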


To explain the correctness of Algorithm~\ref{algo:evchalin11}, we define four basic guard movements on the vertices of the cycle. This definition will also be used extensively in the rest of this paper.

\begin{definition}\label{def:cyclemovements}
     Let $D$ be a configuration of $G$ and let the cycle vertices be numbered  as $c_0,c_1, \ldots , c_{l-1}$ in the anti-clockwise direction. 
     Let $\Psi$ be a subpath of $C$.
     We define four types of guard movement on $\Psi$ as follows:
     \begin{itemize}
  \item \textbf{Anticlockwise Shift:} For every $c_i$ in $\Psi \cap D$, the guard on $c_i$ moves to $c_{i+1}$ (indices mod $l$).
  \item \textbf{Restricted Anticlockwise Shift:} For every $c_i$ in $\Psi \cap D$, the guard on $c_i$ moves to $c_{i+1}$ if and only if $c_{i+1} \notin D$.
  \item \textbf{Clockwise Shift:}  For every $c_i$ in $\Psi \cap D$, the guard on $c_i$ moves to $c_{i-1}$.
  \item \textbf{Restricted Clockwise Shift:} For every $c_i$ in $\Psi \cap D$, the guard on $c_i$ moves to $c_{i-1}$ if and only if $c_{i-1} \notin D$.
\end{itemize}
     
\end{definition}
 The following observation is simple.
\begin{observation}\label{obs:fullcyclemove}
    Let $G$ be a Halin graph. Let $C_1,C_2$ be two  vertex covers of $G$ satisfying $C_2 \cap V(C) = \{ c_{i+1} \mid c_i \in C_1 \cap V(C) \}$.  Let $\Psi$ be the subpath $c_0,c_1, \ldots,c_{l-1}$ that contains all the vertices of $C$.
    \begin{itemize}
        \item     Starting from $C_1$, on applying either an anticlockwise shift on $\Psi$ or a restricted clockwise shift on $\Psi$, the resultant guard positions on $V(C)$ will be same as they are in the configuration $C_2$.
        \item     Starting from $C_2$, on applying either a clockwise shift on $\Psi$ or a restricted anticlockwise shift on $\Psi$, the resultant guard positions on $V(C)$ will be same as they are in the configuration $C_1$.
    \end{itemize}
 
\end{observation}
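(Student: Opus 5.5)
The plan is to prove each bullet directly from Definition~\ref{def:cyclemovements}. The key fact I will use is that $C_1$ and $C_2$ are vertex covers of $G$. Since $C$ is a subgraph of $G$, no two consecutive cycle vertices $c_j,c_{j+1}$ can both be unoccupied in either configuration.

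Write $S_1=C_1\cap V(C)$ and $S_2=C_2\cap V(C)$, so that $S_2=\{c_{i+1}\mid c_i\in S_1\}$. Here $\Psi$ is the whole cycle and indices are taken mod $l$.

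\textbf{Unrestricted shifts.} An anticlockwise shift sends each guard on $c_i\in S_1$ to $c_{i+1}$. The resulting occupied set is exactly $\{c_{i+1}\mid c_i\in S_1\}=S_2$. This map is a bijection on indices, so no two guards collide. Symmetrically, $S_1=\{c_{i-1}\mid c_i\in S_2\}$, so a clockwise shift from $C_2$ yields $S_1$.

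\textbf{Restricted clockwise shift from $C_1$.} If $S_1=V(C)$, no guard moves, and $S_2=V(C)$ as well, so the claim holds. Otherwise, decompose $S_1$ into maximal runs $c_a,c_{a+1},\ldots,c_b$ of consecutive occupied vertices, with $c_{a-1},c_{b+1}\notin S_1$. In a restricted clockwise shift, only the guard on the first vertex $c_a$ of each run moves, going to $c_{a-1}$; all other guards stay. So each run $c_a,\ldots,c_b$ becomes $c_{a-1},c_{a+1},\ldots,c_b$.

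The unrestricted shift instead turns this run into $c_{a+1},\ldots,c_{b+1}$. The two results differ only in that the gap vertex before a run is filled in one case, and the gap vertex after it in the other. Because $C_1$ is a vertex cover, every gap between consecutive runs is a single unoccupied vertex. Hence the vertex $c_{b+1}$ following one run is exactly the vertex $c_{a'-1}$ preceding the next run. Consequently, the two resulting sets agree globally: both equal the original occupied set with all gap vertices filled and all run-initial vertices vacated. That set is $S_2$. Distinct runs fill distinct gap vertices, so again no collisions arise.

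\textbf{Restricted anticlockwise shift from $C_2$.} This is the mirror image of the previous case, using that $C_2$ is a vertex cover. In each maximal run of $S_2$, only the last guard moves, into the single-vertex gap after it. The result coincides with the clockwise shift of $S_2$, which is $S_1$.

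The only step needing care is the restricted-shift case. The main point there is that single-vertex gaps, guaranteed by the vertex cover property on $C$, make the “fill previous gap” and “fill next gap” descriptions coincide.
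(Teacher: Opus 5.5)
Your proof is correct and complete. The paper gives no proof of this observation; it only calls it simple. So there is no argument to diverge from.

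The nearest comparison is the style of Lemma~\ref{lem:spiltcyclemove}, which checks vertex by vertex that $c_t$ ends up occupied if and only if $c_{t-1}\in C_1$. Applied here, that local check is a three-case argument for the restricted clockwise shift:
\begin{itemize}
\item If $c_{t-1},c_t\in C_1$, the guard on $c_t$ stays and no guard enters.
\item If $c_{t-1}\in C_1$ and $c_t\notin C_1$, covering the edge $c_tc_{t+1}$ puts a guard on $c_{t+1}$, and that guard moves in.
\item If $c_{t-1}\notin C_1$, then $c_t\in C_1$; its guard leaves and no guard enters.
\end{itemize}

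Your run decomposition uses the same fact: gaps on the cycle are single vertices because $C_1$ covers the cycle edges. You turn it into a global comparison with the unrestricted shift, which makes the no-collision claim and the equality with $S_2$ very transparent. The local version is a little shorter and needs no separate treatment of the case $S_1=V(C)$.
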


\begin{claim}\label{clm:evc<k}
For a given Halin graph $G$, the output given by Algorithm~\ref{algo:evchalin11} ensures $\evc(G) \leq k$. Moreover, the class $\mathcal{C}$ returned by Algorithm~\ref{algo:evchalin11} forms an eternal vertex cover class of $G$.
\end{claim}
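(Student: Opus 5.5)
The plan is to verify two things: that $C_1$ and $C_2$ are vertex covers of size $k$, and that from either configuration every attack can be answered by a move ending in $C_1$ or $C_2$.

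\emph{Both configurations are vertex covers of size $k$.} Every edge of $T$ has an endpoint in $int(T)$. Two leaves are never adjacent in $T$, since $T$ has no degree-two vertex and is not a single edge. So $int(T)\subseteq C_1\cap C_2$ covers all tree edges. For the cycle edges $c_ic_{i+1}$, exactly one index is even, except when $l$ is odd and the edge is $c_{l-1}c_0$, where both endpoints are in $C_1$. In $C_2$ the cycle vertices are $c_1,c_3,\ldots$ together with $c_0$ when $l$ is odd, so the same parity argument applies. In both cases the set contains $\lceil l/2\rceil$ cycle vertices, which gives size $k$. By construction, $C_2\cap V(C)$ is the anticlockwise image of $C_1\cap V(C)$, so Observation~\ref{obs:fullcyclemove} applies.

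\emph{Attacks on cycle edges.} Keep all guards on $int(T)$ fixed.
\begin{itemize}
\item If the configuration is $C_1$ and the attacked edge $c_ic_{i+1}$ has its guard on $c_i$, perform the anticlockwise shift on $\Psi$. This moves that guard across the edge and, by Observation~\ref{obs:fullcyclemove}, yields $C_2$.
\item If the guard is on $c_{i+1}$, perform the restricted clockwise shift, which again yields $C_2$. The guard on $c_{i+1}$ does move, since $c_i\notin C_1$.
\item If both endpoints are guarded (only possible at the odd wrap edge), the anticlockwise shift already moves a guard along that edge.
\end{itemize}
From $C_2$ the roles are symmetric, using the clockwise shift and the restricted anticlockwise shift.

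\emph{Attacks on tree edges.} Let the attacked edge be $uc_j$ with $u$ internal.
\begin{itemize}
\item If $c_j$ is guarded, the guards on $u$ and $c_j$ exchange along the edge and the configuration is unchanged.
\item If $c_j$ is unguarded, the guard on $u$ must move to $c_j$, and $u$ has to be refilled. The idea is to rotate guards along a cycle of $G$ through the edge $uc_j$. Take a neighbour $c_{j\pm1}$ of $c_j$ that is guarded, and let $P$ be the tree path from $c_{j\pm1}$ to $u$; all internal vertices of $P$ lie in $int(T)$ and are guarded. The rotation is: $u\to c_j$, each vertex of $P$ passes its guard one step toward $u$, and $c_{j\pm1}$ sends its guard into $P$. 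Every vertex of $int(T)$ stays occupied and $c_j$ becomes occupied. The remaining cycle guards are then moved by a restricted or full shift on the complementary subpath $\Psi$ of $C$, chosen so that the final cycle occupancy is exactly $C_2$ (respectively $C_1$).
\end{itemize}

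The main obstacle is this last matching step. I would choose the direction ($c_{j+1}$ or $c_{j-1}$) according to which shift on $C\setminus\{c_j\}$ reaches the target parity class. I would then check, using Observation~\ref{obs:fullcyclemove} on the subpath, that the vacancy left at $c_{j\pm1}$ is exactly the vertex the target configuration leaves empty. The odd-$l$ wrap, where $c_{l-1}$ and $c_0$ are both guarded, needs separate care, and I would handle it by choosing the rotation neighbour away from the wrap.

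Once every attack from $C_1$ or $C_2$ is answered by a move ending in $C_1$ or $C_2$, the set $\mathcal{C}$ is an eternal vertex cover class, and $\evc(G)\le k$ follows.
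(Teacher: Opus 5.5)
Your proposal is correct and follows essentially the same route as the paper: $int(T)$ stays guarded in both configurations, cycle attacks are answered by the full or restricted shifts of Observation~\ref{obs:fullcyclemove}, and an attack on $uc_j$ with $c_j$ empty is answered by pushing guards along the tree path from a guarded cycle neighbour $y$ of $c_j$ into $c_j$ while shifting the rest of the cycle. The matching step you flag closes immediately as in the paper: from $C_1$ take $y=c_{j-1}$ with a full anticlockwise shift on $C\setminus\{c_j,c_{j-1}\}$ (symmetrically $y=c_{j+1}$ with a clockwise shift from $C_2$), which is just the full shift with the guard entering $c_j$ rerouted through the tree, so the odd wrap needs no separate care; the edges between two internal vertices, which you do not list, are handled by a swap.
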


\begin{proof}
 Consider the class $\mathcal{C}=\{C_1,C_2\}$ returned by the algorithm. Clearly, $|C_1|=|C_2|=\left\lceil\frac{l}{2}\right\rceil+|int(T)|=k$. We now show that $\mathcal{C}$ forms an eternal vertex cover class of $G$.
If an edge with both endpoints occupied in a configuration is attacked, then the attack can be defended easily by swapping  the guards on the endpoints of that edge and remain in the same configuration.
Hence, we need to only consider edges with at least one unoccupied endpoint. By our construction, the only unoccupied vertices are in the cycle. 
It is easy to see that the configurations $C_1,C_2$ given by Algorithm~\ref{algo:evchalin11} satisfy the condition in Observation~\ref{obs:fullcyclemove}.

Let $C_1$ \both{$C_2$} be the current configuration. Let $uv$ be the edge attacked where $v$ is the unoccupied vertex. By construction, $v=c_i \in V(C)$. 
Let $\Psi$ be a subpath of the cycle that contains all the vertices of $C$. If $u=c_{i-1}$ \both{$u=c_{i+1}$}, then an anticlockwise shift \both{clockwise shift} in $\Psi$ will defend the attack. 
Further, the resultant configuration will be $C_2$ \both{$C_1$}, by Observation~\ref{obs:fullcyclemove}. 
If $u=c_{i+1}$ \both{$u=c_{i-1}$}, then a restricted clockwise shift \both{restricted anticlockwise shift} in $\Psi$ will defend the attack and the resultant configuration will be $C_2$ \both{$C_1$}.
In the remaining case, $u \in int(T)$. Let $y=c_{i-1}$ \both{$y=c_{i+1}$} be the clockwise neighbour \both{anticlockwise neighbour} of $v$ on $C$. Let $x \in int(T)$ be a vertex such that $xy$ is an edge in $G$. Observe that since $v$ has no guard, $y$ will have a guard in the current configuration. Note that it is possible that $u=x$. Let $\Psi'$ be the subpath of $C$, that contains all the vertices of $C$ except $v$ and $y$. To defend the attack on edge $uv$, reconfigure in the following way. Along the unique tree path connecting $y$ and $v$, move all guards in the direction from $y$ to $v$.  On the cycle, do an anticlockwise shift \both{clockwise shift} on $\Psi'$. Now, we will show that the resultant configuration is $C_2$ \both{$C_1$}. First, observe that by the movement on the tree path, all the vertices in $int(T)$ remain occupied in the resultant configuration, as required. 
Further, observe that by the tree path movement, the guard from $y=c_{i-1}$ \both{$y=c_{i+1}$} is moved  and $v=c_{i}$ gets a guard. 
These movements, together with the anticlockwise shift \both{clockwise shift} performed on $\Psi'$, result in the same guard configuration on $C$ as that obtained by performing an anticlockwise shift \both{clockwise shift} on all vertices of the cycle.
Hence by Observation~\ref{obs:fullcyclemove}, the resultant configuration on the cycle is same as $C_2$ \both{$C_1$}.

Hence, every attack can be defended while shifting between the configurations of $\mathcal{C}$. Therefore, $\mathcal{C}$ is an eternal vertex cover class of $G$ and so  $\evc(G) \leq k$.
\end{proof} 
The claim below bounds the size of the eternal vertex cover class $\mathcal{C}$.

\begin{claim}\label{clm:nbound}
    For any given Halin graph $G$, the eternal vertex cover class given by Algorithm~\ref{algo:evchalin11} is of size at most $  \left\lceil \frac{3n}{4} \right\rceil$.
\end{claim}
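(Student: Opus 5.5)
We need $k = \lceil l/2\rceil + |int(T)| \le \lceil 3n/4\rceil$, where $n = l + |int(T)|$. The plan is to bound $|int(T)|$ in terms of $l$ using the fact that $T$ has no vertex of degree two, and then do the arithmetic.

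\textbf{Step 1: a leaf-count bound.} Let $m = |int(T)|$. Every internal vertex of $T$ has degree at least $3$; here we assume $T$ has at least one internal vertex, as it does for a Halin graph. The handshake lemma gives
\[
2(n-1) = \sum_v \deg_T(v) \ge l + 3m .
\]
Substituting $n = l+m$ yields $2l + 2m - 2 \ge l + 3m$, that is,
\[
m \le l - 2 .
\]
This is the only structural input needed.

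\textbf{Step 2: the arithmetic.} Write $k = \lceil l/2\rceil + m$, and compare $4k$ with $3n = 3l + 3m$.
\begin{itemize}
\item If $l$ is even, then $4k = 2l + 4m$. Hence $4k \le 3n$ is equivalent to $m \le l$, which follows from Step 1. So $k \le 3n/4 \le \lceil 3n/4\rceil$.
\item If $l$ is odd, then $4k = 2l + 2 + 4m$. Hence $4k \le 3n$ is equivalent to $m \le l - 2$, which is exactly Step 1. Again $k \le 3n/4 \le \lceil 3n/4\rceil$.
\end{itemize}
In both cases $k \le \lceil 3n/4\rceil$.

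\textbf{The delicate step.} The only place needing care is the odd-$l$ case, where the bound $m \le l-2$ is used tightly. The slack of $2$ in Step 1 must be carried through rather than weakened to $m < l$, since $m \le l-1$ would not suffice there.

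Finally, Claim~\ref{clm:evc<k} already gives $|C_1| = |C_2| = k$, so the bound on $k$ is exactly the size bound on the class $\mathcal{C}$ returned by Algorithm~\ref{algo:evchalin11}.
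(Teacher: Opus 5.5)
Your proof is correct and follows the paper's route: both get $|int(T)|\le l-2$ from the degree-$\ge 3$ condition and then finish with arithmetic. The paper uses only the weaker consequence $l\ge n/2$ and writes $k=n-\lfloor l/2\rfloor\le n-\lfloor n/4\rfloor=\lceil 3n/4\rceil$, whereas your parity split proves the slightly stronger $k\le 3n/4$.

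Your ``delicate step'' remark is accurate only for that ceiling-free strengthening. For the claim as stated, $m\le l$ already suffices: with $l$ odd and $m=l-1$, one gets $k=\lceil 3n/4\rceil$ exactly.
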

\begin{proof}
    Let $G$ be a Halin graph with $T$ and $C$ being its decomposed tree and cycle graph respectively. Let $k$ be the size of the eternal vertex cover class returned by Algorithm~\ref{algo:evchalin11}. Let $l$ be the number of vertices in $C$, which is same as the number of leaves of $T$.
    Since the degree of every vertex in $int(T)$ is at least $3$, a simple combinatorial argument gives
       $ 1 \leq |int(T)| \leq l-2$
    and hence $ l \ge \frac{n}{2} $.
    By Algorithm~\ref{algo:evchalin11} and Claim~\ref{clm:evc<k}, we get
    $k \leq \left\lceil \frac{l}{2} \right\rceil + int(T)$.
Since $n = l+ int(T)$, this implies
 $  k \le n - \left\lfloor \frac{l}{2} \right\rfloor $.
Using the fact that $l\ge \frac{n}{2}$, we get 
   $ k \le n - \left\lfloor \frac{n}{4} \right\rfloor \le \left\lceil \frac{3n}{4} \right\rceil$.
\end{proof}

Since a decomposition of a Halin graph $G$ into its underlying tree and the cycle can be computed in linear time~\cite{Eppstein2016}, the running time of Algorithm~\ref{algo:evchalin11} is linear. Now, combining Observation~\ref{obs:mvchalin} and Claim~\ref{clm:nbound}, we have the following theorem.
\begin{theorem}\label{thm:bound1}
For every Halin graph $G$, 
$\left\lceil\frac{n}{2}\right\rceil  \le \mvc(G) \le \evc(G)\le \left\lceil \frac{3n}{4} \right\rceil$.
Further, an eternal vertex cover class of $G$ of size at most
$\frac{3}{2}\,\mvc(G)$ can be computed in linear time.
\end{theorem}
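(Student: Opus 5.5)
The plan is to chain the results already established in this section. The two inequalities at the ends, $\left\lceil\frac{n}{2}\right\rceil \le \mvc(G)$ and $\evc(G)\le \left\lceil \frac{3n}{4} \right\rceil$, come from Observation~\ref{obs:mvchalin} and from Claims~\ref{clm:evc<k} and~\ref{clm:nbound}. Claim~\ref{clm:evc<k} gives $\evc(G)\le k$ for the value $k$ returned by Algorithm~\ref{algo:evchalin11}, and Claim~\ref{clm:nbound} gives $k\le \left\lceil \frac{3n}{4} \right\rceil$. The middle inequality $\mvc(G)\le\evc(G)$ holds trivially, since every configuration of guards must be a vertex cover.

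For the approximation guarantee I will compare $k$ directly with $\frac{3}{2}\mvc(G)$, rather than passing through the rounded bound $\left\lceil \frac{3n}{4}\right\rceil$.

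\textbf{Case $n$ even.} Here $\left\lceil \frac{3n}{4}\right\rceil \le \frac{3n}{4}+\frac12$, so there is a possible additive slack of $\frac12$. To avoid it, I return to the computation in Claim~\ref{clm:nbound}: $k = n-\lfloor l/2\rfloor$ with $l\ge n/2$. Since $k$ is an integer, it suffices to show $k\le \frac{3}{2}\cdot\frac{n}{2}$ up to integrality. The concern is the case $n\equiv 2 \pmod 4$. There $l\ge n/2$ is odd-or-even dependent, and if $l=n/2$ then $k$ could equal $\frac{3n}{4}+\frac12 > \frac{3}{2}\cdot\frac n2$.

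I would handle this case by sharpening the counting. Write $i=|int(T)|$. Every internal vertex has degree at least $3$, so $l\ge i+2$, which gives $l\ge \frac{n+2}{2}$. Then
\[ k=n-\lfloor l/2\rfloor \le n-\left\lfloor \tfrac{n+2}{4}\right\rfloor. \]
For $n\equiv 2\pmod 4$ this equals $\frac{3n-2}{4}\le \frac{3n}{4}=\frac32\cdot\frac n2$. For $n\equiv 0\pmod 4$ it is at most $\frac{3n}{4}$. In both subcases $k\le \frac32\lceil n/2\rceil\le\frac32\mvc(G)$.

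\textbf{Case $n$ odd.} Here $\mvc(G)\ge \frac{n+1}{2}$, so $\frac32\mvc(G)\ge \frac{3n+3}{4}\ge \left\lceil\frac{3n}{4}\right\rceil$, and the bound follows immediately.

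For the running time, I will note that the decomposition into $T$ and $C$ takes linear time~\cite{Eppstein2016}. Writing down $C_1$ and $C_2$ is a single linear pass over the cycle vertices and the internal vertices.

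The main obstacle is the integrality bookkeeping when $n\equiv 2 \pmod 4$. It is resolved by using the stronger inequality $l\ge |int(T)|+2$ from the degree count, which Claim~\ref{clm:nbound} already derives, instead of the weaker $l\ge n/2$.
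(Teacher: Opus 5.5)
Your proposal is correct and follows the paper's route: the paper also proves the theorem by chaining Observation~\ref{obs:mvchalin}, Claim~\ref{clm:evc<k} and Claim~\ref{clm:nbound}, together with the linear-time decomposition. Your extra integrality step is in fact needed, because the paper's one-line combination skips it. For $n=4m+2$ one has $\lceil 3n/4\rceil = 3m+2 > \frac{3}{2}\lceil n/2\rceil$, so $\mvc(G)\ge\lceil n/2\rceil$ and $k\le\lceil 3n/4\rceil$ alone do not give $k\le\frac{3}{2}\mvc(G)$. Your use of $l\ge |int(T)|+2$, already present in the proof of Claim~\ref{clm:nbound}, gives $k\le n-\lfloor (n+2)/4\rfloor = 3m+1$ and closes this case.
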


\begin{remark*}
    When $G = K_4$, $\evc(G) = 3=\frac{3n}{4}$. Since $K_4$ is a Halin graph, the upper bound given in Theorem~\ref{thm:bound1} in terms of $n$ is tight. The upper bound in terms of $\mvc(G)$ for Halin graphs has scope to be improved. However, it may be noted that for general graphs $2\mvc(G)$ is the best upper bound possible for $\evc(G)$ in terms of $\mvc(G)$~\cite{Klostermeyer2009}.
\end{remark*}

\subsection{Improved Bounds using Algorithm~\ref{algo:evchalin11}}

 We believe that the bounds given by Theorem~\ref{thm:bound1} for Halin graphs might not be tight. In the rest of this section,  we show that the upper bound obtained from Algorithm~\ref{algo:evchalin11}, given in Theorem~\ref{thm:bound1}, can be improved  to $\frac{4}{3}\mvc(G)$ for many subclasses of Halin graphs.
\\ \\
\emph{Halin Graphs with High Vertex Cover Number: }
Consider any Halin graph $G$ for which $\mvc(G) \geq \frac{9n}{16}$. By Theorem~\ref{thm:bound1},
$\evc(G) \leq \left\lceil\frac{3n}{4} \right\rceil \leq \frac{4}{3}\mvc(G) $. \\ \\
\emph{Halin Graphs with Large Number of Leaves in its Tree: }
Let $G$ be any Halin graph with $T$ being its underlying tree. Let $l$ be the number of leaves in $T$.
Suppose $l \geq \frac{2n}{3}$. By Claim~\ref{clm:evc<k}, we have $ \evc(G) \le \left\lceil \frac{l}{2} \right\rceil + |int(T)|$. Since, $|int(T)|= n-l$, this gives $\evc(G) \leq n - \lfloor \frac{l}{2} \rfloor $. Substituting $l\geq \left\lceil \frac{2n}{3} \right\rceil$, we get $\evc(G) \le  \frac{2n}{3}$. By Theorem~\ref{thm:bound1}, we have $\mvc(G) \ge  \left\lceil\frac{n}{2} \right\rceil$. This implies $\evc(G) \le \frac{4}{3} \mvc(G)$.

From these facts, we have the following theorem.
\begin{theorem}\label{thm:improvedboundalgo1}
    Let $G$ be a Halin graph on $n$ vertices that satisfies any of the following conditions:
        (i) $\mvc(G) \ge \frac{9n}{16}$ or
        (ii) the number of leaves $l$ in the underlying tree of $G$ is at least $\frac{2n}{3}$.
    Then $\evc(G) \leq \frac{4}{3}\mvc(G)$. Further, Algorithm~\ref{algo:evchalin11} outputs an eternal vertex cover class of $G$ of size at most
$\frac{4}{3}\mvc(G)$ in linear time.
\end{theorem}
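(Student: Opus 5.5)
The plan is to split on the two hypotheses and, in each case, compare the size $k=\lceil l/2\rceil+|int(T)|$ of the class returned by Algorithm~\ref{algo:evchalin11} with the lower bound $\mvc(G)\ge\lceil n/2\rceil$ from Observation~\ref{obs:mvchalin}. By Claim~\ref{clm:evc<k}, the output is an eternal vertex cover class of size $k$, so $\evc(G)\le k$. The linear running time is the same as in Theorem~\ref{thm:bound1}, since the algorithm is unchanged. The whole proof therefore reduces to showing $k\le\frac{4}{3}\mvc(G)$ in each case.

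\textbf{Case (i).} Assume $\mvc(G)\ge\frac{9n}{16}$. By Claim~\ref{clm:nbound}, $k\le\lceil 3n/4\rceil$. If $4\mid n$, then $k\le\frac{3n}{4}=\frac{4}{3}\cdot\frac{9n}{16}\le\frac{4}{3}\mvc(G)$. In general, write $k\le n-\lfloor l/2\rfloor$ with $l\ge n/2$, which gives $k\le n-\lfloor n/4\rfloor$; this exceeds $\frac{3n}{4}$ by at most $\frac34$. To absorb that slack I will use integrality: $\mvc(G)\ge\lceil 9n/16\rceil$, so $\frac43\mvc(G)\ge\frac43\lceil 9n/16\rceil$. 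I then check that this is at least $\lceil 3n/4\rceil$ by going through the residues of $n$ modulo $16$. For the few small values of $n$ where this might fail, I will check directly: the smallest Halin graph $K_4$ has $\evc=3=\mvc$, and small cases can be handled using $k\le n-\lfloor l/2\rfloor$ with the exact value of $l$.

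\textbf{Case (ii).} Assume $l\ge\frac{2n}{3}$. Then $k=n-\lfloor l/2\rfloor\le n-\frac{l-1}{2}\le\frac{2n}{3}+\frac12$. Since $\mvc(G)\ge\lceil n/2\rceil$, the bound $\frac43\lceil n/2\rceil\ge\frac{2n}{3}$ already holds. The extra $\frac12$ only matters when $l$ is odd or $n$ is odd.
\begin{itemize}
\item If $n$ is odd, $\frac43\cdot\frac{n+1}{2}=\frac{2n}{3}+\frac23$, which covers the slack.
\item If $n$ is even and $l$ is odd, then $k$ is an integer at most $\frac{2n}{3}+\frac12$, hence at most $\lfloor 2n/3+1/2\rfloor$. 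I will compare this against $\frac{2n}{3}$ according to $n\bmod 3$, possibly using $l\ge\lceil 2n/3\rceil$ to gain the missing fraction.
\end{itemize}

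The main obstacle is precisely this rounding: the statement claims the clean bound $\frac43\mvc(G)$, while the computations naturally give $\frac43\mvc(G)+O(1)$. The plan is to settle it by the residue-class checks above. For any residual small cases I will fall back on the exact relation $n=l+|int(T)|$ with $|int(T)|\le l-2$, and treat those tiny Halin graphs explicitly.
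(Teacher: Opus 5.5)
Your route is the paper's: compare $k=n-\lfloor l/2\rfloor$ with $\lceil n/2\rceil$ and with $\frac{9n}{16}$. You are right that rounding is the real issue, and the paper's own proof does not address it either; its step $n-\lfloor l/2\rfloor\le\frac{2n}{3}$ already fails for $n=10$, $l=7$. But your way of absorbing the rounding cannot work in case (ii).

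Take $n\equiv 4\pmod 6$, $l=\frac{2n+1}{3}$ (which is odd) and $|int(T)|=\frac{n-2}{3}$. Such Halin graphs exist for every such $n\ge 10$, for example caterpillars with $\frac{n-2}{3}$ spine vertices. Then $k=n-\frac{l-1}{2}=\frac{2n+1}{3}$, while $\frac43\lceil\frac n2\rceil=\frac{2n}{3}$. Every constraint you plan to use is satisfied: $l\ge\lceil\frac{2n}{3}\rceil$, $|int(T)|\le l-2$, $\mvc(G)\ge\lceil\frac n2\rceil$, and integrality. This is an infinite family, not a handful of small cases, so no residue analysis can finish the proof.

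The missing idea is that every Halin graph with $n$ even has $\mvc(G)\ge\frac n2+1$. One proof runs as follows.
\begin{itemize}
\item For wheels it is immediate, so assume $m=|int(T)|\ge 2$.
\item Let $I$ be any independent set, let $s=|int(T)\setminus I|$, and let $a$ be the number of vertices of $I\cap int(T)$ with exactly one internal neighbour.
\item The set $I\cap int(T)$ is independent in the tree $T[int(T)]$, which has $m-1$ edges and no isolated vertex. Hence $a+2(m-s-a)\le m-1$, so $a\ge m-2s+1$.
\item Each such vertex is a fan centre. Its (at least two) leaves are consecutive on $C$ and lie outside $I$, so it contributes its own cycle edge with no endpoint in $I$.
\item The number of cycle edges with no endpoint in $I$ is exactly $l-2|I\cap V(C)|=l-2(|I|-m+s)$.
\item Therefore $l-2|I|+2m-2s\ge m-2s+1$, that is, $2|I|\le n-1$. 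So $\mvc(G)\ge\frac{n+1}{2}$, hence $\mvc(G)\ge\frac n2+1$ when $n$ is even.
\end{itemize}
With this, for even $n$ you get $3k\le 2n+\frac32<2n+4\le 4\mvc(G)$. Your odd-$n$ computation then completes case (ii).

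Case (i) has a similar flaw, but your own ingredients repair it. The check $\frac43\lceil\frac{9n}{16}\rceil\ge\lceil\frac{3n}{4}\rceil$ fails for every $n\equiv 3,7,14\pmod{16}$; for example, $n=14$ gives $\frac{32}{3}<11$. So the ``few small values of $n$'' are again an infinite set.

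The fix is to use $l\ge\frac{n+2}{2}$ from the start, which follows from $|int(T)|\le l-2$; you mention this bound only as a fallback. Going through $n\bmod 4$ then gives $k\le n-\lfloor l/2\rfloor\le\lfloor\frac{3n}{4}\rfloor$. Hence $3k\le\frac{9n}{4}\le 4\mvc(G)$ directly, without any appeal to the integrality of $\mvc(G)$.
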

 Halin graphs in which all internal vertices of its underlying tree have degree more than $3$ satisfy condition (ii) of Theorem~\ref{thm:improvedboundalgo1}. Hence, we have the following corollary.
\begin{corollary}
Let $G$ be a Halin graph such that every internal vertex of its underlying tree has degree at least $4$. Then $
\evc(G)\leq \frac{4}{3}\,\mvc(G)
$ and an eternal vertex cover class of $G$ of size $\frac{4}{3} \mvc(G)$ can be computed in linear time. 
\end{corollary}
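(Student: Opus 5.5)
The plan is to deduce the corollary directly from condition (ii) of Theorem~\ref{thm:improvedboundalgo1}. It therefore suffices to show that when every internal vertex of $T$ has degree at least $4$, the number of leaves satisfies $l \ge \frac{2n}{3}$.

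The first step is a degree count on $T$. Write $i = |int(T)|$, so that $n = l + i$, and $T$ has $n-1$ edges. Summing degrees gives
\[
2(n-1) = \sum_{v \in V(T)} \deg_T(v) \ge l + 4i .
\]
Substituting $n = l+i$ yields $2l + 2i - 2 \ge l + 4i$, that is, $l \ge 2i + 2$. Hence $i \le \frac{l-2}{2}$ and
\[
n = l + i \le \frac{3l}{2} - 1 < \frac{3l}{2},
\]
so $l > \frac{2n}{3}$.

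With this bound in hand, condition (ii) of Theorem~\ref{thm:improvedboundalgo1} holds. That theorem then gives $\evc(G) \le \frac{4}{3}\mvc(G)$. It also shows that Algorithm~\ref{algo:evchalin11} outputs, in linear time, an eternal vertex cover class of size at most $\frac{4}{3}\mvc(G)$.

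The only step that needs care is the degree-sum inequality, and even that is routine. One edge case is worth checking: $T$ could be a star, with a single internal vertex of degree at least $4$. Here $l \ge 4$, $i = 1$, and $n = l + 1 \le \frac{3l}{2}$, consistent with the bound. There is no real obstacle.
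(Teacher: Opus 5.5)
Your proof is correct and follows the paper's route: the paper also deduces the corollary from condition (ii) of Theorem~\ref{thm:improvedboundalgo1}, simply asserting the leaf bound that you verify via the degree sum $2(n-1)\ge l+4\,|int(T)|$. Your sharper consequence $l\ge 2\,|int(T)|+2$ is worth keeping, because it gives $\lceil l/2\rceil+|int(T)|\le \frac{4}{3}\lceil n/2\rceil$ directly, so you never rely on the floor/ceiling step in the proof of Theorem~\ref{thm:improvedboundalgo1}(ii).
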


\section{Improved Bounds for Caterpillar Halin Graphs}
Recall that a caterpillar Halin graph is a Halin graph whose underlying tree is a caterpillar.
There exist many caterpillar Halin graphs for which Algorithm~\ref{algo:evchalin11} can only give an eternal vertex cover class whose size is close to $\frac{3}{2}$ times its vertex cover number. (An example is given in Figure~\ref{fig:bad_example_algo1}).
In this section, we propose another linear time algorithm that computes an eternal vertex cover class of size at most $\frac{4}{3}\mvc(G)$, for any caterpillar Halin graph $G$.

\begin{figure*}[!ht]
\centering
\scriptsize
\begin{circuitikz}[scale=0.8,
    label distance= -0.1cm,
    every node/.style={inner sep=2pt}]
\draw [short] (2.5,10.75) -- (2.5,8.375);
\draw [short] (2.5,10.75) -- (3.75,9.5);
\draw [short] (2.5,8.25) -- (3.75,9.5);
\draw [short] (6.25,9.5) -- (5,9.5);
\draw [short] (6.25,8.25) -- (5,8.25);
\draw [short] (6.25,9.625) -- (6.25,8.375);
\draw [short] (5,9.5) -- (3.75,9.5);
\draw [short] (5,8.25) -- (2.5,8.25);
\draw [short] (5,9.625) -- (5,8.375);
\draw [short] (7.5,9.5) -- (6.25,9.5);
\draw [short] (7.5,8.25) -- (6.25,8.25);
\draw [short] (7.5,9.625) -- (7.5,8.375);
\draw [short] (10,9.625) -- (10,8.375);
\draw [short] (11.25,9.5) -- (10,9.5);
\draw [short] (11.25,8.25) -- (10,8.25);
\draw [short] (11.25,9.625) -- (11.25,8.375);
\draw [short] (13.75,10.75) -- (13.75,8.375);
\draw [short] (12.5,9.5) -- (13.75,8.25);
\draw [short] (13.75,10.75) -- (12.5,9.5);
\draw [short] (12.5,9.5) -- (11.25,9.5);
\draw [short] (13.75,8.25) -- (11.25,8.25);
\draw [dashed] (7.5,9.5) -- (10,9.5);
\draw [dashed] (7.5,8.25) -- (10,8.25);
\draw [short] (2.5,10.75) -- (13.625,10.75);
\node [ inner xsep=0.080cm, inner ysep=0.085cm, rounded corners=0.000cm] at (5,7.875) {$b_1$};
\node [ inner xsep=0.080cm, inner ysep=0.085cm, rounded corners=0.000cm] at (6.25,7.875) {$b_2$};
\node [ inner xsep=0.080cm, inner ysep=0.085cm, rounded corners=0.000cm] at (7.5,7.875) {$b_3$};
\node [ inner xsep=0.080cm, inner ysep=0.085cm, rounded corners=0.019cm] at (10,7.875) {$b_{2t-1}$};
\node [ inner xsep=0.080cm, inner ysep=0.085cm, rounded corners=0.020cm] at (11.25,7.875) {$b_{2t}$};
\node [ inner xsep=0.080cm, inner ysep=0.085cm, rounded corners=0.000cm] at (5,9.875) {$a_1$};
\node [ inner xsep=0.080cm, inner ysep=0.085cm, rounded corners=0.000cm] at (6.25,9.875) {$a_2$};
\node [ inner xsep=0.080cm, inner ysep=0.085cm, rounded corners=0.000cm] at (7.5,9.875) {$a_3$};
\node [ inner xsep=0.080cm, inner ysep=0.085cm, rounded corners=0.012cm] at (10,9.875) {$a_{2t-1}$};
\node [ inner xsep=0.080cm, inner ysep=0.085cm, rounded corners=0.020cm] at (11.25,9.875) {$a_{2t}$};
\draw [ color={rgb,255:red,9; green,160; blue,17}, draw opacity=1 , fill={rgb,255:red,9; green,160; blue,17}, fill opacity=1]  (3.75,9.5) circle (0.125cm);
\draw [ color={rgb,255:red,9; green,160; blue,17}, draw opacity=1 , fill={rgb,255:red,9; green,160; blue,17}, fill opacity=1] (2.5,8.25) circle (0.125cm);
\draw [ color={rgb,255:red,0; green,0; blue,0}, draw opacity=1 , fill={rgb,255:red,0; green,0; blue,0}, fill opacity=1] (2.5,10.75) circle (0.125cm); 
\draw [ fill={rgb,255:red,31; green,41; blue,55}, fill opacity=1] (6.25,9.5) circle (0.125cm);
\draw [ color={rgb,255:red,9; green,160; blue,17}, draw opacity=1 , fill={rgb,255:red,9; green,160; blue,17}, fill opacity=1] (6.25,8.25) circle (0.125cm);
\draw [ color={rgb,255:red,9; green,160; blue,17}, draw opacity=1 , fill={rgb,255:red,9; green,160; blue,17}, fill opacity=1] (5,9.5) circle (0.125cm);
\draw [ fill={rgb,255:red,31; green,41; blue,55}, fill opacity=1] (5,8.25) circle (0.125cm);
\draw [ color={rgb,255:red,9; green,160; blue,17}, draw opacity=1 , fill={rgb,255:red,9; green,160; blue,17}, fill opacity=1] (7.5,9.5) circle (0.125cm);
\draw [ fill={rgb,255:red,31; green,41; blue,55}, fill opacity=1] (7.5,8.25) circle (0.125cm);
\draw [ color={rgb,255:red,9; green,160; blue,17}, draw opacity=1 , fill={rgb,255:red,9; green,160; blue,17}, fill opacity=1] (10,9.5) circle (0.125cm);
\draw [ fill={rgb,255:red,31; green,41; blue,55}, fill opacity=1] (10,8.25) circle (0.125cm);
\draw [ fill={rgb,255:red,31; green,41; blue,55}, fill opacity=1] (11.25,9.5) circle (0.125cm);
\draw [ color={rgb,255:red,9; green,160; blue,17}, draw opacity=1 , fill={rgb,255:red,9; green,160; blue,17}, fill opacity=1] (11.25,8.25) circle (0.125cm);
\draw [ color={rgb,255:red,9; green,160; blue,17}, draw opacity=1 , fill={rgb,255:red,9; green,160; blue,17}, fill opacity=1] (13.75,10.75) circle (0.125cm);
\draw [ fill={rgb,255:red,31; green,41; blue,55}, fill opacity=1] (13.75,8.25) circle (0.125cm);
\draw [ color={rgb,255:red,9; green,160; blue,17}, draw opacity=1 , fill={rgb,255:red,9; green,160; blue,17}, fill opacity=1] (12.5,9.5) circle (0.125cm);
\end{circuitikz}

\end{figure*}
\vspace{-0.5cm}
\begin{figure}[!ht]
\centering
\scriptsize
\begin{circuitikz}[scale=0.8,
    label distance= -0.1cm,
    every node/.style={inner sep=2pt}]
\draw [short] (2.5,10.75) -- (2.5,8.375);
\draw [short] (2.5,10.75) -- (3.75,9.5);
\draw [short] (2.5,8.25) -- (3.75,9.5);
\draw [short] (6.25,9.5) -- (5,9.5);
\draw [short] (6.25,8.25) -- (5,8.25);
\draw [short] (6.25,9.625) -- (6.25,8.375);
\draw [short] (5,9.5) -- (3.75,9.5);
\draw [short] (5,8.25) -- (2.5,8.25);
\draw [short] (5,9.625) -- (5,8.375);
\draw [short] (7.5,9.5) -- (6.25,9.5);
\draw [short] (7.5,8.25) -- (6.25,8.25);
\draw [short] (7.5,9.625) -- (7.5,8.375);
\draw [short] (10,9.625) -- (10,8.375);
\draw [short] (11.25,9.5) -- (10,9.5);
\draw [short] (11.25,8.25) -- (10,8.25);
\draw [short] (11.25,9.625) -- (11.25,8.375);
\draw [short] (13.75,10.75) -- (13.75,8.375);
\draw [short] (12.5,9.5) -- (13.75,8.25);
\draw [short] (13.75,10.75) -- (12.5,9.5);
\draw [short] (12.5,9.5) -- (11.25,9.5);
\draw [short] (13.75,8.25) -- (11.25,8.25);
\draw [dashed] (7.5,9.5) -- (10,9.5);
\draw [dashed] (7.5,8.25) -- (10,8.25);
\draw [short] (2.5,10.75) -- (13.625,10.75);
\node [ inner xsep=0.080cm, inner ysep=0.085cm, rounded corners=0.000cm] at (5,7.875) {$b_1$};
\node [ inner xsep=0.080cm, inner ysep=0.085cm, rounded corners=0.000cm] at (6.25,7.875) {$b_2$};
\node [ inner xsep=0.080cm, inner ysep=0.085cm, rounded corners=0.000cm] at (7.5,7.875) {$b_3$};
\node [ inner xsep=0.080cm, inner ysep=0.085cm, rounded corners=0.019cm] at (10,7.875) {$b_{2t-1}$};
\node [ inner xsep=0.080cm, inner ysep=0.085cm, rounded corners=0.020cm] at (11.25,7.875) {$b_{2t}$};
\node [ inner xsep=0.080cm, inner ysep=0.085cm, rounded corners=0.000cm] at (5,9.875) {$a_1$};
\node [ inner xsep=0.080cm, inner ysep=0.085cm, rounded corners=0.000cm] at (6.25,9.875) {$a_2$};
\node [ inner xsep=0.080cm, inner ysep=0.085cm, rounded corners=0.000cm] at (7.5,9.875) {$a_3$};
\node [ inner xsep=0.080cm, inner ysep=0.085cm, rounded corners=0.012cm] at (10,9.875) {$a_{2t-1}$};
\node [ inner xsep=0.080cm, inner ysep=0.085cm, rounded corners=0.020cm] at (11.25,9.875) {$a_{2t}$};
\draw [ color={rgb,255:red,9; green,160; blue,17}, draw opacity=1 , fill={rgb,255:red,9; green,160; blue,17}, fill opacity=1]  (3.75,9.5) circle (0.125cm);
\draw [ color=black, fill=black] (2.5,8.25) circle (0.125cm);
\draw [ color={rgb,255:red,9; green,160; blue,17}, draw opacity=1 , fill={rgb,255:red,9; green,160; blue,17}, fill opacity=1] (2.5,10.75) circle (0.125cm); 
\draw [ color={rgb,255:red,9; green,160; blue,17}, draw opacity=1 , fill={rgb,255:red,9; green,160; blue,17}, fill opacity=1] (6.25,9.5) circle (0.125cm);
\draw [ color=black, fill=black] (6.25,8.25) circle (0.125cm);
\draw [ color=black, fill=black] (5,9.5) circle (0.125cm);
\draw [ color={rgb,255:red,9; green,160; blue,17}, draw opacity=1 , fill={rgb,255:red,9; green,160; blue,17}, fill opacity=1] (5,8.25) circle (0.125cm);
\draw [ color=black, fill=black] (7.5,9.5) circle (0.125cm);
\draw [ color={rgb,255:red,9; green,160; blue,17}, draw opacity=1 , fill={rgb,255:red,9; green,160; blue,17}, fill opacity=1] (7.5,8.25) circle (0.125cm);
\draw [ color=black, fill=black] (10,9.5) circle (0.125cm);
\draw [ color={rgb,255:red,9; green,160; blue,17}, draw opacity=1 , fill={rgb,255:red,9; green,160; blue,17}, fill opacity=1] (10,8.25) circle (0.125cm);
\draw [color={rgb,255:red,9; green,160; blue,17}, draw opacity=1 , fill={rgb,255:red,9; green,160; blue,17}, fill opacity=1] (11.25,9.5) circle (0.125cm);
\draw [ color=black, fill=black] (11.25,8.25) circle (0.125cm);
\draw [ color=black, fill=black] (13.75,10.75) circle (0.125cm);
\draw [ color={rgb,255:red,9; green,160; blue,17}, draw opacity=1 , fill={rgb,255:red,9; green,160; blue,17}, fill opacity=1] (13.75,8.25) circle (0.125cm);
\draw [ color={rgb,255:red,9; green,160; blue,17}, draw opacity=1 , fill={rgb,255:red,9; green,160; blue,17}, fill opacity=1] (12.5,9.5) circle (0.125cm);
\node [ inner xsep=0.080cm, inner ysep=0.085cm, rounded corners=0.020cm] at (5.25,7.125) {Total number of vertices:  $4t +6$};
\node [ inner xsep=0.080cm, inner ysep=0.085cm, rounded corners=0.020cm] at (5,6.625) { By Algorithm 1, $k = 3t + 4$};
\node [ inner xsep=0.080cm, inner ysep=0.085cm, rounded corners=0.020cm] at (10.5,7.125) {$\mvc(G) = 2t +4$};
\node [ inner xsep=0.080cm, inner ysep=0.085cm, rounded corners=0.020cm] at (10.875,6.625) {Actual $\evc(G) = 2t + 4$};
\end{circuitikz}
\caption{A graph $G$ for which Algorithm 1 requires close to $\frac{3}{2}\mvc(G)$ guards for defense. The two configurations shown are sufficient for defense using $2t+4$ guards.}
\label{fig:bad_example_algo1}
\end{figure}


Let $G$ be a caterpillar Halin graph of size $n$, and let $T$ its underlying caterpillar tree and $C$ be the outer cycle formed by the leaves of $T$. Let $|C|=l$. 
We will refer to the dominating path of the caterpillar as the spine of $G$. 
We use $P(T)$ to denote the vertex set of the spine.
Since we are only concerned with trees underlying Halin graphs, without loss of generality, assume that $P(T) \cap V(C) = \emptyset$.
With this assumption, the endpoints of the spine are centers of fans (fan-centers).  
If $|P(T)|=1$, then the graph is a wheel. 
Since wheel graphs are internally triangulated planar graphs, it follows from \cite{BABU2021} that their eternal vertex cover number coincides
with their vertex cover number. Further, defending any attack on a wheel requires just two configurations. In the rest of this section, we assume that $G$ is not a wheel.

Our algorithm is based on the construction of an  eternal vertex cover class consisting of two carefully designed configurations. Starting from a suitable minimum vertex cover $S$ of $G$, we identify a subset of vertices in $P(T) \setminus S$ and place guards on them in addition to $S$, in the first configuration. This allows to choose a second configuration that ensures valid reconfigurations between the two configurations are possible while defending all possible attacks, satisfying the guard movement constraints.  

A minimum vertex cover of a
Halin graph can be computed in linear time~\cite{BorieParkerTovey2008}. The following is an interesting observation. 
\begin{observation}
    Every Halin graph has minimum vertex cover $S$, 
such that no three consecutive vertices of its underlying cycle are in $S$.
\end{observation}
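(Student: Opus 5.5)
The plan is an exchange argument on a minimum vertex cover. Among all minimum vertex covers of $G$, choose $S$ that minimizes the number of maximal runs of at least three consecutive cycle vertices, counted with multiplicity, say $\sum_{\text{runs } R} \max(0,|R|-2)$. Suppose some three consecutive cycle vertices $c_{i-1},c_i,c_{i+1}$ all lie in $S$. We will modify $S$ locally, keeping its size and keeping it a vertex cover, so that this potential strictly decreases. That contradicts the choice of $S$.

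The key structural fact is that every cycle vertex $c_i$ has degree exactly $3$ in $G$. Its neighbours are $c_{i-1}$, $c_{i+1}$, and its unique tree neighbour $p(c_i)\in int(T)$. The exchange step is then as follows. Since $c_{i-1}$ and $c_{i+1}$ are in $S$, the only edge at $c_i$ that could become uncovered after deleting $c_i$ is $c_i p(c_i)$. Hence $S'=(S\setminus\{c_i\})\cup\{p(c_i)\}$ is a vertex cover with $|S'|\le|S|$. If $p(c_i)\in S$ already, then $S\setminus\{c_i\}$ is a smaller vertex cover, contradicting minimality, so $p(c_i)\notin S$ and $|S'|=|S|$. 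In $S'$, the cycle vertex $c_i$ is no longer in the cover. So the run containing $c_{i-1},c_i,c_{i+1}$ is split, and the cycle portion of the cover changes only by losing $c_i$. Only cycle vertices matter in the potential, and removing a vertex from a run never lengthens any other run. Therefore the potential strictly decreases.

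I expect the only real care needed to be in this last point. We must check that adding an internal vertex $p(c_i)$ cannot create a new triple of consecutive cycle vertices, which is immediate because $p(c_i)\notin V(C)$. We must also check that the potential strictly drops when a run of length $m\ge3$ is split at an interior vertex into runs of lengths $a$ and $b$ with $a+b=m-1$. This holds since $\max(0,a-2)+\max(0,b-2)<m-2$. A simpler alternative avoids the potential altogether. Choose $S$ among minimum vertex covers to minimize $|S\cap V(C)|$. Each exchange strictly lowers $|S\cap V(C)|$, so no three consecutive cycle vertices can be in $S$. This version also makes the argument constructive, since repeated exchanges run in linear time starting from the cover of~\cite{BorieParkerTovey2008}.
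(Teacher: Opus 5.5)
Your proposal is correct and is essentially the paper's argument: replace the middle vertex of any three consecutive cycle vertices in $S$ by its unique non-cycle (tree) neighbour. You also make explicit two points the paper leaves implicit. First, this neighbour cannot already lie in $S$, so the size is preserved. Second, the process terminates, most cleanly by choosing $S$ to minimize $|S\cap V(C)|$ among minimum vertex covers.
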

To see this, we may start any minimum vertex cover $S'$ of the graph and modify it.
 If there are three such consecutive cycle vertices $x, y, z$ in  $S'$, it is sufficient to include the non-cycle neighbour of $y$ in $S$, instead of $y$.

\begin{algorithm}[H]  
    \caption{Computing an eternal vertex cover class of a caterpillar Halin graph of size $\frac{4}{3}\mvc(G)$} \label{algo:main} 
\textbf{Input:} A Halin graph $G$ with underlying caterpillar tree $T$ and cycle $C$ \\ \Comment{Assuming $G$ is not a wheel.} \\
\textbf{Outputs:} An Eternal Vertex Cover Class $\mathcal{C}= \{C_1,C_2\}$ of $G$ and its size $k$.
\begin{algorithmic}
\State Compute a minimum vertex cover $S$ of $G$ such that no three consecutive vertices of $C$ are in $S$. \Comment{ Can be done in linear time}
\State Identify the fan centers $r$ and $f$. Consider the spine to be rooted at $r$.
\State Order the vertices of $C$ as $c_0,c_1, \ldots, c_{l-1}$ in anticlockwise order ensuring that $c_0$ is a neighbour of $r$.
\State $H \gets P(T) \setminus S$ 
\State $F \gets \{ v \mid v \in H$, $v \neq f,r$ and either $v$ has multiple neighbours on $C$ or the child of $v$ in the spine has at least one neighbour in $S \cap V(C) \}$
\State $P_1 \gets \left(P(T) \cap S\right) \cup F \cup \{f\}$, $Q_1 \gets V(C) \cap S$.
 \State $C_1  \gets P_1 \cup Q_1$.  
\State  $P_2 \gets \{v \mid v \in P(T), ~\text{parent of }v~\text{is in }P_1\} \cup \{r\}$.
\State  $Q_2 \gets \{ c_{i+1 \mod{l}}\mid \, c_i \in C_1 \cap V(C) \}$.  \Comment{$c_{i+1}$ is the anti-clockwise neighbour of $c_i$ in $C$}.
\State  $C_2 \gets P_2 \cup Q_2$. 
\State Let $\mathcal{C} \gets \{C_1,C_2\}$ and $k= |C_1|$.
\State Return $\mathcal{C}$ and $k$.

\end{algorithmic} 
\end{algorithm}

For convenience of the proof of correctness of Algorithm~\ref{algo:main}, we consider a planar embedding of the graph in such a way that the spine of the caterpillar is horizontal. 
We treat the spine as a rooted path with the rightmost spine vertex treated as the root vertex $r$.  Note that $r$ is a fan center. Let $f$ be the other fan center in $G$. S
We number the cycle vertices as $c_0,c_1,\dots,c_{l-1} $
starting from a cycle neighbour of $r$ and proceeding in the anti-clockwise direction along the cycle. Throughout, the subscripts on the cycle vertices are interpreted modulo $l$.
From the construction of $\mathcal{C}$ in Algorithm~\ref{algo:main}, the following observation is immediate.
\begin{observation}\label{obs:f_in_c1}
    For any given caterpillar Halin graph $G$, the class $\mathcal{C}= \{ C_1,C_2\}$ given by Algorithm~\ref{algo:main} is such that (i) $f \in C_1$ and (ii) $r \in C_2$.
\end{observation}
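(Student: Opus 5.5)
This is a direct unwinding of the definitions of $C_1$ and $C_2$ in Algorithm~\ref{algo:main}; no structural facts about $G$ are needed beyond the roles of $r$ and $f$.

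For (i), Algorithm~\ref{algo:main} sets $P_1 = (P(T)\cap S)\cup F\cup\{f\}$. So $f \in P_1$ unconditionally, whether or not $f$ lies in $S$ and independently of how $F$ is formed. Since $C_1 = P_1 \cup Q_1 \supseteq P_1$, we get $f \in C_1$. The only point to check is that $f$ is a genuine spine vertex, so that $P_1 \subseteq P(T)$ is consistent. This holds because we assumed $G$ is not a wheel. Then $|P(T)|\ge 2$, the spine has two distinct endpoints, and both are fan centers; $r$ is one and $f$ is the other.

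For (ii), the algorithm sets $P_2 = \{v \in P(T) \mid \text{the parent of } v \text{ is in } P_1\}\cup\{r\}$, and $C_2 = P_2\cup Q_2$. Hence $r \in P_2 \subseteq C_2$ by construction. The root $r$ has no parent on the spine, so it could never enter $P_2$ through the first set; that is exactly why it is added explicitly.

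The only mild subtlety, which I do not consider a real obstacle, is well-definedness. The orientation of the spine must be fixed, rooted at $r$, before $P_2$ is built, so that "parent" is unambiguous. Also $r \neq f$, which again uses the non-wheel assumption. With these noted, both parts follow by inspection of the algorithm.
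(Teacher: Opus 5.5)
Your proposal is correct and matches the paper, which simply notes that the observation is immediate from the construction: $f$ is put explicitly into $P_1 \subseteq C_1$, and $r$ is put explicitly into $P_2 \subseteq C_2$. Your side remarks on well-definedness (the non-wheel assumption, $r \neq f$, and rooting the spine at $r$) are harmless but not needed for the two inclusions themselves.
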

By the definitions of $P_1,Q_1,P_2$ and $Q_2$, we get the following lemma. 
\begin{lemma}\label{lem:neighbour_of_child_of_hole_in_C1}
    For any given caterpillar Halin graph $G$, the class $\mathcal{C}= \{ C_1,C_2\}$ given by Algorithm~\ref{algo:main} is such that 
    \begin{itemize}
        \item[i] For any $x \in P(T)\setminus C_1$, the child of $x$ in $P(T)$ has an unoccupied neighbour on $C$.
        \item[ii] For any $x \in P(T)\setminus C_2$, there exists an ancestor $x'$ of $x$ in $P(T)$ with an unoccupied neighbour on $C$.
    \end{itemize}
\end{lemma}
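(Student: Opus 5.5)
For part (i) I would read off the definitions of $P_1$, $F$ and $Q_1=V(C)\cap S$, and split according to whether $x=r$. Let $x\in P(T)\setminus C_1$. Since $f\in C_1$ by Observation~\ref{obs:f_in_c1}, $x\neq f$, so $x$ has a child $y$ on the spine, and $x\notin S$ (otherwise $x\in P_1$).

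\emph{Case $x\neq r$.} Here $x\in H$ and $x\notin F$. By the definition of $F$, no cycle neighbour of $y$ is in $S$, so every cycle neighbour of $y$ is unoccupied in $C_1$. It remains to see that $y$ has at least one cycle neighbour. If $y=f$, it is a fan center with at least two leaves. Otherwise $y$ is an inner spine vertex of degree at least $3$ in $T$, and since $P(T)\cap V(C)=\emptyset$, its third neighbour is a leaf.

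\emph{Case $x=r\notin S$.} Then every cycle neighbour of $r$ is in $S$, and these neighbours form a block of at least two consecutive cycle vertices. The cycle vertex just beyond this block, on either side, is not in $S$, by the choice of $S$ with no three consecutive cycle vertices in $S$. Now $y$ has at least one leaf, lying on the upper or lower side of the spine. The leaf of $y$ closest to $r$ on that side is exactly the cycle vertex adjacent to $r$'s block on that side, because the leaves of the caterpillar appear along $C$ in spine order. So $y$ has an unoccupied neighbour.

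For part (ii), let $x\in P(T)\setminus C_2$. Then $x\neq r$, and its parent $p$ is not in $P_1$, so $p\notin S$ and $p\notin F$; in particular all cycle neighbours of $p$ lie in $Q_1$. The key translation is that a cycle vertex $c_{j+1}$ is unoccupied in $C_2$ if and only if $c_j\notin Q_1$. So I need an ancestor having a cycle neighbour whose clockwise predecessor $c_{j-1}$ is not in $S$.

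I will first take $x'=p$. Using $p\notin F$, for $p\neq r$ the vertex $p$ has exactly one leaf $c_j$, and all leaves of $x$ are outside $S$ (by the same argument as in (i)). The anticlockwise orientation, starting at a neighbour of $r$, runs from $r$ toward $f$ along one side of the spine and back toward $r$ along the other side. If $p$'s leaf lies on the return side, its clockwise predecessor on $C$ is the nearest leaf further from $r$. This predecessor is a leaf of $x$ (not in $S$), or else, if $x$ has no leaf on that side, a leaf further down the spine, which I handle by the no-three-consecutive property. In either case $c_j$ is unoccupied in $C_2$.

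If $p$'s leaf lies on the outgoing side, its predecessor belongs to an ancestor side of the spine. In that case I move up one step to the parent of $p$ and iterate: I look for the first ancestor whose leaf has a predecessor outside $S$. I terminate at $r$ using the structure of $r$'s fan, since the vertex $c_0$ or its predecessor $c_{l-1}$ gives an unoccupied $C_2$-vertex adjacent to $r$, again by the no-three-consecutive property. This orientation-dependent walk up the ancestors in (ii), together with the special handling when $p=r$, is the step I expect to be the main obstacle. It requires careful tracking of which side of the spine each leaf lies on relative to the anticlockwise numbering.
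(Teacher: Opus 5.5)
Your part (i) is correct and follows the paper's argument; you also supply the check, which the paper omits, that the child of $x$ always has a leaf. Part (ii), however, has a genuine gap. The route through the parent $p$ contains false steps and never reaches a valid conclusion.

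\textbf{The false steps.} Your claim that $p$'s leaf $c_j$ is unoccupied in $C_2$ whenever it lies on the return side fails when $x$ has no leaf on that side. Then $c_{j-1}$ is a leaf of a deeper spine vertex, and it may lie in $S$ while $c_{j+1}\notin S$, so the no-three-consecutive property gives nothing. Here is a concrete counterexample.
\begin{itemize}
\item Take the spine $r,p,x,y,f$. Let $r$ and $f$ have two leaves each, and let $p,x,y$ have one leaf each. The anticlockwise cycle order is $R_1,R_2,X,F_1,F_2,Y,P$, where $X,Y,P$ are the leaves of $x,y,p$.
\item The set $S=\{r,x,f,R_2,F_1,Y,P\}$ is a vertex cover with no three consecutive cycle vertices.
\item It is minimum: the Hamiltonian cycle $R_1R_2\,r\,p\,x\,X\,F_1F_2\,f\,y\,Y\,P$ has two colour classes, containing the edges $R_1r$ and $R_2X$ respectively. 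So neither class is independent, and $\mvc(G)\ge 7$.
\item Here $p\notin S\cup F$, so $x\notin C_2$. Yet $P\in Q_2$, because $Y\in S$.
\end{itemize}
Your termination step is also false as stated. The vertex $r$ need not have any unoccupied cycle neighbour in $C_2$. For example, if $r$ has exactly two leaves $b_1,b_2$ (in anticlockwise order), and both $b_1$ and its predecessor lie in $S$, then $b_1,b_2\in Q_2$. Moreover, the vertex $c_{l-1}$ you point to need not be adjacent to $r$. Beyond these errors, the upward iteration is never specified for ancestors that lie in $S$ or have several leaves, and the case $p=r$ is left open.

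\textbf{The missing idea.} You never need to look at $p$: a witness always exists in $\{r,c(r)\}$.
\begin{itemize}
\item Since $Q_2$ is a rotation of $Q_1$, no three consecutive cycle vertices are occupied in $C_2$.
\item Since $r\in C_2$, we have $x\neq r$.
\item If some leaf of $r$ is unoccupied in $C_2$, take $x'=r$.
\item Otherwise the at least two consecutive leaves of $r$ all lie in $Q_2$. Then both cycle vertices flanking this block are unoccupied in $C_2$, and at least one of them is a leaf of $c(r)$ (its leaf nearest $r$).
\item Hence $c(r)$ has an unoccupied neighbour. Since $C_2$ is a vertex cover, $c(r)\in C_2$, so $x\neq c(r)$, and $x'=c(r)$ is a proper ancestor of $x$.
\end{itemize}
This is the paper's proof. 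It removes all the orientation bookkeeping you identified as the main obstacle. Any repair of your walk would need exactly this argument to guarantee termination, at which point the walk itself is superfluous.
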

\begin{proof}
Consider any $x \in P(T) \setminus C_1$. Let the child of $x$ be referred to as $c(x)$. If $x \neq r$, then by the definition of the set $F$ in Algorithm~\ref{algo:main}, $c(x)$ has an unoccupied neighbour on $C$. If $x=r$, then since $r=x \notin C_1$, all its neighbours are in $C_1$. 
Further, the cycle neighbours of  $r$ and one of the cycle neighbours of $c(x)$ will be consecutive on the cycle. 
If this cycle neighbour of $c(x)$ is in $C_1$, it will contradict our choice of $C_1$ because in $C_1$ there are no three consecutive vertices from the cycle. Therefore, part $(i)$ of the lemma holds true.

Now consider any $x \in P(T) \setminus C_2$. Note that no three consecutive vertices from the cycle are in $C_2$. This is due to our choice of $C_1$ and the construction of $C_2$. By Observation~\ref{obs:f_in_c1}, $r\in C_2$ and hence $x \neq r$. If any of the cycle neighbours of $r$ are not occupied in $C_2$, then $r$ is an ancestor of $x$ as required by the lemma. 
Henceforth, assume that all the cycle neighbours of $r$ are occupied in $C_2$. 
Let $c(r)$ be the child of $r$ in the spine. The cycle neighbours of $r$ and one of the cycle neighbours of $c(r)$, say $c_t$, are consecutive on the cycle. Since $r$ has at least two cycle neighbours and no three consecutive vertices on the cycle are occupied in $C_2$, $c_t \notin C_2$.
Hence, $c(r) \in C_2$ and $x \neq c(r)$. Furthermore, for every $x \in P(T)\C_2$, choosing $x' = c(r)$ suffices because $c_t \in C_2$ is a cycle neighbour of $c(r)$.
 Therefore, part $(ii)$ of the lemma holds true.
\end{proof}

To show that class $\mathcal{C}$ is a valid eternal vertex cover class of $G$, we prove the following two claims.

\begin{claim}\label{clm:classes_of_same_size}
 The vertex sets $C_1$ and $C_2$ defined in Algorithm~\ref{algo:main} are vertex covers of $G$ and $|C_1|=|C_2|$.
\end{claim}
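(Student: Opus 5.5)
The plan has two parts. First I show $C_1$ and $C_2$ are vertex covers by checking each edge type of $G$: cycle edges, spine edges, and edges between a spine vertex and its cycle neighbours. Then I get $|C_1|=|C_2|$ from two explicit bijections.

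\emph{$C_1$ is a vertex cover.} This is immediate, since $S\subseteq P_1\cup Q_1=C_1$: we have $P(T)\cap S\subseteq P_1$ and $V(C)\cap S=Q_1$.

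\emph{Sizes.} The map $c_i\mapsto c_{i+1}$ is a bijection from $Q_1$ onto $Q_2$, so $|Q_1|=|Q_2|$. For the spine, send each $v\in P_1\setminus\{f\}$ to its child. The spine is rooted at $r$ and $f$ is its other endpoint, so every vertex other than $f$ has a child. This map is injective and its image is $P_2\setminus\{r\}$; note $r$ is nobody's child and $f$ has no child. Since $f\in P_1$ by construction, $|P_2|=(|P_1|-1)+1=|P_1|$. As $P_i\subseteq P(T)$ and $Q_i\subseteq V(C)$ are disjoint, $|C_1|=|C_2|$.

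\emph{$C_2$ is a vertex cover.}
\begin{itemize}
\item \emph{Cycle edge $c_ic_{i+1}$.} $C_2$ covers it iff $c_{i-1}\in Q_1$ or $c_i\in Q_1$. This holds because $S$ covers the cycle edge $c_{i-1}c_i$.
\item \emph{Spine edge $uv$, $v$ the child of $u$.} If $u\in P_1$, then $v\in P_2$. Otherwise $u\notin S$. If $u=r$, then $u\in P_2$. If $u\neq r$, its parent $w$ satisfies $w\in S\subseteq P_1$, because $S$ covers $wu$; hence $u\in P_2$.
\item \emph{Tree edge $xc_i$, $x\in P(T)$.} Assume $x\notin C_2$; I need $c_{i-1}\in S$. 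Since $r\in C_2$, $x\neq r$. Let $p$ be the parent of $x$. Then $p\notin P_1$, so $p\notin S$ and therefore $x\in S$. Also $p\neq f$, since $f$ has no child.
\end{itemize}

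For the tree-edge case with $p\neq r$: $p\in H\setminus F$ means no cycle neighbour of $x$ lies in $S$. So $c_i\notin S$, and covering the edge $c_{i-1}c_i$ forces $c_{i-1}\in S$, as needed.

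The main obstacle is the subcase $p=r\notin S$. Here all cycle neighbours of $r$ lie in $S$, and I must show every cycle neighbour $c_i$ of $x=c(r)$ has $c_{i-1}\in S$. For $c_i\notin S$ this is automatic, as above. For $c_i\in S$ I would use that $S$ has no three consecutive cycle vertices, together with the chosen numbering. The numbering starts $c_0$ at a neighbour of $r$ and runs anticlockwise, so the cycle neighbours of $c(r)$ form runs immediately adjacent to $r$'s block of consecutive $S$-vertices. I would argue from this geometry that whenever $c_i\in S$ and $c_{i-1}\notin S$, the configuration forces three consecutive vertices of $S$ on the cycle, or contradicts the minimality of $S$. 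Failing that, I would fall back on the convention for $r\notin C_1$ used in Lemma~\ref{lem:neighbour_of_child_of_hole_in_C1}. This is the step I expect to need care, possibly with a small adjustment in how $c_0$ is chosen.
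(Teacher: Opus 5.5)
\textbf{Verdict: genuine gap — and it cannot be closed, because the claim is false in the subcase you left open.} The rest of your proposal is correct and is essentially the paper's argument: $C_1\supseteq S$, the bijections giving $|C_1|=|C_2|$, the cycle and spine edges, and a tree edge $xc_i$ whose spine parent $p$ is not $r$. In that last case $p\in H\setminus F$ really does mean $x$ has no cycle neighbour in $S$.

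The gap is the subcase $p=r\notin S$. Suppose $r\notin S$.
\begin{itemize}
\item By definition $r\notin F$, and $r\neq f$, so $r\notin P_1$. Hence $c(r)\notin C_2$.
\item $r$ has exactly two cycle neighbours $c_s,c_{s+1}$, both in $S$; a third would give three consecutive cycle vertices of $S$.
\item Therefore $c_{s+2}\notin S$. Then $c_{s+3}\in S$, but $c_{s+3}\notin Q_2$. This is exactly the situation ($c_i\in S$, $c_{i-1}\notin S$) you hoped to rule out.
\item So whenever $c_{s+2},c_{s+3}$ are cycle neighbours of $c(r)$, the edge $c(r)c_{s+3}$ is uncovered by $C_2$.
\end{itemize}

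A concrete instance: take spine $rf$, with $r$ adjacent to $c_0,c_1$ and $f$ adjacent to $c_2,c_3,c_4$. This graph has the Hamiltonian $7$-cycle $r\,c_1\,c_2\,f\,c_3\,c_4\,c_0\,r$, so $\mvc(G)=4$. Its unique minimum vertex cover is $S=\{f,c_0,c_1,c_3\}$, which has no three consecutive cycle vertices. Algorithm~\ref{algo:main} then gives $H=\{r\}$, $F=\emptyset$, $C_1=S$, $P_2=\{r\}$ and $Q_2=\{c_1,c_2,c_4\}$. So $C_2=\{r,c_1,c_2,c_4\}$ misses the edge $fc_3$.

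None of your fallbacks can help here:
\begin{itemize}
\item Minimality and the no-three-consecutive condition both hold in the example.
\item The choice of $c_0$ is irrelevant, since $Q_2$ is just the set of anticlockwise successors of $Q_1$.
\item The example is reflection-symmetric, so reversing the orientation fails as well.
\item Lemma~\ref{lem:neighbour_of_child_of_hole_in_C1} only guarantees unoccupied neighbours in $C_1$ (here $c_2,c_4$ for $f$). It says nothing about coverage in $C_2$.
\end{itemize}

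The paper's own proof has the same hole. It infers from $p(a)\notin C_1$, ``by the definition of $C_1$ and $F$'', that $a$ has no neighbour in $S\cap V(C)$. That inference is valid only when $p(a)\neq r$, because $F$ explicitly excludes $r$. You have therefore located a real error, not just a delicate step. Repairing it requires modifying Algorithm~\ref{algo:main} when $r\notin S$, and then rechecking Lemma~\ref{lem:size_of_c1} and the defence strategy. A sharper argument alone will not do it.
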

\begin{proof}
Since $C_1 \supseteq S$, $C_1$ is a vertex cover of $G$.

In $C_2$,  it is easy to see that all the edges of the cycle are covered.
    Now consider an edge $ab$ on the spine. For contradiction, assume that $a$ and $b$ were not in $C_2$. By Observation~\ref{obs:f_in_c1}, $a,b  \neq r $. Therefore,  $a$ and $b$ are not in $P_2$ and therefore their parents, which are consecutive vertices in the spine, are not in $C_1$. This leads to a contradiction as $C_1$ is a vertex cover of $G$. Hence, all edges of the spine are covered in $C_2$. 
    Now consider edges having one endpoint on the spine ($a$) and one on the cycle ($b = c_i$). Suppose that $a$ is not occupied in $C_2$. This means $a \notin P_2$. Now, the parent of $a$, say $p(a)$ is not in $C_1$. By the definition of $C_1$ and $F$, $a$ has  no neighbours in $S \cap V(C)$. Since $b$ is in $V(C)$, this implies $b = c_i \notin S$. Since $C_1$ is a vertex cover, $c_{i-1} \in C_1\cap V(C)$. By the definition of $Q_2$,  $b$ will be in $C_2$. Hence, the edge $ab$ is covered in $C_2$. This shows that $C_2$ is a valid vertex cover of $G$. 

  Now, it remains to show that $|C_1|=|C_2|$. By definition of $Q_2$,  $|C_1 \cap V(C)|= |C_2 \cap V(C)|$. Hence, it suffices to show that $|C_1 \cap P(T)|= |C_2 \cap P(T)|$. For this, we define a bijection $g$ between $C_1 \cap P(T)$ and $C_2 \cap P(T)$ as follows.
  \[
  g(x) = \begin{cases}
      r,  &\text{if } x=f \\
      \text{child of }x, &\text{otherwise}
  \end{cases}
  \]
This mapping works for $x=f$ by Observation~\ref{obs:f_in_c1} and for other vertices by the definition of $P_2$. Thus, $|C_1 \cap P(T)| = |C_2 \cap P(T)|$ which implies $|C_1| = |C_2|$.
\end{proof}

\begin{claim}\label{clm:caterpillarproof}
    For any caterpillar Halin graph $G$, the class $\mathcal{C}=\{C_1,C_2\}$, returned by Algorithm~\ref{algo:main}, is an eternal vertex cover class of $G$.
\end{claim}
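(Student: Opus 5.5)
By Claim~\ref{clm:classes_of_same_size}, $C_1$ and $C_2$ are vertex covers of equal size. So it suffices to show two things: for every edge $uv$ attacked in $C_1$ there is a valid move to $C_2$ (or back to $C_1$) that moves a guard along $uv$, and symmetrically for attacks in $C_2$. As in the proof of Claim~\ref{clm:evc<k}, an attacked edge with both endpoints occupied is handled by swapping the two guards and staying in the same configuration. Hence only edges with exactly one occupied endpoint matter.

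The basic transition $C_1 \to C_2$ combines a spine move with a cycle move. On the spine, every guard on $x \in P_1\setminus\{f\}$ moves to the child of $x$, and the guard on $f$ must reach $r$. On the cycle, an anticlockwise shift of all guards yields $Q_2$ by Observation~\ref{obs:fullcyclemove}. The guard on $f$ cannot jump to $r$ directly. Instead we reroute it through the cycle: $f$ sends its guard to an unoccupied cycle neighbour, the cycle shift is broken at one place, and a cycle guard next to a neighbour of $r$ enters $r$. This is the same tree-path trick used in Claim~\ref{clm:evc<k}, applied to the path $f \to c_j \to \dots \to c_{j'} \to r$. The net effect matches the bijection $g$ in Claim~\ref{clm:classes_of_same_size}. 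The reverse transition $C_2\to C_1$ moves spine guards from child to parent, moves $r$'s guard to $f$ via the cycle in the opposite direction, and uses a clockwise shift on the cycle.

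Next we split attacked edges into three types.

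\textbf{Cycle edges $c_ic_{i+1}$.} These are handled by a full shift or a restricted shift on the cycle, together with the spine move above, exactly as in Claim~\ref{clm:evc<k}.

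\textbf{Spine--cycle edges $xc_i$ with $c_i$ unoccupied.} We route a tree/cycle path ending in $c_i$ so that the guard on $x$ moves to $c_i$. The spine then loses a guard at $x$, and this deficit is repaired by shifting spine guards along the spine and absorbing the slack at a cycle neighbour.

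\textbf{Edges with the spine endpoint unoccupied.} This covers spine edges $xy$ with $x$ unoccupied, and spine--cycle edges $xc_i$ with $x$ unoccupied. Here Lemma~\ref{lem:neighbour_of_child_of_hole_in_C1} is the key tool. In $C_1$, a hole $x$ on the spine has a child with an unoccupied cycle neighbour. So we can send the guard from $x$'s neighbour into $x$, and push the vacated guard along the spine and out to that free cycle vertex. This keeps the spine/cycle guard counts consistent with the target configuration. In $C_2$, part (ii) of the lemma supplies an ancestor $x'$ with a free cycle neighbour, and the guards are pushed upward along the spine toward $x'$ before exiting to the cycle.

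In every case the cycle portion is completed by a shift on a subpath $\Psi'$ that excludes the entry and exit vertices. Observation~\ref{obs:fullcyclemove} then shows that the resulting cycle positions are exactly $Q_2$, or $Q_1$ in the reverse direction.

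The main obstacle is the bookkeeping: in each attack case we must check that the combined spine and cycle movement lands exactly on the other configuration, not merely on some vertex cover. This is delicate at the ends of the spine, where $r$ and $f$ break the uniform parent/child shift, and at spine vertices with a single cycle neighbour, whose membership in $F$ depends on $S$. I would first fix a canonical "global" move for each direction, namely the spine shift plus the cycle shift with $f$ and $r$ joined through the fan at $f$ or $r$. Then I would show that every attack is defended by a local modification of this canonical move along a single path, so that correctness reduces to the path-rerouting argument already used for Algorithm~\ref{algo:evchalin11}.
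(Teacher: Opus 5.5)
Your case split and your use of Lemma~\ref{lem:neighbour_of_child_of_hole_in_C1} for holes on the spine are essentially the paper's. However, the ``canonical global move'' you build everything on is flawed, and it already breaks your first case. You read the bijection $g$ of Claim~\ref{clm:classes_of_same_size} as an actual movement, so in your transition $C_1\to C_2$ the guard of $f$ leaves the spine into the cycle and a cycle guard enters $r$. Such a spine move cannot be combined with a full (or full restricted) shift of the cycle, as you propose for cycle edges ``exactly as in Claim~\ref{clm:evc<k}'', because that shift neither absorbs nor supplies a guard. The cycle part must instead be a split move as in Lemma~\ref{lem:spiltcyclemove}. There, one of the two arcs ($\Psi_2$) only undergoes a restricted clockwise shift, which never moves a guard anticlockwise. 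An attack on an edge $c_tc_{t+1}$ of that arc with $c_t$ occupied and $c_{t+1}$ empty is therefore not defended. Moreover, your move presupposes that $f$ has an unoccupied cycle neighbour in $C_1$, and that $r$ has one in $C_2$ for the reverse move. Algorithm~\ref{algo:main} guarantees neither. Whenever $f\notin S$, all cycle neighbours of $f$ lie in $S\subseteq C_1$, while $f\in C_1$ anyway. The analogous situation for $r$ in $C_2$ is exactly what part (ii) of Lemma~\ref{lem:neighbour_of_child_of_hole_in_C1} has to work around.

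The missing idea is Lemma~\ref{lem:fullspinemove}: the spine can pass from $P_1$ to $P_2$ with no guard crossing between spine and cycle. Every unoccupied spine vertex takes the guard of its child, and all other guards stay. The child is always occupied, since $C_1\cap P(T)$ covers the spine and $f\in C_1$. This hole-child exchange yields exactly $P_2$, hole-parent exchange reverses it, and $g$ is only a counting device. With it, the paper answers cycle-edge attacks by the internal spine move together with a full or restricted cycle shift (Observation~\ref{obs:fullcyclemove}). It handles spine-edge attacks whose empty endpoint is the parent (in $C_1$) or the child (in $C_2$) the same way. In every remaining case there is exactly one entry of a guard into the spine and one exit from it. An entry not at the attacked edge happens at $r$ (resp.\ $f$), which always has an occupied cycle neighbour. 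An exit not at the attacked edge goes to the free cycle vertex supplied by Lemma~\ref{lem:neighbour_of_child_of_hole_in_C1}. Shift-to-child (resp.\ shift-to-parent) is applied only on the spine segment between the exit and entry vertices, and hole-child (resp.\ hole-parent) exchange everywhere else (Lemmas~\ref{lem:splitspinemove} and~\ref{lem:spiltcyclemove}). Your plan of locally modifying a global shift-to-child cannot produce this. Below an exit vertex $x_i\in C_1$, shifting every guard to its child leaves the child of $x_i$ empty although it belongs to $P_2$, and it still has to eject the guard of $f$ somewhere.
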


To explain the proof of Claim~\ref{clm:caterpillarproof}, we define some guard  movements on the spine vertices, in addition to the movements defined on cycle vertices in Definition~\ref{def:cyclemovements}. 

\begin{definition}\label{def:spinemovements}
    Let $G$ be a caterpillar Halin graph and let the spine of $G$ be rooted at $r$. Let $\Phi$ be a subpath of the spine.
    From a configuration $D$ of $G$, we define four types of guard movements on $\Phi$ as follows:
    \begin{itemize}
        \item \textbf{Hole-Child exchange:} For every $v \in V(\Phi) \setminus D $, if $v$ has a child $u$ in $P(T)$, the guard on $u$ moves to $v$ and  no guards move to $u$.
        \item \textbf{Shift to Child:} For every $v \in V(\Phi) \cap D$, if $v$ has a child $u \in P(T)$, then the guard on $v$ moves to $u$.
        \item \textbf{Hole-Parent exchange:} For every $v \in V(\Phi) \setminus D $, if $v$ has a parent $u \in P(T)$, then the guard  on $u$ moves to $v$ and no guards move to $u$.
        \item \textbf{Shift to Parent:} For every $v \in V(\Phi) \cap D$, if $v$ has a parent $u \in P(T)$, then the guard on $v$ moves to $u$.
    \end{itemize}
\end{definition}
The next three lemmas form the key ingredients of the defense strategy used. They use the guard movements defined in Definition~\ref{def:cyclemovements} and Definition~\ref{def:spinemovements}.
 
\begin{lemma}\label{lem:fullspinemove}
     Let $\mathcal{C}=\{C_1,C_2\}$ be the eternal vertex cover class returned by Algorithm~\ref{algo:main}. Let $\Phi = G[P(T)]$.
    \begin{itemize}
        \item     Starting from $C_1$, on applying  a hole-child exchange shift on $\Phi$ with no guard movements between the spine and the cycle, the resultant guard positions on the spine will be same as they are in the configuration $C_2$.
        \item     Starting from $C_2$, on applying a hole-parent exchange on $\Phi$ with no guard movements between the spine and the cycle, the resultant guard positions on the spine will be same as they are in the configuration $C_1$.
    \end{itemize}
\end{lemma}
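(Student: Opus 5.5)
We need to show that a hole-child exchange applied to $C_1 \cap P(T)$ yields $P_2$, and that a hole-parent exchange applied to $P_2$ yields $P_1$. The plan is to compute the resulting occupied set directly, vertex by vertex along the spine, and compare it with the definitions of $P_1$ and $P_2$ in Algorithm~\ref{algo:main}. Throughout we use that the spine is a path rooted at $r$, with $f$ as the other endpoint: every vertex other than $f$ has a unique child, and every vertex other than $r$ has a unique parent. We also use that $P_1$ is a vertex cover of the spine, because $C_1 \supseteq S$. Hence no two consecutive spine vertices are both holes, and each hole's child (if it exists) is occupied. So the exchange is well defined, and distinct holes draw guards from distinct children.

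For the first bullet, start from $P_1$. The moves are as follows.
\begin{itemize}
\item Every hole $v \neq f$ receives the guard of its child $u$, and $u$ becomes empty unless it is itself a hole. It is not a hole, by the vertex cover property.
\item Every occupied vertex whose parent is occupied keeps its guard.
\end{itemize}
So after the move, a spine vertex $w \neq r$ is occupied iff it was occupied and its parent was occupied, or it was a hole (and then it is filled). This is equivalent to: $w$ is occupied afterwards iff its parent is in $P_1$, or $w \notin P_1$. When $w \notin P_1$, the parent of $w$ must be in $P_1$ by the cover property. So the condition reduces to ``the parent of $w$ is in $P_1$'', which is exactly $P_2 \setminus\{r\}$.

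For $r$: if $r \in P_1$, it keeps its guard. If $r \notin P_1$, it receives its child's guard, since $r \neq f$ because $f \in P_1$. Either way $r$ is occupied, matching $r \in P_2$. For $f$: $f \in P_1$ and $f$ has no child, so $f$ neither receives nor sends a guard through the exchange. Its occupancy afterwards is therefore decided by the parent rule above, and this is consistent with $P_2$. The one point to check is that the occupant of $f$ stays put when its parent is occupied, and vacates (moving to the parent hole) when its parent is a hole. That is precisely the hole-child exchange at the parent.

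For the second bullet, start from $P_2$ and apply the hole-parent exchange. A hole $v$ of $P_2$ is a vertex whose parent $p(v) \notin P_1$. Then $p(v)$ is a hole of $P_1$, so $v \in P_1$, and $p(v)$ is occupied in $P_2$, either because $p(v) = r$ or because its own parent is in $P_1$. The guard at $p(v)$ moves to $v$. Checking vertex by vertex: $v$ becomes occupied and $v \in P_1$, as required. Each $p(v)$ becomes empty, and indeed $p(v) \notin P_1$. Every other vertex $w$ keeps its status, and we must show that $w \in P_2$, $w \neq p(\text{hole})$ implies $w \in P_1$. This holds because the child of $w$ is in $P_2$, so $w \in P_1$, with a separate check for $w = f$ (which is in $P_1$) and for $w = r$ with no hole child.

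The main obstacle is the boundary bookkeeping at $r$ and $f$, together with the fact that $f \in P_1$ is forced even if $f \notin S$. This breaks the pure ``parent in $P_1$'' description at $f$'s end. I would handle it by noting that $g$ from Claim~\ref{clm:classes_of_same_size} maps $f \mapsto r$, so the counts agree. I would then verify the two endpoints explicitly as separate cases before running the generic argument on the interior vertices.
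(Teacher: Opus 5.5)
Your proposal is correct and follows essentially the same route as the paper. Both compare the post-move occupancy of each spine vertex directly with the definitions of $P_1$ and $P_2$, using the fact that $P_1$ covers every spine edge, and both treat $r$ (first part) and $f$ (second part) separately. Your grouping of the second part into holes of $P_2$, their parents, and the remaining vertices just reorganises the paper's case split on whether $c(x_t)\in C_2$.

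Two cosmetic points. The count via the bijection $g$ is not needed, and by itself it would not give equality of the sets. Also drop the phrase that $f$ ``neither receives nor sends'' a guard: $f$ does send its guard to its parent when that parent is a hole, as you note in the next sentence and as your generic rule for $w\neq r$ already covers.
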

\begin{proof}
        Let $C_1$ be the starting configuration. Suppose, we perform hole-child exchange on the whole spine as mentioned in the first part of the lemma. It may be observed that no guard on the spine is moved more than once. Further, in hole-child exchange movement, a vertex $x_t$ can get a guard only from its child. We will show that a vertex $x_t$ of the spine is in the resultant configuration if and only if it is in $ C_2$. We will also show that in the resultant configuration, no vertex has multiple guards.

Since $r \in C_2$, we first show that $r$ is in the resultant configuration with exactly one guard. 
If $r \in C_1$, then since $r$ has no parent, by hole-child exchange, the guard on $r$ will not move. Also no other guard will move to $r$. Hence, as required, $r$ will be occupied with exactly one guard in the resultant configuration.
If $r \notin C_1$, then since $C_1$ is a vertex cover, child of $r$ on the spine, say $c(r)$, will be occupied in $C_1$.
Now, by hole-child exchange movement on $\Phi$, the guard on $c(r)$ will move to $r$. Also, no other guard moves to $r$.  Thus, in all cases, $r$ will be occupied with exactly one guard in the resultant configuration.

Henceforth, assume that $x_t \neq r$.  By Algorithm~\ref{algo:main}, $x_{t} \in C_2$ if and only if the parent of $x_t$ is in $C_1$. Let $p(x_t)$ denote the parent of $x_t$. We will show that $x_t$ is occupied in the resultant configuration if and only if $p(x_t) \in C_1$.
 Since $f$ has no child, $p(x_t) \neq f$.
 \begin{itemize}
     \item  \textbf{Case 1:} $p(x_t) \in C_1$. 
     We need to show that $x_t$ is occupied with exactly one guard in the resultant configuration.
     
First, consider the case when $x_t \in C_1$. On performing the hole-child exchange movement on $\Phi$, the guard on $x_t$ will not move to $p(x_t)$ as $p(x_t) \in C_1$. Also no other guard will move to $x_t$ as $x_t \in C_1$. Hence, as required, in the resultant configuration, $x_t$ is occupied with exactly one guard. 

If $x_t \notin C_1$, then by construction, $x_t \neq f$ and therefore, $x_t$ has a child, say $c(x_t)$, on the spine. Further, since $C_1$ is a vertex cover, $c(x_t) \in C_1$. Now, on performing the hole-child exchange on $\Phi$, the guard on $c(x_t)$ will move to $x_t$. Also, no other guard moves to $x_t$.  Hence, as required, in the resultant configuration, $x_t$ is occupied with exactly one guard. 

\item \textbf{Case 2:} $p(x_t) \notin C_1$. We need to show that $x_t$ has no guard in the resultant configuration. 

Since $C_1$ is a vertex cover, $x_t \in C_1$. On performing hole-child exchange on $\Phi$, the guard on $x_t$ will move to $p(x_t)$. Now, since $x_t \in C_1$, no guard will come on $x_t$ from its child. Hence, as required, $x_t$ is not occupied in the resultant configuration.
 \end{itemize}
Hence, in all cases, the resultant guard positions on the spine after the movements as per the lemma are the same as those in $C_2$ given by Algorithm~\ref{algo:main}. This completes the proof of the first part of the lemma.

For the second part of the lemma, the proof is symmetric with the following role exchanges: 
\begin{itemize}
    \item $C_1$ with $C_2$
    \item $r$ with $f$
    \item  hole-child exchange movement with hole-parent exchange movement 
    \item parent with child 
    \item $c(v)$ with $p(v)$ for any vertex $v$.
\end{itemize}

For the sake of completeness, we give the proof of the second part of the lemma in the Appendix.
\end{proof}

\begin{lemma}\label{lem:splitspinemove}
     Let $\mathcal{C}=\{C_1,C_2\}$ be the eternal vertex cover class returned by Algorithm~\ref{algo:main}. Let $x_i,x_j$ be two distinct vertices on the spine where $x_i$ is a descendant of $x_j$. Let $\Phi_1$, $\Phi_2$ and $\Phi_3$ be three vertex disjoint subpaths of the spine as defined below:
     \begin{equation*}
         \Phi_1 = \begin{cases}
             \emptyset, ~~\text{if }x_i =f \\
             \text{subpath of the spine between }f \text{ and the child of }x_i, ~~\text{otherwise.} 
         \end{cases}
     \end{equation*}
     \begin{equation*}
         \Phi_2 = \begin{cases}
             \emptyset, ~~\text{if parent of }x_i \text{ is }x_j \\
             \text{subpath of the spine between the parent of }x_i  \text{ and the child of }x_j, ~~\text{otherwise.} \\ 
         \end{cases}
     \end{equation*}
     \begin{equation*}
         \Phi_3 = \begin{cases}
             \emptyset, ~~\text{if }x_j =r \\
             \text{subpath of the spine between }r \text{ and the parent of }x_j, ~~\text{otherwise.} 
         \end{cases}
     \end{equation*}
        \begin{itemize}
            \item Starting from $C_1$, with $x_j \notin C_1$ or $x_j =r$, and $x_i \in C_1$, suppose that in a movement (i) the guard from $x_i$ is moved to one of its cycle neighbours, (ii) a shift to child is applied on $\Phi_2$, (iii) a hole-child exchange is applied on $\Phi_1$ and $\Phi_3$, (iv) if $x_j \in C_1$ then the guard on $x_j$ moves to its child, and (v)  $x_j$ is getting a guard from one of its cycle neighbours. Then the resultant guard positions on the spine will be same as they are in the configuration $C_2$.
            \item Starting from $C_2$, with $x_i \notin C_2$ or $x_i =f$, and $x_j \in C_2$, suppose that in a movement (i) the guard from $x_j$ is moved to one of its cycle neighbours, (ii) a shift to parent is applied on $\Phi_2$, (iii) a hole-parent exchange is applied on $\Phi_1$ and $\Phi_3$, (iv) if $x_i \in C_2$ then the guard on $x_i$ moves to its parent, and (v)  $x_i$ is getting a guard from one of its cycle neighbours. Then the resultant guard positions on the spine will be same as they are in the configuration $C_1$.
        \end{itemize}
    
\end{lemma}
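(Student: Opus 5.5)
The plan is to argue vertex by vertex along the spine, exactly as in the proof of Lemma~\ref{lem:fullspinemove}. We partition $P(T)$ into $V(\Phi_1)\cup\{x_i\}\cup V(\Phi_2)\cup\{x_j\}\cup V(\Phi_3)$. For each spine vertex $y$ we must show two things: first, that $y$ ends with exactly one guard if $y\in C_2$, and with none otherwise; second, that every guard leaving the spine or entering it from the cycle is accounted for. We use the characterization from Algorithm~\ref{algo:main}: for $y\neq r$, $y\in C_2$ iff $p(y)\in C_1$, and $r\in C_2$ always (Observation~\ref{obs:f_in_c1}).

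\textbf{Vertices inside $\Phi_1$ and $\Phi_3$.} Here the hole-child exchange acts exactly as in Lemma~\ref{lem:fullspinemove}, so the same Case~1/Case~2 argument applies. Only the boundary vertices need extra care, namely the top of $\Phi_1$ (the child of $x_i$) and the bottom of $\Phi_3$ (the parent of $x_j$), because their neighbours outside the subpath behave differently.

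\begin{itemize}
\item \emph{Child $c(x_i)$.} Its parent $x_i\in C_1$, so it should end occupied. If $c(x_i)\in C_1$, it keeps its guard, since the hole-child exchange only moves a guard from $u$ to a hole parent $v$, and its parent is not a hole. If $c(x_i)\notin C_1$, it receives a guard from its own child.
\item \emph{Vertex $x_i$.} It loses its guard to the cycle by (i). It receives one only if $x_i\notin C_2$ would be violated. Since $x_i\neq r$, we have $x_i\in C_2$ iff $p(x_i)\in C_1$. If $p(x_i)=x_j\notin C_1$, or $x_j=r\notin C_1$, then $x_i\notin C_2$, consistent with it ending empty. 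Otherwise $p(x_i)\in\Phi_2$, and the shift to child moves the guard of $p(x_i)$, if present, onto $x_i$. So $x_i$ ends occupied iff $p(x_i)\in C_1$, as required.
\item \emph{Vertices of $\Phi_2$.} Each such $y$ ends occupied iff its parent (in $\Phi_2$, or equal to $x_j$) had a guard that was shifted down. For the parent $x_j$ this is step (iv). This yields $y\in C_2 \iff p(y)\in C_1$.
\item \emph{Vertex $x_j$.} It gets a guard from the cycle by (v) and sends away its own guard, if any, by (iv). So it ends with exactly one guard. We need $x_j\in C_2$: if $x_j=r$ this holds by Observation~\ref{obs:f_in_c1}. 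Otherwise $x_j\notin C_1$, so $p(x_j)\in C_1$ because $C_1$ is a vertex cover, and hence $x_j\in C_2$.
\item \emph{Parent $p(x_j)$, the bottom of $\Phi_3$.} Under hole-child exchange it would pull a guard from $x_j$ if it were a hole. But $x_j\notin C_1$ in that case, and then $p(x_j)\in C_1$, so it is not a hole. Also $p(x_j)\neq r$ or not, and it keeps its guard unless its own parent is a hole. This matches the rule $p(x_j)\in C_2\iff p(p(x_j))\in C_1$, applied as in Lemma~\ref{lem:fullspinemove}.
\end{itemize}

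\textbf{No double occupancy.} We must also check that no vertex receives two guards. Guards arriving from below come only via the hole-child exchange into holes, so the receiving vertex was empty. Guards arriving from above come via the shift to child inside $\Phi_2$, where the source vertex also vacates. The main obstacle is exactly this boundary bookkeeping at $x_i$, $c(x_i)$, $x_j$ and $p(x_j)$, together with the degenerate cases $x_i=f$ ($\Phi_1=\emptyset$), $x_j=r$, and $p(x_i)=x_j$. I will handle these degenerate cases explicitly, first checking $r$ using the fact that $r\in C_2$.

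\textbf{Second part.} The second bullet follows by the same symmetric role exchange used for Lemma~\ref{lem:fullspinemove}: swap $C_1\leftrightarrow C_2$, $r\leftrightarrow f$, parent $\leftrightarrow$ child, $x_i\leftrightarrow x_j$, and hole-child $\leftrightarrow$ hole-parent. Here we use that $y\in C_1$ iff $c(y)\in C_2$ for $y\neq f$, together with $f\in C_1$.
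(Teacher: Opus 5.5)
Your proposal follows the paper's proof: a vertex-by-vertex check that $r$ ends with exactly one guard and each other spine vertex $y$ ends with exactly one guard iff $p(y)\in C_1$. The second part then follows by the $C_1/C_2$, $r/f$, parent/child symmetry. The only difference is bookkeeping: the paper indexes its cases by where $p(x_t)$ lies ($\Phi_1$, $x_i$, $\Phi_2$, $x_j$, $\Phi_3$), while you index by where $x_t$ itself lies.

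\textbf{A slip in your bullet for $x_i$.} The sentence ``otherwise $p(x_i)\in\Phi_2$'' omits the case $p(x_i)=x_j=r\in C_1$. The hypothesis ``$x_j\notin C_1$ or $x_j=r$'' allows this case. There $\Phi_2=\emptyset$, and $x_i$ receives $r$'s guard through step (iv). Also, the clause ``$x_j=r\notin C_1$ gives $x_i\notin C_2$'' is valid only when $p(x_i)=x_j$.

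\textbf{The repair.} Observe that $x_i$ can receive a guard only from its parent, either by (iv) or by the shift to child on $\Phi_2$. It never receives one from $c(x_i)$, because the hole-child exchange acts only on holes inside $\Phi_1$. Its own guard leaves by (i). Hence $x_i$ ends occupied, with exactly one guard, iff $p(x_i)\in C_1$. The paper makes exactly this check in its Case 4.
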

\begin{proof}
Let $C_1$ be the starting configuration. Suppose we perform the movements as mentioned in the first part of the lemma. It may be observed that no guard is moved more than once. We will show that a vertex $x_t$ of the spine is in the resultant configuration if and only if it is in $ C_2$. We will also show that in the resultant configuration, no vertex has multiple guards.

Since $r \in C_2$, we first show that $r$ is in the resultant configuration with exactly one guard. Observe that either $r \in V(\Phi_3)$ or $r = x_j$. If $r \in V(\Phi_3)$ and $r \notin C_1$ then $r$ gets a guard from its child, by hole-child exchange. If $r \in V(\Phi_3)$ and $r \in C_1$, then by hole-child exchange, the guard on it will not move and no other guard comes to $r$.
If $r=x_j$ and $r \in C_1$ then, by the movement, the guard on $r$  moves to its child  and $r$ gets exactly one guard from $c_j$, an occupied cycle neighbour of $r$ in $C_1$.
If $r=x_j$ and $r \notin C_1$, then $r$ will have exactly one guard, the guard that moved from $c_j$. 
Thus, in all cases, $r$ is occupied in the resultant configuration and has exactly one guard. 

Henceforth, assume that $x_t \neq r$.  By Algorithm~\ref{algo:main}, $x_{t} \in C_2$ if and only if the parent of $x_t$ is in $C_1$. Let $p(x_t)$ denote the parent of $x_t$ and if $x_t \neq f$, let $c(x_t)$ denote the child of $x_t$ on the spine. We will show that $x_t$ is occupied in the resultant configuration if and only if $p(x_t) \in C_1$.
 Since $f$ has no child, $p(x_t) \neq f$.
 \begin{itemize}  
 \item \textbf{Case 1 :} $p(x_t) \in V(\Phi_1)$.
  In this case, as $p(x_t) \neq f$, $x_t \in V(\Phi_1)$.

  If $p(x_t) \notin C_1$, then $x_t$ would exist in $\Phi_1$ and $x_t \in C_1$. 
By performing hole-child exchange, guard on $x_t$ would move to $p(x_t)$ and no guard moves to $x_t$. So, as required, $x_t$ will not be occupied in the resultant configuration.

 If $p(x_t) \in C_1$, we need to show that $x_t$ is occupied with exactly one guard in the resultant configuration. 
 If $x_t \in C_1$, then the by hole-child exchange, the guard on $x_t$ will not move, as $p(x_t) \in C_1$ and no other guard will come on $x_t$. Hence, $x_t$ is occupied with exactly one guard in the resultant configuration as required.
 If $x_t \notin C_1$, then by Observation~\ref{obs:f_in_c1}, $x_t \neq f$. Hence, the child of $x_t$,  $c(x_t)$, would exist in $\Phi_1$. 
Further, since $C_1$ is a vertex cover by Claim~\ref{clm:classes_of_same_size}, $c(x_t) \in C_1$.  By hole-child exchange, the only guard that moves to $x_t$ is from $c(x_t)$. So $x_t$ will be occupied with exactly one guard in the resultant configuration as required. 

\item \textbf{Case 2 :} $p(x_t) =x_i$. Note that $x_i \in C_1$ by our choice of $x_i$. Hence, showing $x_t$ is occupied with exactly one guard in the resultant configuration will suffice.  
We have $x_t \in V(\Phi_1)$ as $p(x_t) \neq f$. If $x_t \notin C_1$, then as discussed in Case $1$, $x_t \neq f$ and its child, $c(x_t) \in V(\Phi_1) \cap C_1$. 
By hole-child exchange, the only guard that moves to $x_t$  is from $c(x_t)$. If $x_t \in C_1$, then by hole-child exchange, the guard on $x_t$ will not move because $p(x_t) \in C_1$. 
No other guard will come to $x_t$. Thus, as required, $x_t$ is occupied and has exactly one guard in the resultant configuration. 

\item \textbf{Case 3 :} $p(x_t) \in V(\Phi_2)$. 

If $p(x_t) \in C_1$, then by shift to child movement, the guard from $p(x_t)$ will move to $x_t$ in the resultant configuration. Further, no other guard moves to $x_t$. Note that this holds even when $x_t =x_i$.
If $x_t \in C_1 \cap V(\Phi_2)$, the guard on $x_t$ moves to its child, whereas if $x_t = x_i$, the guard on $x_t$ moves to one of its cycle neighbours.
Thus, $x_t$ is occupied with exactly one guard in the resultant configuration.

If $p(x_t) \notin C_1$, then $x_t \in C_1$. By the guard movements, no guard will move to $x_t$. Further, if $x_t$ is $x_i$, the guard on $x_t$ will move to one of its cycle neighbours. Otherwise, the guard on $x_t$ will move to $c(x_t)$ by the shift to child movement. Thus, as required, in the resultant configuration $x_t$ will have no guards. 

\item \textbf{Case 4 :} $p(x_t) =x_j$. 

If $x_j\in C_1$, the guard from $x_j$ moves to its child $x_t$ during the movement. Now, if $x_t =x_i$, by guards movements, the guard from $x_i$ moves to  one of its cycle neighbours and no other guard moves to $x_i$ other than the guard from $p(x_t)=x_j$. If $x_t \neq x_i$, then $x_t \in \Phi_2$. By shift to child movement, if $x_t \in C_1$, then the guard on $x_t$ moves to $c(x_t)$ and the only guard that comes to $x_t$ is from $p(x_t)=x_j$. If $x_t \notin C_1$, then also no other guard comes to $x_t$ other than the guard from $p(x_t)$. Thus, as needed, $x_t$ has exactly one guard in the resultant configuration.

If $p(x_t)=x_j \notin C_1$, then we need to show that $x_t$ is not occupied in the resultant  configuration.
Since $C_1$ is a vertex cover, $x_t \in C_1$. 
If $x_t = x_i$, by the guard movements, the guard from $x_i$ moves to one of its cycle neighbours and no guard moves to $x_i$. If $x_t \neq x_i$, then $x_t \in \Phi_2$. By shift to child movement,  the guard on $x_t$ moves to $c(x_t)$ and no other guard comes to $x_t$. Thus, as needed, $x_t$ is not occupied in the resultant configuration.

Note that these arguments go through even when $x_j =r$.

\item \textbf{Case 5 :} $p(x_t) \in V(\Phi_3)$. Recall that by definition of $\Phi_3$, $\Phi_3 = \emptyset$ whenever $x_j =r$. Hence, $x_j \neq r$. By our choice of $x_j$, $x_j \notin C_1$. Now, there are two possibilities: either $x_t = x_j$ or $x_t \in V(\Phi_3)$.

If $x_t = x_{j}$ then $p(x_j) \in C_1$.
In this case, we want to show that $x_j$ has exactly one guard in the resultant configuration.
Now, by the guard movements defined in the lemma, $x_j$ gets a guard from one of its cycle neighbours.
Since, hole-child exchange is perform on $\Phi_3$, there is no guard movement from $p(x_j)$ to $x_j$. Also, no guard can move from $c(x_j)$ to $x_j$.
Hence $x_j$ is occupied by exactly one guard in the resultant configuration. 

Now, consider the case when $x_t \in \Phi_3$. If $p(x_t) \in C_1$ and $x_t \in C_1$ then hole-child exchange movement, no guard moves to $x_t$ and the guard on $x_t$ remains on $x_t$.
If $p(x_t) \in C_1$ and $x_t \notin C_1$, then $c(x_t) \in C_1$. This implies $c(x_t) \neq x_j$. Now, by hole-child exchange movement, $x_t$ gets a guard from its child and no other guard comes to $x_t$.
Hence, whenever $p(x_t) \in C_1$, $x_t$ gets exactly one guard in the resultant configuration after the movements.  If $p(x_t) \notin C_1$, then $x_t \in C_1$. By the hole-child exchange movement, the guard from $x_t$ moves to $p(x_t)$ and no other guard comes to $x_t$. Hence, as required, $x_t$ has no guard in the resultant configuration.

Note that these arguments go through even when $p(x_t)=r$.

 \end{itemize}
Hence, in all cases, the resultant guard positions on the spine are the same as those in $C_2$ given by Algorithm~\ref{algo:main}. 
This completes the proof of the first part of the lemma.

For the second part of the lemma, the proof is symmetric with the following role exchanges: 
\begin{itemize}
    \item $C_1$ with $C_2$
    \item $r$ with $f$
    \item $\Phi_1$ with $\Phi_3$
    \item $x_i$ with $x_j$
    \item  hole-child exchange movement with hole-parent exchange movement 
    \item shift to child with shift to parent
    \item parent with child 
    \item $c(v)$ with $p(v)$ for any vertex $v$.
\end{itemize}

For the sake of completeness, we give the proof of the second part of the lemma in Appendix.

\end{proof}

Refer to Figure~\ref{fig:splitspinemove} for an illustration of the movements mentioned in Lemma~\ref{lem:splitspinemove}.
\begin{figure}
    \centering
\begin{minipage}{\textwidth}
    \centering
    \scriptsize
    \vspace{-0.5cm}
\begin{tikzpicture}[scale=0.9,
    label distance= -0.1cm,
    every node/.style={circle,fill=black,inner sep=2pt}]

\node[fill=darkgreen] (x) at (0,2) {};

\node[fill=darkgreen] (ci) at (8,2) {};

\node[fill=darkgreen] (a) at (13,2) {};
\node[] (y) at (0,0) {};
\node[label=below:$x_j$] (xj) at (8,1) {};
\node[fill=darkgreen,label=above:$x_i$] (xi) at (4,1) {};
\node[fill=darkgreen] (z) at (1,1) {};
\node[] (cj) at (4,0) {};
\node[fill=darkgreen ] (b) at (13,0) {};

\node[fill=darkgreen] (a1) at (2,1) {};
\node[] (a2) at (3,1) {};
\node[] (a3) at (5,1) {};
\node[fill=darkgreen] (a4) at (6,1) {};
\node[fill=darkgreen] (a5) at (7,1) {};
\node[fill=darkgreen] (a6) at (9,1) {};
\node[] (a7) at (10,1) {};
\node[fill=darkgreen] (a8) at (11,1) {};

\node[circle=none,draw=none,fill=none] at ($(z)+(0,-0.3)$) {\scriptsize \color{red} $\times$};

\node[] (c) at (12,1) {};


\draw[dashed] (x)--(y);

\draw (x)--(z);
\draw (y)--(z);



\draw (c)--(a);
\draw[dashed] (b)--(a);
\draw (c)--(b);

\draw[dashed] (x)--(ci);
\draw[dashed] (ci)--(a);
\draw[dashed] (b)--(cj);
\draw[dashed] (cj)--(y);

\draw (cj)--(xi);
\draw (z)--(a1);
\draw (a1)--(a2);
\draw (xi)--(a2);
\draw (a3)--(xi);
\draw (a3)--(a4);
\draw (a5)--(a4);
\draw (xj)--(a5);
\draw (xj)--(a6);
\draw (a7)--(a6);
\draw (a7)--(a8);
\draw (c)--(a8);
\draw (ci)--(xj);




\node[circle=none,draw=none,fill=none] at ($(a1)+(0.2, 0.5)$) {\tiny $\Phi_1 :$Hole-child exchange};

\node[circle=none,draw=none,fill=none] at ($(a4)+(0, 0.5)$) {\tiny $\Phi_2 :$Shift to child};

\node[circle=none,draw=none,fill=none] at ($(a7)+(0.5, 0.5)$) {\tiny $\Phi_3 :$Hole-child exchange};


\draw[dashed,blue]
($(z)+(-0.1,0.2)$) -- ($(z)+(-0.1,0.3)$) -- ($(a2)+(0.1,0.3)$) -- ($(a2)+(0.1,0.2)$) ; 

\draw[dashed,blue]
($(a3)+(-0.1,0.2)$) -- ($(a3)+(-0.1,0.3)$) -- ($(a5)+(0.1,0.3)$) -- ($(a5)+(0.1,0.2)$) ; 

\draw[dashed,blue]
($(a6)+(-0.1,0.2)$) -- ($(a6)+(-0.1,0.3)$) -- ($(c)+(0.1,0.3)$) -- ($(c)+(0.1,0.2)$) ; 

\draw[thick,dashed,red,-{Triangle}]
($(ci)+(0.2,-0.1)$) to[out=270,in=90] ($(ci)+(0.2,-0.8)$) ;

\draw[thick,dashed,red,-{Triangle}]
  ($(cj)+(0.2,0.8)$) to[out=270,in=90]  ($(cj)+(0.2,0.1)$);


\draw[thick,dashed,red,-{Triangle}]
($(a1)+(0.1,-0.25)$) -- ($(a2)+(-0.1,-0.25)$) ;

\draw[thick,dashed,red,-{Triangle}]
($(a4)+(-0.1,-0.25)$) -- ($(a3)+(0.1,-0.25)$) ;

\draw[thick,dashed,red,-{Triangle}]
($(a5)+(-0.1,-0.25)$) -- ($(a4)+(0.1,-0.25)$) ;

\draw[thick,dashed,red,-{Triangle}]
($(a6)+(0.1,-0.25)$) -- ($(a7)+(-0.1,-0.25)$) ;

\draw[thick,dashed,red,-{Triangle}]
($(a8)+(0.1,-0.25)$) -- ($(c)+(-0.1,-0.25)$) ;
  
\end{tikzpicture}
\begin{center}
\scriptsize
\vspace{-0.3cm}
{Configuration $C_1$}
\end{center}
\vspace{0.2cm}
\end{minipage}  

\begin{minipage}{\textwidth}
    \centering
    \scriptsize
\begin{tikzpicture}[scale=0.9,
    label distance= -0.1cm,
    every node/.style={circle,fill=black,inner sep=2pt}]

\node[fill=darkgreen] (x) at (0,2) {};

\node[] (ci) at (8,2) {};

\node[fill=darkgreen] (a) at (13,2) {};
\node[] (y) at (0,0) {};
\node[fill=darkgreen,label=below:$x_j$] (xj) at (8,1) {};
\node[label=above:$x_i$] (xi) at (4,1) {};
\node[fill=darkgreen] (z) at (1,1) {};
\node[fill=darkgreen] (cj) at (4,0) {};
\node[fill=darkgreen ] (b) at (13,0) {};

\node[] (a1) at (2,1) {};
\node[fill=darkgreen] (a2) at (3,1) {};
\node[fill=darkgreen] (a3) at (5,1) {};
\node[fill=darkgreen] (a4) at (6,1) {};
\node[] (a5) at (7,1) {};
\node[] (a6) at (9,1) {};
\node[fill=darkgreen] (a7) at (10,1) {};
\node[] (a8) at (11,1) {};

\node[fill=darkgreen] (c) at (12,1) {};


\draw[dashed] (x)--(y);

\draw (x)--(z);
\draw (y)--(z);



\draw (c)--(a);
\draw[dashed] (b)--(a);
\draw (c)--(b);

\draw[dashed] (x)--(ci);
\draw[dashed] (ci)--(a);
\draw[dashed] (b)--(cj);
\draw[dashed] (cj)--(y);

\draw (cj)--(xi);
\draw (z)--(a1);
\draw (a1)--(a2);
\draw (xi)--(a2);
\draw (a3)--(xi);
\draw (a3)--(a4);
\draw (a5)--(a4);
\draw (xj)--(a5);
\draw (xj)--(a6);
\draw (a7)--(a6);
\draw (a7)--(a8);
\draw (c)--(a8);
\draw (ci)--(xj);




\node[circle=none,draw=none,fill=none] at ($(a1)+(0.2, 0.5)$) {\tiny $\Phi_1$};

\node[circle=none,draw=none,fill=none] at ($(a4)+(0, 0.5)$) {\tiny $\Phi_2 $};

\node[circle=none,draw=none,fill=none] at ($(a7)+(0.5, 0.5)$) {\tiny $\Phi_3 $};


\draw[dashed,blue]
($(z)+(-0.1,0.2)$) -- ($(z)+(-0.1,0.3)$) -- ($(a2)+(0.1,0.3)$) -- ($(a2)+(0.1,0.2)$) ; 

\draw[dashed,blue]
($(a3)+(-0.1,0.2)$) -- ($(a3)+(-0.1,0.3)$) -- ($(a5)+(0.1,0.3)$) -- ($(a5)+(0.1,0.2)$) ; 

\draw[dashed,blue]
($(a6)+(-0.1,0.2)$) -- ($(a6)+(-0.1,0.3)$) -- ($(c)+(0.1,0.3)$) -- ($(c)+(0.1,0.2)$) ; 
\end{tikzpicture}
\begin{center}
\vspace{-0.3cm}
{Resultant configuration}
\end{center}
\end{minipage}  
    \caption{Guard movements on spine vertices while reconfiguring from $C_1$ as mentioned in Lemma~\ref{lem:splitspinemove}. A {\color{red}$\times$} symbol indicates that the guard on that vertex does not move. }
    \label{fig:splitspinemove}
\end{figure}

\begin{lemma}\label{lem:spiltcyclemove}
    Let $\mathcal{C}=\{C_1,C_2\}$ be the eternal vertex cover class returned by Algorithm~\ref{algo:main}. Let $c_i$ and $c_j$ be two distinct vertices in $V(C)$. Let $\Psi_1$ and $\Psi_2$ be two vertex disjoint subpaths of $C$ as defined below.
    \begin{equation*}
        \Psi_1 = \begin{cases}
            \emptyset, &\text{if } i+1 =j \\
            c_{i+1},   &\text{if } i+1 =j-1 \\
            \text{anticlockwise path from }c_{i+1} \text{ to } c_{j-1}, &\text{otherwise.}
        \end{cases}
    \end{equation*}
    \begin{equation*}
        \Psi_2 = \begin{cases}
            \emptyset, &\text{if } j+1 =i \\
            c_{j+1},   &\text{if } j+1 =i-1 \\
            \text{anticlockwise path from }c_{j+1} \text{ to } c_{i-1}, &\text{otherwise.}
        \end{cases}
    \end{equation*}
    \begin{itemize}
        \item  Starting from $C_1$, with $c_j \in C_1$ and $c_i \notin C_1$, suppose that in a movement (i) $c_i$ is getting a guard from its spine neighbour, (ii) an anticlockwise shift is applied on $\Psi_1$, (iii) a restricted clockwise shift is applied on $\Psi_2$ and (iv) the guard from $c_j$ is moved to its spine neighbour. Then the resultant guard positions on $V(C)$ will be same as they are in the configuration $C_2$.
        \item     Starting from $C_2$, with $c_j \in C_2$ and $c_i \notin C_2$, suppose that in a movement (i) $c_i$ is getting a guard from its spine neighbour, (ii) a clockwise shift is applied on $\Psi_2$, (iii) a restricted anticlockwise shift is applied on $\Psi_1$ and (iv) the guard from $c_j$ is moved to its spine neighbour. Then the resultant guard positions on $V(C)$ will be same as they are in the configuration $C_1$.
    \end{itemize}
\end{lemma}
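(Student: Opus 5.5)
We verify the statement vertex by vertex on $V(C)$. The target is the characterisation $C_2\cap V(C)=\{c_{t+1}\mid c_t\in C_1\cap V(C)\}$ from the definition of $Q_2$ in Algorithm~\ref{algo:main}, and its inverse $c_t\in C_1$ iff $c_{t+1}\in C_2$. For the first part, start from $C_1$ and split the cycle into $c_i$, $\Psi_1$, $c_j$, $\Psi_2$. We must show that after the movement each $c_{t+1}$ has exactly one guard when $c_t\in C_1$, and none otherwise.

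\textbf{The vertices $c_i$ and $c_j$.}
\begin{itemize}
\item For $c_i$: since $c_i\notin C_1$ and $C_1$ is a vertex cover, $c_{i-1}\in C_1$. So $c_i$ should be occupied in $C_2$. It receives exactly one guard, from its spine neighbour. No guard arrives from $c_{i-1}$, because $c_{i-1}$ is either the last vertex of $\Psi_2$ (a restricted clockwise shift never moves a guard forward) or equals $c_j$, whose guard goes to the spine. No guard arrives from $c_{i+1}$ either: a guard on $c_{i+1}\in\Psi_1$ moves anticlockwise, and if $c_{i+1}=c_j$ its guard goes to the spine.
\item For $c_j$: its guard leaves to the spine. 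It receives a guard from $c_{j-1}$ exactly when $c_{j-1}$ is occupied and lies on $\Psi_1$, or when $c_{j-1}=c_i$. In the latter case we must check whether $c_i$'s new guard also shifts; it does not, since $c_i$ is not on $\Psi_1$. So we need $c_j\in C_2 \iff c_{j-1}\in C_1$. This holds directly on $\Psi_1$. The case $c_{j-1}=c_i\notin C_1$ makes $c_j$ unoccupied, which is correct.
\end{itemize}

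\textbf{Interior vertices.}
\begin{itemize}
\item On $\Psi_1$ the anticlockwise shift reproduces the pattern exactly, as in Observation~\ref{obs:fullcyclemove}. Each $c_{t+1}$ with $c_t\in\Psi_1\cap C_1$ gets the guard from $c_t$, and each vertex's old guard departs.
\item On $\Psi_2$ we use Observation~\ref{obs:fullcyclemove}'s restricted clockwise shift fact: on a full cycle, a restricted clockwise shift from $C_1$ yields the same positions as the anticlockwise shift. The point is that in $C_1$ the unoccupied cycle vertices are isolated, since $C_1\cap V(C)$ is a vertex cover of $C$. A maximal occupied run $c_a,\dots,c_b$ then has only $c_a$ moving, to $c_{a-1}$. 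The resulting set is $\{c_{a-1},\dots,c_{b-1}\}$ shifted from $\{c_a,\dots,c_b\}$, which is the same pattern as the rotation when the gap preceding the run has size one.
\end{itemize}

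\textbf{Main obstacle: the boundaries of $\Psi_2$.} This local argument works inside $\Psi_2$, and the care is needed at its two ends.
\begin{itemize}
\item At the end $c_{j+1}$: if $c_{j+1}$ is occupied and $c_j\in C_1$, the restriction keeps $c_{j+1}$'s guard in place. This matches the requirement that $c_{j+1}\in C_2$ since $c_j\in C_1$. However, we must also check that $c_j$'s departure does not matter, and that when $c_{j+1}\notin C_1$ the vertex $c_{j+1}$ should be occupied in $C_2$. It is occupied, because $c_{j+2}$ is occupied (no two consecutive holes) and moves back into $c_{j+1}$.
\item At the end $c_{i-1}$: $c_{i-1}$ is occupied in $C_1$ and should be occupied in $C_2$ iff $c_{i-2}\in C_1$. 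If $c_{i-2}$ is empty, $c_{i-1}$ moves into it and $c_{i-1}$ becomes empty, which is correct. Otherwise $c_{i-1}$ stays, which is also correct.
\end{itemize}
We also handle the degenerate cases where $\Psi_1$ or $\Psi_2$ is empty or a single vertex, noting that the case $j+1=i-1$ is covered by the same local rules. Finally, no vertex receives two guards, because every move in a shift goes to a distinct target, and the targets of the spine move and of the cycle moves are disjoint, as checked above.

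\textbf{Second part.} This is the mirror image with $C_1\leftrightarrow C_2$, clockwise $\leftrightarrow$ anticlockwise, and $\Psi_1\leftrightarrow\Psi_2$. It uses the inverse relation and the fact that $C_2\cap V(C)$ also has no two consecutive holes, since it is a rotation of $C_1\cap V(C)$. We would state it by symmetry, as the paper does for Lemmas~\ref{lem:fullspinemove} and~\ref{lem:splitspinemove}.
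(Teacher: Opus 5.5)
Your plan is the paper's: check vertex by vertex that each $c_t$ ends with exactly one guard if $c_{t-1}\in C_1$ and none otherwise, treating $c_i$, $c_j$, $\Psi_1$ and $\Psi_2$ separately. Your arguments for $c_i$, $c_j$, $\Psi_1$ and the two ends $c_{j+1}$, $c_{i-1}$ of $\Psi_2$ are correct. The one wrong step is the run argument for the interior of $\Psi_2$.

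Take a maximal occupied run $c_a,\dots,c_b$ with holes at $c_{a-1}$ and $c_{b+1}$. Under a restricted clockwise shift its guards end up on $\{c_{a-1},c_{a+1},\dots,c_b\}$, not on $\{c_{a-1},\dots,c_{b-1}\}$. The set you wrote is the clockwise image of the run. If every run moved that way, every hole would move one step clockwise, which is the opposite of what $Q_2$ requires. What actually happens is that $c_a$ fills $c_{a-1}$ and $c_{b+2}$ fills $c_{b+1}$. So each hole advances one step anticlockwise, and the new run is $c_{a+1},\dots,c_{b+1}$, made of stationary guards plus the guard arriving from $c_{b+2}$. Your conclusion (Observation~\ref{obs:fullcyclemove} localised to $\Psi_2$) is right, but the justification you give does not establish it.

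The cleanest repair is the local three-case check of the paper's Case 4, which covers all of $V(\Psi_2)$ including both ends. First, no guard ever enters $c_t\in V(\Psi_2)$ from $c_{t-1}$. Indeed, $c_{t-1}$ either lies on $\Psi_2$, where guards only move clockwise, or equals $c_j$, whose guard goes to the spine.
\begin{itemize}
\item If $c_{t-1},c_t\in C_1$, the guard on $c_t$ stays. The guard on $c_{t+1}$ cannot enter, because $c_t$ is occupied.
\item If $c_{t-1}\notin C_1$, then $c_t\in C_1$. Its guard leaves to $c_{t-1}$ and nothing enters.
\item If $c_t\notin C_1$, then $c_{t-1},c_{t+1}\in C_1$. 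Also $c_{t+1}\neq c_i$, since $c_t$ and $c_i$ cannot be consecutive holes. So $c_{t+1}\in V(\Psi_2)$ and its guard enters.
\end{itemize}
In the first case the paper appeals to the absence of three consecutive occupied cycle vertices. As your approach implicitly recognises, the restriction alone suffices, so only the vertex-cover property of $C_1$ is needed. With this replacement, and the mirrored argument for the second part, your proof is complete.
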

 \begin{figure}[h]
   \hspace{0.1cm}
    \begin{minipage}{\textwidth}
    \scriptsize
    \centering
    \vspace{-1.5cm}
\begin{tikzpicture}[scale=0.9,
    label distance= -0.1cm,
    every node/.style={circle,fill=black,inner sep=2pt}]

\node[] (x) at (0,2) {};
\node[fill=darkgreen] (z) at (1,1) {};
\node[fill=darkgreen] (y) at (0,0) {};

\node[] (a) at (13,2) {};
\node[fill=darkgreen] (c) at (12,1) {};
\node[fill=darkgreen] (b) at (13,0) {};

\node[fill=darkgreen,label=below:$c_{i+1}$] (ci+1) at (7,2) {};
\node[label=above:$c_i$] (ci) at (8,2) {};
\node[fill=darkgreen,label=below:$c_{i-1}$] (ci-1) at (9,2) {};

\node[fill=darkgreen] (xj) at (8,1) {};
\node[] (xi) at (4,1) {};
\node[fill=darkgreen,label=above:$c_{j-1}$] (cj-1) at (3,0) {};
\node[fill=darkgreen,label=below:$c_j$] (cj) at (4,0) {};
\node[label=above:$c_{j+1}$] (cj+1) at (5,0) {};

\node[fill=darkgreen] (a1) at (2,2) {};
\node[] (a2) at (3,2) {};
\node[fill=darkgreen] (a3) at (4,2) {};
\node[fill=darkgreen] (a4) at (5,2) {};
\node[] (a5) at (6,2) {};
\node[] (a6) at (10,2) {};
\node[fill=darkgreen] (a7) at (11,2) {};
\node[] (a8) at (11,0) {};
\node[fill=darkgreen] (a9) at (10,0) {};
\node[fill=darkgreen] (a10) at (9,0) {};
\node[] (a11) at (8,0) {};
\node[fill=darkgreen] (a12) at (7,0) {};
\node[fill=darkgreen] (a13) at (6,0) {};
\node[] (a14) at (2,0) {};

\draw[] (x)--(z);
\draw[] (x)--(y);
\draw[] (y)--(z);

\draw[] (a)--(b);
\draw[] (a)--(c);
\draw[] (c)--(b);

\draw[] (x)--(a1);
\draw[] (a1)--(a2);
\draw[] (a3)--(a2);
\draw[] (a3)--(a4);
\draw[] (a5)--(a4);
\draw[] (a5)--(ci+1);
\draw[] (ci+1)--(ci);
\draw[] (ci-1)--(ci);
\draw[] (ci-1)--(a6);
\draw[] (a6)--(a7);
\draw[] (a)--(a7);
\draw[] (b)--(a8);
\draw[] (a9)--(a8);
\draw[] (a10)--(a9);
\draw[] (a11)--(a10);
\draw[] (a11)--(a12);
\draw[] (a13)--(a12);
\draw[] (a13)--(cj+1);
\draw[] (cj)--(cj+1);
\draw[] (cj-1)--(cj);
\draw[] (a14)--(cj-1);
\draw[] (a14)--(y);
\draw[dashed] (z)--(xi);
\draw[dashed] (xi)--(xj);
\draw[dashed] (c)--(xj);
\draw[] (xi)--(cj);
\draw[] (xj)--(ci);

\draw[thick,dashed,blue]
($(ci+1)+(0.2,0.1)$) -- ($(ci+1)+(0.2,0.3)$) -- ($(x)+(-0.3,0.3)$) -- ($(y)+(-0.3,-0.3)$) -- ($(cj-1)+(0.2,-0.3)$)--($(cj-1)+(0.2,-0.1)$); 

\draw[thick,dashed,blue]
($(ci-1)+(-0.2,0.1)$) -- ($(ci-1)+(-0.2,0.3)$) -- ($(a)+(0.3,0.3)$) -- ($(b)+(0.3,-0.3)$) -- ($(cj+1)+(-0.2,-0.3)$)--($(cj+1)+(-0.2,-0.1)$); 


\node[circle=none,draw=none,fill=none] at ($(a2)+(0.2, 0.55)$) {\scriptsize $\Psi_1 :$Anticlockwise shift};

\node[circle=none,draw=none,fill=none] at ($(a7)+(0.55, 0.55)$) {\scriptsize $\Psi_2 :$Restricted clockwise shift};

\draw[thick,dashed,red,-{Triangle}]
($(cj)+(0.2,0.2)$) -- ($(xi)+(0.2,-0.2)$) ;

\draw[thick,dashed,red,-{Triangle}]
($(xj)+(0.2,0.2)$) -- ($(ci)+(0.2,-0.2)$) ;

\draw[thick,dashed,red,-{Triangle}]
($(ci+1)+(-0.2,-0.2)$) -- ($(a5)+(0.2,-0.2)$) ;

\draw[thick,dashed,red,-{Triangle}]
($(a4)+(-0.2,-0.2)$) -- ($(a3)+(0.2,-0.2)$) ;

\draw[thick,dashed,red,-{Triangle}]
($(a3)+(-0.2,-0.2)$) -- ($(a2)+(0.2,-0.2)$) ;

\draw[thick,dashed,red,-{Triangle}]
($(a1)+(-0.2,-0.2)$) -- ($(x)+(0.5,-0.2)$) ;

\draw[thick,dashed,red,-{Triangle}]
($(y)+(0.4,0.2)$) -- ($(a14)+(-0.4,0.2)$) ;

\draw[thick,dashed,red,-{Triangle}]
($(cj-1)+(0.2,0.2)$) -- ($(cj)+(-0.2,0.2)$) ;

\draw[thick,dashed,red,-{Triangle}]
($(ci-1)+(0.2,-0.2)$) -- ($(a6)+(-0.2,-0.2)$) ;

\draw[thick,dashed,red,-{Triangle}]
($(a7)+(0.2,-0.2)$) -- ($(a)+(-0.5,-0.2)$) ;

\draw[thick,dashed,red,-{Triangle}]
($(b)+(-0.4,0.2)$) -- ($(a8)+(0.4,0.2)$) ;

\node[circle=none,draw=none,fill=none] at ($(a9)+(0,0.3)$) {\scriptsize \color{red} $\times$};

\draw[thick,dashed,red,-{Triangle}]
($(a10)+(-0.2,0.2)$) -- ($(a11)+(0.2,0.2)$) ;

\node[circle=none,draw=none,fill=none] at ($(a12)+(0,0.3)$) {\scriptsize \color{red} $\times$};

\draw[thick,dashed,red,-{Triangle}]
($(a13)+(-0.2,0.2)$) -- ($(cj+1)+(0.2,0.2)$) ;
 
\end{tikzpicture}
\vspace{-0.4cm}
\begin{center}
\scriptsize
{ Configuration $C_1$}
\end{center}
\vspace{0.1cm}
    \end{minipage}
    
        \begin{minipage}{\textwidth}
    \scriptsize
    \centering
\begin{tikzpicture}[scale=0.9,
    label distance= -0.1cm,
    every node/.style={circle,fill=black,inner sep=2pt}]

\node[fill=darkgreen] (x) at (0,2) {};
\node[fill=darkgreen] (z) at (1,1) {};
\node[] (y) at (0,0) {};

\node[fill=darkgreen] (a) at (13,2) {};
\node[fill=darkgreen] (c) at (12,1) {};
\node[] (b) at (13,0) {};

\node[label={[yshift=3pt]below:$c_{i+1}$}] (ci+1) at (7,2) {};
\node[fill=darkgreen,label=above:$c_i$] (ci) at (8,2) {};
\node[label={[yshift=3pt]below:$c_{i-1}$}] (ci-1) at (9,2) {};

\node[] (xj) at (8,1) {};
\node[fill=darkgreen] (xi) at (4,1) {};
\node[label={[yshift=-3pt]above:$c_{j-1}$}] (cj-1) at (3,0) {};
\node[fill=darkgreen,label=below:$c_j$] (cj) at (4,0) {};
\node[fill=darkgreen,label={[yshift=-3pt]above:$c_{j+1}$}] (cj+1) at (5,0) {};

\node[] (a1) at (2,2) {};
\node[fill=darkgreen] (a2) at (3,2) {};
\node[fill=darkgreen] (a3) at (4,2) {};
\node[] (a4) at (5,2) {};
\node[fill=darkgreen] (a5) at (6,2) {};
\node[fill=darkgreen] (a6) at (10,2) {};
\node[] (a7) at (11,2) {};
\node[fill=darkgreen] (a8) at (11,0) {};
\node[fill=darkgreen] (a9) at (10,0) {};
\node[] (a10) at (9,0) {};
\node[fill=darkgreen] (a11) at (8,0) {};
\node[fill=darkgreen] (a12) at (7,0) {};
\node[] (a13) at (6,0) {};
\node[fill=darkgreen] (a14) at (2,0) {};

\draw[] (x)--(z);
\draw[] (x)--(y);
\draw[] (y)--(z);

\draw[] (a)--(b);
\draw[] (a)--(c);
\draw[] (c)--(b);

\draw[] (x)--(a1);
\draw[] (a1)--(a2);
\draw[] (a3)--(a2);
\draw[] (a3)--(a4);
\draw[] (a5)--(a4);
\draw[] (a5)--(ci+1);
\draw[] (ci+1)--(ci);
\draw[] (ci-1)--(ci);
\draw[] (ci-1)--(a6);
\draw[] (a6)--(a7);
\draw[] (a)--(a7);
\draw[] (b)--(a8);
\draw[] (a9)--(a8);
\draw[] (a10)--(a9);
\draw[] (a11)--(a10);
\draw[] (a11)--(a12);
\draw[] (a13)--(a12);
\draw[] (a13)--(cj+1);
\draw[] (cj)--(cj+1);
\draw[] (cj-1)--(cj);
\draw[] (a14)--(cj-1);
\draw[] (a14)--(y);
\draw[dashed] (z)--(xi);
\draw[dashed] (xi)--(xj);
\draw[dashed] (c)--(xj);
\draw[] (xi)--(cj);
\draw[] (xj)--(ci);

\end{tikzpicture}
\vspace{-0.4cm}
\begin{center}
{ Resultant configuration}
\end{center}
    \end{minipage}
    \caption{Guard movements on the cycle while reconfiguring from $C_1$ as mentioned in Lemma~\ref{lem:spiltcyclemove}. A {\color{red}$\times$} symbol indicates that the guard on that vertex does not move.}
    \label{fig:spiltcyclemove}
\end{figure}

 \begin{proof}
Let $C_1$ be the starting configuration. Suppose that we perform the movements as mentioned in the first part of the lemma. 
Since the paths $\Psi_1$ and $\Psi_2$ are vertex disjoint and vertices $c_i$ and $c_j$ do not lie on them, no guard is moved more than once. 
We need to prove that a vertex $c_t \in V(C)$ is occupied in the resultant configuration if and only if $c_t \in C_2$.  
We will also show that in the resultant configuration, every vertex has at most one guard.

By Algorithm~\ref{algo:main},  $c_{t} \in C_2$  if and only if $c_{t-1} \in C_1$.
Hence, to show that $c_t$ is occupied in the resultant configuration if and only if $c_t \in C_2$, it suffices to show that $c_{t-1} \in C_1$ if and only if $c_{t}$ is in the resultant configuration. 

Observe that by construction of $C_1$, vertices $c_{i-1},c_i$ and $c_{i+1}$ cannot be simultaneously occupied in $C_1$. Similarly, vertices $c_{j-1},c_j$ and $c_{j+1}$ cannot be simultaneously occupied in $C_1$. Further, note that $c_t$ can get a guard only from $c_{t-1},c_{t+1}$ or from its spine neighbour (the last case is possible only when $c_t = c_i$).
\begin{itemize}
    \item \textbf{Case 1:} $c_{t-1}=c_i$.
    
    By assumption, $c_i \notin C_1$. Hence, it suffices to show that $c_t =c_{i+1}$ is not occupied in the resultant configuration. Since $C_1$ is a vertex cover by Claim~\ref{clm:classes_of_same_size}, $c_{i+1} \in C_1$. Since $c_i \notin C_1$, $c_{i+1}$ can get a guard only from $c_{i+2}$.
    
    If $c_{i+1} = c_j$, then the path $\Psi_1 = \emptyset$. By the guard movements defined in the lemma, the guard from $c_j$ moves to its spine neighbour.
    Now, $c_{j+1} \in V(\Psi_2)$ and a restricted clockwise shift is performed on $\Psi_2$. 
    Since $c_j \in C_1$, even when $c_{j+1} \in C_1$, no guard will move to $c_j$. Hence, in the resultant configuration $c_{t} =c_j$ will have no guard.
    
    If $c_{i+1} \neq c_j$, then $c_{i+1} \in V(\Psi_1)$. Observe that $c_{i+2}$ is either in $V(\Psi_1)$ or $c_{i+2} = c_j$. Therefore, no guard moves from $c_{i+2}$ to $c_{i+1}$. By performing an anticlockwise shift on $\Psi_1$, the guard from $c_{i+1}$ will move to $c_{i+2}$. 
    Hence, $c_{t}=c_{i+1}$ will be unoccupied in the resultant configuration.
    
    \item \textbf{Case 2:} $c_{t-1} \in V(\Psi_1)$. 
    In this case there are two possibilities: either $c_t \in V(\Psi_1)$ or $c_t = c_j$.

    When $c_{t} = c_j$, by the guard movements defined in the lemma, the guard from $c_j$ moves to its spine neighbour. Now, either $\Psi_2 = \emptyset$ or $c_{j+1} \in V(\Psi_2)$. If $\Psi_2 = \emptyset$ then $c_{j+1} = c_i$.
    Since $c_i = c_{j+1} \notin C_1$ by our choice of $c_i$, no guard will move from $c_{j+1}$ to $c_{j}$. If $c_{j+1} \in V(\Psi_2)$, then  no guard moves from $c_{j+1}$ to $c_j$ as a restricted clockwise shift is applied on $\Psi_2$ and $c_j \in C_1$. Hence, a guard can come to $c_j$ only from $c_{j-1} = c_{t-1}$.
     An anticlockwise shift is applied on $\Psi_1$.
     If $c_{j-1} \in C_1$, then the guard from $c_{j-1}$ moves to $c_{j} = c_{t}$.  Hence, as required, $c_{t}=c_j$ has exactly one guard on it in the resultant configuration. 
    If $c_{j-1} \notin C_1$, on applying anticlockwise shift on $\Psi_1$, no guard moves from $c_{j-1}$  to $c_{j}$. Hence, as required,  $c_j = c_{t}$ has no guard in the resultant configuration.

    Now, suppose that $c_{t} \in V(\Psi_1)$. Notice that $c_{t+1} \notin V(\Psi_2)$ and so no guard can move from $c_{t+1}$ to $c_t$ by the movements defined in the lemma. If $c_{t-1} \in C_1$, then on performing the anticlockwise shift on $\Psi_1$, the guard on $c_{t-1}$ moves to $c_{t}$. If $c_{t} \in C_1$, then the guard from $c_{t}$ will move to $c_{t+1}$. Hence, as required, $c_{t}$ will have exactly one guard in the resultant configuration. If $c_{t-1} \notin C_1$, then $c_{t} \in C_1$. On performing the anticlockwise shift, the guard on $c_{t}$ will move to $c_{t+1}$ and no other guard will come on $c_{t}$. Hence, as required, $c_{t}$ will be unoccupied in the resultant configuration.

    \item \textbf{Case 3:} $c_{t-1} = c_j$. 

    By assumption, $c_j \in C_1$. Hence, it suffices to show that $c_t =c_{j+1}$ is occupied with exactly one guard in the resultant configuration. Since, the guard on $c_j$ moves to its spine neighbour, $c_{j+1}$ can get guard only from its spine neighbour (when $c_{j+1} =c_i$) or from $c_{j+2}$. 
    
    Now, there are two possibilities: either $c_t =c_{j+1}=c_i$ or $c_t \in V(\Psi_2)$.
     If $c_{j+1}=c_i$, then by our choice, $c_i \notin C_1$. By the guard movements defined in the lemma, a guard from spine is moved to $c_i$. Since, $c_{j+2}=c_{i+1} \in V(\Psi_1)$, with an anticlockwise shift on $\Psi_1$, it will not contribute any guard to $c_{j+1}=c_i$. Hence, as required, $c_i = c_{t}$ has exactly one guard in the resultant configuration.

     If $c_{j+1} \neq c_i$, then $c_t = c_{j+1} \in V(\Psi_2$). In this case, $c_{j+1}=c_t$ can get a guard only from $c_{j+2}$. If $c_{j+1} \in C_1$, then by restricted clockwise shift movement on $\Psi_2$, the guard from $c_{j+1}$ does not move as $c_{j} \in C_1$. Similarly, no guard moves from  $c_{j+2}$ to $c_{j+1}$ as $c_{j+1} \in C_1$. 
     Hence, as required, there is exactly one guard on $c_{j+1}=c_{t}$ in the resultant configuration.
     If $c_{j+1} \notin C_1$, then $c_{j+2} \in C_1$. Therefore, by our choice of $c_i$, $c_{j+2} \neq c_i$. Hence, $c_{j+2} \in V(\Psi_2)$. By the restricted clockwise shift movement on $\Psi_2$, the guard from $c_{j+2}$ will move to $c_{j+1}$. Hence, as required, $c_{j+1}=c_{t}$ will have exactly one guard in the resultant configuration.
    
    \item \textbf{Case 4:} $c_{t-1} \in V(\Psi_2)$.
    
 Notice that no guard moves from $c_{t-1}$ to $c_t$, as guard movement on $\Psi_2$ is restricted clockwise shift. Hence, $c_t$ can get a guard only from its spine neighbour (when $c_t=c_i$) or from $c_{t+1}$. 
 
 There are two possibilities: either $c_t \in V(\Psi_2)$ or $c_t = c_i$.

    When $c_{t} = c_i$, by our choice of $c_i$, $c_t= c_i \notin C_1$. This implies that $c_{t-1} \in C_1$. Hence,in this case, we need to show that $c_t$ gets exactly one guard in the resultant configuration.
    Now, $c_{t+1} =c_{i+1} \notin \Psi_2$. Hence, by the guard movements defined in the lemma, no guard moves from $c_{i+1}$ to $c_{i}$ and
     $c_i$ gets a guard from its spine neighbour. 
    Hence $c_i=c_{t}$ has exactly one guard in the resultant configuration. 

    Now consider the case when $c_t \in  V(\Psi_2)$.  
    If $c_{t-1} \in C_1$ and $c_{t} \in C_1$, then by our choice of $C_1$, $c_{t+1} \notin C_1$. Hence no guard can come to $c_t$ from $c_{t+1}$. Further, on performing the restricted clockwise shift on $\Psi_2$, the guard on $c_{t}$ will not move. Hence, as required, in the resultant configuration, $c_{t}$ has exactly one guard.
    If $c_{t-1} \in C_1$ and $c_{t} \notin C_1$, then $c_{t+1} \in C_1$.
    By our choice of $c_i$, $c_{t+1} \neq c_i$ and so $c_{t+1} \in \Psi_2$.
    Further, on performing the restricted clockwise shift on $\Psi_2$, the guard on $c_{t+1}$ will move $c_{t}$.  Hence, as required, in the resultant configuration, $c_{t}$ has exactly one guard.
    If $c_{t-1} \notin C_1$, then $c_{t} \in C_1$. Further, on performing the restricted clockwise shift, the guard on $c_{t}$ will move to $c_{t-1}$. Also, since $c_t \in C_1$, no guard will move from $c_{t+1}$ to $c_{t}$. Hence, as required, in the resultant configuration, $c_{t}$ has no guard.

\end{itemize}
Hence, in all cases, the resultant guard positions on the vertices of the cycle are the same as those in $C_2$ given by Algorithm~\ref{algo:main}. This completes the proof of the first part of the lemma.

For the second part of the lemma, the proof is symmetric with the following role exchanges: 
\begin{itemize}
    \item $C_1$ with $C_2$
    \item $\Psi_1$ with $\Psi_2$
    \item  anticlockwise shift with clockwise shift 
    \item  restricted anticlockwise shift with restricted clockwise shift 
    \item  $c_{i-1}$ with $c_{i+1}$ and  $c_{i-2}$ with $c_{i+2}$  
    \item   $c_{j-1}$ with $c_{j+1}$ and  $c_{j-2}$ with $c_{j+2}$
    \item   $c_{t-1}$ with $c_{t+1}$ and  $c_{t-2}$ with $c_{t+2}$.
\end{itemize}

For the sake of completeness, we  give the  proof of the second part of the lemma in the Appendix. 

\end{proof}

Refer to Figure~\ref{fig:spiltcyclemove} for an illustration of the guard movements on the cycle vertices from $C_1$ to $C_2$ as given in Lemma~\ref{lem:spiltcyclemove}.

Now, using Observation~\ref{obs:fullcyclemove}, Lemma~\ref{lem:fullspinemove},
Lemma~\ref{lem:splitspinemove} and Lemma~\ref{lem:spiltcyclemove}, we complete the proof of Claim~\ref{clm:caterpillarproof}.
\begin{proof}[Proof of Claim~\ref{clm:caterpillarproof}]
    By Claim~\ref{clm:classes_of_same_size}, $C_1$ and $C_2$ are vertex covers of $G$ and they have the same size. Hence it remains to prove that a movement strategy for reconfiguration between these configuration exists in order to defend an attack on any edge.  We prove the result simultaneously for both configurations. Let the attacked edge be $ab$. Without loss of generality, assume that either $a$ or $b$ is not occupied in the current configuration.

    \textbf{Case 1: Both endpoints are in $V(C)$ :}
    Without loss of generality, let $a= c_i$ and $b=c_{i+1}$ for some $1\le i \le |V(C)|$. The attack can be defended by simultaneously performing the guard movements defined in Observation~\ref{obs:fullcyclemove}  and Lemma~\ref{lem:fullspinemove}. Suppose the current configuration is $C_1$. If $c_{i+1} \notin C_1$ then do the anticlockwise shift on the cycle and hole-child exchange on the entire spine. If $c_{i} \notin C_1$ then do the restricted clockwise shift on the cycle and hole-child exchange on the entire spine. By Observation~\ref{obs:fullcyclemove}  and Lemma~\ref{lem:fullspinemove} the resultant configuration is $C_2$. It is easy to see that the attack is defended. 

    If the current configuration is $C_2$, we can apply symmetric arguments using the guard movements defined in Observation~\ref{obs:fullcyclemove}  and Lemma~\ref{lem:fullspinemove}. 
    
    \textbf{Case 2: $a$ is in spine and $b$ is on the cycle :} 
     First, suppose $b$ is unoccupied and current configuration is $C_1$ \both{$C_2$}. To defend the attack, the following guard movements are simultaneously applied (i) the guard movements as defined in Lemma~\ref{lem:spiltcyclemove} with $c_i = b$ and $c_j$ being an occupied neighbour of $r$ \both{$f$} and (ii) the guard movements as defined in Lemma~\ref{lem:splitspinemove} with $x_i=a$ and $x_j=r$ \both{$x_i=f$ and $x_j=a$}. While doing this the guard on $a$ moves to $b$ and the guard on $r$ move to $c_j$ \both{$f$ moves to $c_j$}. Using Lemma~\ref{lem:splitspinemove} and Lemma~\ref{lem:spiltcyclemove}, we can see that the resultant configuration is $C_2$ \both{$C_1$}.

     Now, suppose $a$ is unoccupied and the current configuration is $C_1$ \both{$C_2$}.
     To defend the attack the following guard movements are simultaneously applied  (i) the guard movements as defined in Lemma~\ref{lem:spiltcyclemove} with $c_j = b$ and $c_i$ being an unoccupied neighbour of the child of $a$ \both{unoccupied neighbour of the ancestor $a'$ of $a$} given by Lemma~\ref{lem:neighbour_of_child_of_hole_in_C1} and (ii) the guard movements as defined in Lemma~\ref{lem:splitspinemove}   with $x_j=a$ and $x_i$ being child of $a$ \both{ $x_j=a'$  and $x_i=a$}. While doing this, the guard on $b$ moves to $a$ and the guard on child of $a$ moves to $c_i$ \both{$a'$ moves to $c_i$}. The feasibility of this strategy  is justified by Lemma~\ref{lem:neighbour_of_child_of_hole_in_C1}. Now, using Lemma~\ref{lem:splitspinemove} and Lemma~\ref{lem:spiltcyclemove}, we can see that the resultant configuration is $C_2$ \both{$C_1$}.

    \textbf{Case 3: Both endpoints are on the spine. } Without loss of generality, let $a$ be the child of $b$ and the current configuration be $C_1$ \both{$a$ be the parent of $b$ and current configuration be $C_2$}.
    
    First, suppose $b$ is unoccupied. In this case, the attack can be defended by applying hole-child exchange \both{hole-parent exchange} on the entire spine and anti-clockwise shift \both{clockwise shift} on a subpath of the cycle containing all its vertices. By Observation~\ref{obs:fullcyclemove}  and Lemma~\ref{lem:fullspinemove} the resultant configuration is $C_2$ \both{$C_1$} and the attack is defended. 
    
    Now, suppose $a$ is unoccupied. Let $c(a)$ be the child of $a$ on spine \both{$a'$ be an ancestor of $a$ given by Lemma~\ref{lem:neighbour_of_child_of_hole_in_C1}}. To defend the attack the following guard movements are applied simultaneously (i) the guard movements on the cycle as defined in Lemma~\ref{lem:spiltcyclemove} with $c_i$ as an unoccupied cycle neighbour of $c(a)$ \both{$c_i$ be an unoccupied neighbour of $a'$ in $C_2$} and $c_j$ being an occupied cycle neighbour of $r$ \both{$f$} and (ii) the guard movements on the spine as defined in Lemma~\ref{lem:splitspinemove} with $x_i=c(a)$ and $x_j =r$ \both{$x_i=f$ and $x_j=a'$}. While doing this, the guard on $b$ moves to $a$, the guard on $c(a)$ moves to $c_i$ \both{guard on $a'$ moves to $c_i$} and the guard on $c_j$ moves to $r$ \both{guard on $c_j$ moves to $f$}. The feasibility of this strategy is justified by Lemma~\ref{lem:neighbour_of_child_of_hole_in_C1}. By Lemma~\ref{lem:spiltcyclemove} and Lemma~\ref{lem:splitspinemove}, we can see that the resultant configuration is $C_2$ \both{$C_1$}.    
    
    This shows that $\mathcal{C}$ is an Eternal Vertex Cover Class of $G$.
    \end{proof}
    The next lemma bounds the size of the configurations defined in Algorithm~\ref{algo:main}.
\begin{lemma}\label{lem:size_of_c1}
    Let $C_1$ be the configuration as defined in Algorithm~\ref{algo:main}.  Each vertex of $C_1 \setminus S$ can be mapped to three distinct vertices from $C_1 \cap S$ such that no vertex from $S$ is mapped more than once. Consequently, $|C_1| \leq \frac{4}{3}|S|$.  
\end{lemma}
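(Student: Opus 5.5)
The plan is to observe first that $C_1 \setminus S = (F \cup \{f\}) \setminus S$. Indeed, $Q_1 \subseteq S$, $P(T)\cap S \subseteq S$, and $F \subseteq H = P(T)\setminus S$. Every vertex $v \in C_1\setminus S$ is therefore a spine vertex outside $S$; call such a vertex a \emph{hole}. Since $S$ is a vertex cover, all neighbours of a hole lie in $S \subseteq C_1$. These are its parent and child on the spine (when they exist) and all of its cycle neighbours. So every vertex we map to automatically lies in $C_1\cap S$, and the work is to choose three of them per hole and check disjointness.

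The assignment $\mu(v)$ of a hole $v$ is as follows.
\begin{itemize}
\item If $v=f$ and $f\notin S$: $f$ is a fan center with degree at least $3$ and only one spine neighbour, so it has at least two cycle neighbours. Set $\mu(f)$ to be two of its cycle neighbours together with its parent (which exists, since $G$ is not a wheel).
\item If $v\in F$ has at least two cycle neighbours: set $\mu(v)$ to be two of them together with its parent, which exists since $v\neq r$.
\item Otherwise, by the definition of $F$, the child $c(v)$ has a cycle neighbour $w\in S$. Set $\mu(v)=\{u,\,p(v),\,w\}$, where $u$ is the (unique) cycle neighbour of $v$ and $p(v)$ is its parent.
\end{itemize}
In each case the three vertices are clearly distinct, since two lie on the cycle and one on the spine, or vice versa.

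The main step, and the only delicate one, is showing that the sets $\mu(v)$ are pairwise disjoint over distinct holes. For cycle vertices, each cycle vertex is a leaf of $T$ and has a unique spine neighbour. A cycle neighbour $u$ of a hole $v$ can be charged only by $v$ itself, or as a ``$w$'' by the hole whose child is $v$. The latter charge is impossible because $w$ is chosen adjacent to $c(v)$, which is in $S$ and hence not a hole; so a vertex adjacent to a hole is never used as a $w$. Likewise, a cycle neighbour $w$ of $c(v)\in S$ is charged only by the unique parent of $c(v)$, namely $v$. For spine vertices, a spine vertex $p$ is charged only as the parent of a hole. Since the spine is a path, $p$ has exactly one child, so at most one hole charges it.

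Hence $\mu$ assigns to each vertex of $C_1\setminus S$ three vertices of $C_1\cap S = S$, with no vertex used twice. This gives $3|C_1\setminus S|\le |S|$, and therefore $|C_1| = |S| + |C_1\setminus S| \le \frac{4}{3}|S|$. The step I expect to need the most care is the disjointness check for the third case, where the $w$-charges go to neighbours of the child; the observation that holes are pairwise non-adjacent on the spine and that $c(v)\in S$ is what resolves it.
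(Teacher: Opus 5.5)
Your proof is correct and is essentially the paper's charging argument. The one difference is that you charge each hole to its parent on the spine rather than to its child. This lets you handle $f$ uniformly, whereas the paper, lacking a child of $f$, charges $f$ to a cycle neighbour of $r$ lying in $S$. You also make explicit the disjointness check that the paper leaves as ``it can be verified''.

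The only gloss is in your third case: $u\neq w$ follows from each cycle vertex having a unique spine neighbour ($v\notin S$ versus $c(v)\in S$), not merely from the cycle/spine split you cite.
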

\begin{proof}
We show this by comparing the sizes of the minimum vertex cover $S$ of $G$ used in Algorithm~\ref{algo:main} and  the configuration $C_1$ obtained from Algorithm~\ref{algo:main}.
          By construction,
    $C_1= P_1 \uplus  Q_1 =  ((S \cap P(T)) \cup \{f\}  \cup F ) \uplus (S \cap V(C))$. Further, from the algorithm, it follows that $F \cap (S \cup \{f\}) = \emptyset$. Therefore $C_1 = (S \cup \{f\}) \uplus F$.
     To show that $|C_1| \leq \frac{4}{3}|S|$, it suffices to map each vertex of $C_1 \setminus S$ to three distinct vertices from $C_1 \cap S$ such that no vertex from $S$ is mapped more than once.  
     
     Recall that $H = P(T)\setminus S $ and $F= \{ v \mid v \in H$, $v \neq f,r$ and either $v$ has multiple neighbours on $C$ or the child of $v$ in the spine has at least one neighbour in $S \cap V(C) \}$.
    First, consider a vertex $v \in F$. Since $v \notin S $, by definition of $H$, all neighbours of $v$ are in $S$.  Since $v \neq f$, the vertex $v$ has a child $w$ on the spine. If $v$ has at least two cycle neighbours, say $x_i$ and $x_j$, then map
    $ v \longmapsto \{w,x_i,x_j\}$.
    If $v$ has a lone cycle neighbour, say $x_i$, then by definition of $F$, vertex $w$ has a neighbour $x_j$ in $S \cap V(C)$. In this case, map     $v \longmapsto \{w,x_i,x_j\}$.
    Note that no cycle neighbours of the root vertex $r$ have been used so far in the mapping.
Now if $f \in S$, we have already mapped all vertices of $C_1 \setminus S$ as required. If $f \notin S$, then map $f$ to two of its cycle neighbours and a cycle neighbour of $r$ that is in $S$.
It can be  verified that no vertex from $S$ was used more than once in the mapping.

This shows that $|C_1 \setminus S| \leq \frac{|S|}{3}$ and hence $|C_1| \leq \frac{4}{3}|S|$.
This completes the proof.
\end{proof}

All the steps of Algorithm~\ref{algo:main} can be done in linear time, including the computation of minimum vertex cover~\cite{BorieParkerTovey2008}. 
Claim~\ref{clm:caterpillarproof} shows that for every caterpillar Halin graphs, Algorithm~\ref{algo:main} returns an eternal vertex cover class. Combining this with Lemma~\ref{lem:size_of_c1}, gives us the following theorem. 

\begin{theorem}\label{thm:main_caterpillar_bound}
    For every caterpillar Halin graph $G$, $\evc(G) \leq \frac{4}{3}\mvc(G)$. Furthermore, an eternal vertex cover class of $G$ of size at most $\frac{4}{3}\mvc(G)$ can be computed in linear time.  
\end{theorem}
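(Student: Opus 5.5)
The theorem follows by assembling the results already established for Algorithm~\ref{algo:main}. First, the wheel case is dispatched separately. If $|P(T)|=1$, then $G$ is a wheel, hence an internally triangulated planar graph. By \cite{BABU2021}, $\evc(G)=\mvc(G)\le\frac{4}{3}\mvc(G)$, and the two defending configurations can be written down directly in linear time. So I would assume from now on that $G$ is not a wheel, so that the spine has two distinct fan centers $r$ and $f$ and Algorithm~\ref{algo:main} applies.

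For the bound, I would run Algorithm~\ref{algo:main} on $G$ with a minimum vertex cover $S$ that has no three consecutive cycle vertices. Such an $S$ exists by the observation preceding the algorithm: we replace a middle cycle vertex by its unique non-cycle neighbour, which keeps the size and the cover property. Claim~\ref{clm:classes_of_same_size} shows that $C_1$ and $C_2$ are vertex covers of equal size $k=|C_1|$. Claim~\ref{clm:caterpillarproof} shows that $\{C_1,C_2\}$ is an eternal vertex cover class, so $\evc(G)\le |C_1|$. Lemma~\ref{lem:size_of_c1} then gives $|C_1|\le \frac{4}{3}|S|=\frac{4}{3}\mvc(G)$, which is the claimed inequality.

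For the running time, I would go through the steps of the algorithm. The tree/cycle decomposition is linear by \cite{Eppstein2016}. A minimum vertex cover is linear by \cite{BorieParkerTovey2008}. The local repair removing triples of consecutive cycle vertices is a single pass around $C$: each replacement inserts a spine vertex and deletes a cycle vertex, and it never creates a new bad triple. Identifying the spine, $r$, $f$ and the cycle numbering is linear. Computing $H$ and $F$ needs only the cycle-degree of each spine vertex and whether its child has a neighbour in $S\cap V(C)$, which is $O(n)$ overall. $P_2$ and $Q_2$ are obtained by shifting parent/child and cycle indices.

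The main point needing care is the repair step. I need that one left-to-right pass really yields a cover with no three consecutive cycle vertices in $S$, without increasing $|S|$. This is needed so that Lemma~\ref{lem:neighbour_of_child_of_hole_in_C1} and Lemma~\ref{lem:spiltcyclemove}, which rely on this property, are valid. The replacement keeps the set a vertex cover because the removed vertex $y$ has both cycle neighbours in $S$ and its third neighbour is added. Each replacement strictly decreases the number of cycle vertices in $S$, so the process terminates. With a linked scan it costs linear time.
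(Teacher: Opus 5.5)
Your proposal is correct and follows the paper's proof. The wheel case is settled by \cite{BABU2021}; otherwise Claim~\ref{clm:classes_of_same_size}, Claim~\ref{clm:caterpillarproof} and Lemma~\ref{lem:size_of_c1} give $\evc(G)\le|C_1|\le\frac{4}{3}\mvc(G)$, and the running time is linear because the decomposition and the minimum vertex cover can be computed in linear time. Your single-pass justification of the repair step is a useful addition that the paper only asserts: each swap preserves a vertex cover of the same size, and deletions cannot create new triples of consecutive cycle vertices in $S$.
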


\section{Discussion}\label{sec:discussion}
We have shown that for all Halin graphs $G$, $\evc(G) \le \frac{3}{2}\mvc(G)
$ and for any caterpillar Halin graph $G$, $\evc(G) \le \frac{4}{3}\mvc(G)$. Moreover, for any Halin graph $G$, $\evc(G) \le \frac{4}{3}\mvc(G)$, except when $\lceil\frac{n}{2}\rceil \le \mvc(G) <\frac{9n}{16}$. 
In addition, Halin graphs with $\evc(G) \le \frac{2n}{3}$, will have $\evc(G) \le \frac{4}{3}\mvc(G)$, because $\mvc(G) \ge \lceil\frac{n}{2}\rceil$ by the Hamiltonicity of $G$. 
All of these suggest that $\evc(G) \le \frac{4}{3}\mvc(G)$ is likely to be true for all Halin graphs.
On the lower bound side, if there is an infinite subfamily of Halin graphs with $\evc(G) \ge \frac{2n}{3}$ and $\mvc(G) \le \frac{9n}{16}$, it would imply that $\rho(\mathcal{H}) \ge \frac{32}{27}$, which would improve our current lower bound. 

Another interesting direction is to obtain good bounds for the parameter $\rho$ for the family of treewidth three biconnected graphs, which is a superclass of Halin graphs.  More generally, the fact that $\rho$ is $2$ for biconnected series parallel graphs (i.e., graphs of treewidth at most two), while it has a smaller lower bound for Halin graphs suggests us to explore the effect of larger treewidth on the parameter. Settling the complexity status of the eternal vertex cover problem of Halin graphs and series parallel graphs are other major questions.


\begin{thebibliography}{10}

\bibitem{Hisashi_Gen_Trees}
Hisashi Araki, Toshihiro Fujito, and Shota Inoue.
\newblock On the eternal vertex cover numbers of generalized trees.
\newblock {\em IEICE Transactions on Fundamentals of Electronics, Communications and Computer Sciences}, E98.A:1153--1160, 06 2015.

\bibitem{BABU2021}
Jasine Babu, L.~Sunil Chandran, Mathew Francis, Veena Prabhakaran, Deepak Rajendraprasad, and Nandini~J. Warrier.
\newblock On graphs whose eternal vertex cover number and vertex cover number coincide.
\newblock {\em Discrete Applied Mathematics}, https://doi.org/10.1016/j.dam.2021.02.004, 2021.

\bibitem{BabuKPW25}
Jasine Babu, K.~Murali Krishnan, Veena Prabhakaran, and Nandini~J. Warrier.
\newblock Computing eternal vertex cover number of maximal outerplanar graphs in linear time.
\newblock {\em Theor. Comput. Sci.}, 1056:115530, 2025.
\newblock URL: \url{https://doi.org/10.1016/j.tcs.2025.115530}, \href {https://doi.org/10.1016/J.TCS.2025.115530} {\path{doi:10.1016/J.TCS.2025.115530}}.

\bibitem{misra2022eternal}
Jasine Babu, Neeldhara Misra, and Saraswati Nanoti.
\newblock Eternal vertex cover on bipartite graphs.
\newblock In Alexander~S. Kulikov and Sofya Raskhodnikova, editors, {\em Computer Science - Theory and Applications - 17th International Computer Science Symposium in Russia, {CSR} 2022, Virtual Event, June 29 - July 1, 2022, Proceedings}, volume 13296 of {\em Lecture Notes in Computer Science}, pages 64--76. Springer, 2022.
\newblock \href {https://doi.org/10.1007/978-3-031-09574-0\_5} {\path{doi:10.1007/978-3-031-09574-0\_5}}.

\bibitem{BABU22newlb}
Jasine Babu and Veena Prabhakaran.
\newblock A new lower bound for the eternal vertex cover number of graphs.
\newblock {\em J. Comb. Optim.}, 44(4):2482--2498, 2022.
\newblock URL: \url{https://doi.org/10.1007/s10878-021-00764-8}, \href {https://doi.org/10.1007/S10878-021-00764-8} {\path{doi:10.1007/S10878-021-00764-8}}.

\bibitem{BPS2021}
Jasine Babu, Veena Prabhakaran, and Arko Sharma.
\newblock A substructure based lower bound for eternal vertex cover number.
\newblock {\em Theoretical Computer Science}, 890, 2021.

\bibitem{Bodlaender1996tw}
H.~L. Bodlaender.
\newblock A linear-time algorithm for finding tree-decompositions of small treewidth.
\newblock {\em SIAM Journal on Computing}, 25:1305--1317, 1996.

\bibitem{Bodlaender98}
Hans~L. Bodlaender.
\newblock A partial \emph{k}-arboretum of graphs with bounded treewidth.
\newblock {\em Theor. Comput. Sci.}, 209(1-2):1--45, 1998.
\newblock \href {https://doi.org/10.1016/S0304-3975(97)00228-4} {\path{doi:10.1016/S0304-3975(97)00228-4}}.

\bibitem{BorieParkerTovey2008}
Richard~B. Borie, R.~Gary Parker, and Craig~A. Tovey.
\newblock Solving problems on recursively constructed graphs.
\newblock {\em ACM Computing Surveys}, 41(1):4:1--4:51, 2008.
\newblock \href {https://doi.org/10.1145/1456650.1456654} {\path{doi:10.1145/1456650.1456654}}.

\bibitem{TizianaseriesParallel}
Tiziana Calamoneri, Federico Cor{\`{o}}, and Giacomo Paesani.
\newblock The minimum eternal vertex cover problem on a subclass of series-parallel graphs.
\newblock {\em CoRR}, abs/2504.04897, 2025.
\newblock URL: \url{https://doi.org/10.48550/arXiv.2504.04897}, \href {https://arxiv.org/abs/2504.04897} {\path{arXiv:2504.04897}}, \href {https://doi.org/10.48550/ARXIV.2504.04897} {\path{doi:10.48550/ARXIV.2504.04897}}.

\bibitem{cockayne2004roman}
Ernie~J Cockayne, Paul~A Dreyer~Jr, Sandra~M Hedetniemi, and Stephen~T Hedetniemi.
\newblock Roman domination in graphs.
\newblock {\em Discrete mathematics}, 278(1-3):11--22, 2004.

\bibitem{Cornuejols1983}
G.~Cornu{\'e}jols, D.~Naddef, and W.~R. Pulleyblank.
\newblock {H}alin graphs and the travelling salesman problem.
\newblock {\em Mathematical Programming}, 26(3):287--294, 1983.

\bibitem{Eppstein1988}
David Eppstein.
\newblock Parallel ${O}(\log{n})$ time edge-colouring of trees and {H}alin graphs.
\newblock {\em Information Processing Letters}, 27(1):43--51, 1988.

\bibitem{Eppstein2016}
David Eppstein.
\newblock Simple recognition of {H}alin graphs and their generalizations.
\newblock {\em Journal of Graph Algorithms and Applications}, 20:323--346, 02 2016.
\newblock \href {https://doi.org/10.7155/jgaa.00395} {\path{doi:10.7155/jgaa.00395}}.

\bibitem{Fomin2010}
Fedor~V. Fomin, Serge Gaspers, Petr~A. Golovach, Dieter Kratsch, and Saket Saurabh.
\newblock Parameterized algorithm for eternal vertex cover.
\newblock {\em Information Processing Letters}, 110(16):702--706, 2010.

\bibitem{FominT06}
Fedor~V. Fomin and Dimitrios~M. Thilikos.
\newblock A 3-approximation for the pathwidth of {H}alin graphs.
\newblock {\em J. Discrete Algorithms}, 4(4):499--510, 2006.
\newblock URL: \url{https://doi.org/10.1016/j.jda.2005.06.004}, \href {https://doi.org/10.1016/J.JDA.2005.06.004} {\path{doi:10.1016/J.JDA.2005.06.004}}.

\bibitem{goddard2005}
Wayne Goddard, Sandra~M Hedetniemi, and Stephen~T Hedetniemi.
\newblock Eternal security in graphs.
\newblock {\em J. Combin. Math. Combin. Comput}, 52:169--180, 2005.

\bibitem{Halin1971orig}
R.~Halin.
\newblock Studies on minimally n-connected graphs.
\newblock {\em Combinatorial Mathematics and its Applications}, pages 129--136, 1971.

\bibitem{Hartnell2014}
B.L. Hartnell and C.M. Mynhardt.
\newblock Independent protection in graphs.
\newblock {\em Discrete Mathematics}, 335:100 -- 109, 2014.

\bibitem{Hazewinkel1997}
M.~Hazewinkel, editor.
\newblock {\em Encyclopaedia of Mathematics: Supplement Volume I}.
\newblock Encyclopaedia of Mathematics. Springer, Dordrecht, 1 edition, 1997.
\newblock \href {https://doi.org/10.1007/978-94-015-1288-6} {\path{doi:10.1007/978-94-015-1288-6}}.

\bibitem{Karp72}
Richard~M. Karp.
\newblock Reducibility among combinatorial problems.
\newblock In Raymond~E. Miller and James~W. Thatcher, editors, {\em Proceedings of a symposium on the Complexity of Computer Computations, held March 20-22, 1972, at the {IBM} Thomas J. Watson Research Center, Yorktown Heights, New York, {USA}}, volume~40 of {\em The {IBM} Research Symposia Series}, pages 85--103. Plenum Press, New York, 1972.

\bibitem{Klostermeyer2009}
W.~F. Klostermeyer and C.~M. Mynhardt.
\newblock Edge protection in graphs.
\newblock {\em Australasian Journal of Combinatorics}, 45:235--250, 2009.

\bibitem{klostermeyer2017eternal}
William~F Klostermeyer and Gary MacGillivray.
\newblock Eternal domination: Criticality and reachability.
\newblock {\em Discussiones Mathematicae Graph Theory}, 37(1):63--77, 2017.

\bibitem{Syslo1979halin}
Maciej~M. Sys{\l}o.
\newblock On the {H}amiltonian properties of {H}alin graphs.
\newblock {\em Discrete Mathematics}, 27(2):215--220, 1979.
\newblock \href {https://doi.org/10.1016/0012-365X(79)90071-2} {\path{doi:10.1016/0012-365X(79)90071-2}}.

\bibitem{Vazirani}
Vijay~V. Vazirani.
\newblock {\em Approximation algorithms}.
\newblock Springer, 2001.

\end{thebibliography}

\newpage
\appendix

\section{Appendix} \label{app:proofs}

\begin{proof}[Proof of second part of Lemma~\ref{lem:fullspinemove}]
    
Let $C_2$ be the starting configuration. Suppose, we perform hole-parent exchange on the whole spine as mentioned in the second part of the lemma. It may be observed that no guard is moved more than once. Further, in hole-parent exchange movement, a vertex $x_t$ can get a guard only from its parent. We will show that a vertex $x_t$ of the spine is in the resultant configuration if and only if it is in $ C_1$. We will also show that in the resultant configuration, no vertex has multiple guards.

Since $f \in C_1$, we first show that $f$ is in the resultant configuration with exactly one guard. 
If $f \in C_2$, then since $f$ has no child, by hole-parent exchange, the guard on $f$ will not move. Also, no other guard will move to $f$. Hence, as required, $f$ will be occupied with exactly one guard in the resultant configuration.
If $f \notin C_2$, then since $C_2$ is a vertex cover, parent of $f$ on the spine, say $p(f)$, will be occupied in $C_2$.
Now, by hole-parent exchange movement on $\Phi$, the guard on $p(f)$ will move to $f$. Also, no other guard moves to $f$.  Thus, in all cases, $f$ will be occupied with exactly one guard in the resultant configuration.

Henceforth, assume that $x_t \neq f$.  By Algorithm~\ref{algo:main}, $x_{t} \in C_1$ if and only if the child of $x_t$ is in $C_2$. Let $c(x_t)$ denote the child of $x_t$. We will show that $x_t$ is occupied in the resultant configuration if and only if $c(x_t) \in C_2$.
 Since, $r$ has no parent, $c(x_t) \neq r$.
 \begin{itemize}
     \item  \textbf{Case 1:} $c(x_t) \in C_2$. 
     We need to show that $x_t$ is occupied with exactly one guard in the resultant configuration.
     
First, consider the case when $x_t \in C_2$. On performing the hole-parent exchange movement on $\Phi$, the guard on $x_t$ will not move to $c(x_t)$, as $c(x_t) \in C_2$. Also no other guard will move to $x_t$ as $ x_t \in C_2$. Hence, as required in the resultant configuration, $x_t$ is occupied with exactly one guard. 

If $x_t \notin C_2$, then by construction, $x_t \neq r$ and therefore, $x_t$ has a parent, say $p(x_t)$, on the spine. Further, since $C_2$ is a vertex cover, $p(x_t) \in C_2$. Now, on performing the hole-parent exchange on $\Phi$, the guard on $p(x_t)$ will move to $x_t$. Also, no other guard moves to $x_t$.  Hence, as required in the resultant configuration, $x_t$ is occupied with exactly one guard. 

\item \textbf{Case 2:} $c(x_t) \notin C_2$. We need to show that $x_t$ has no guard in the resultant configuration. 

Since $C_2$ is a vertex cover, $x_t \in C_2$. On performing hole-parent exchange on $\Phi$, the guard on $x_t$ will move to $c(x_t)$. Now, since $x_t \in C_2$, no guard will come on $x_t$ from its parent. Hence, as required, $x_t$ is not occupied in the resultant configuration.

 \end{itemize}

Hence, in all cases, the resultant guard positions on the spine after the movements as per the lemma are the same as those in $C_1$ given by Algorithm~\ref{algo:main}.
\end{proof}
\begin{proof}[Proof of second part of Lemma~\ref{lem:splitspinemove}]
    
Let $C_2$ be the starting configuration. Suppose we perform the movements as mentioned in the second part of the lemma. It may be observed that no guard is moved more than once. We will show that a vertex $x_t$ of the spine is in the resultant configuration if and only if it is in $ C_1$. We will also show that in the resultant configuration, no vertex has multiple guards.

Since $f \in C_1$, we first show that $f$ is in the resultant configuration with exactly one guard. Observe that either $f \in V(\Phi_1)$ or $f = x_i$. If $f \in V(\Phi_1)$ and $f \notin C_2$ then $f$ gets a guard from its parent, by hole-parent exchange. If $f \in V(\Phi_1)$ and $f \in C_2$, then by hole-parent exchange, the guard on it will not move and no other guard comes to $f$.
If $f=x_i$ and $f \in C_2$ then, by the movement, the guard on $f$  moves to its parent  and $f$ gets exactly one guard from $c_i$, an occupied cycle neighbour of $f$ in $C_2$.
If $f=x_i$ and $f \notin C_2$, then $f$ will have exactly one guard, the guard that moved from $c_i$. 
Thus, in all cases, $f$ is occupied in the resultant configuration and has exactly one guard. 

Henceforth, assume that $x_t \neq f$.  By Algorithm~\ref{algo:main}, $x_{t} \in C_1$ if and only if the child of $x_t$ in the spine is in $C_2$. Let $c(x_t)$ denote the child of $x_t$ in the spine and if $x_t \neq r$, let $p(x_t)$ denote the parent of $x_t$. We will show that $x_t$ is occupied in the resultant configuration if and only if $c(x_t) \in C_2$.
 Since $r$ has no parent, $c(x_t) \neq r$.
 \begin{itemize}  
 \item \textbf{Case 1 :} $c(x_t) \in V(\Phi_3)$.
  In this case, as $c(x_t) \neq r$, $x_t \in V(\Phi_3)$.

  If $c(x_t) \notin C_2$, then $x_t$ would exist in $\Phi_3$ and $x_t \in C_2$. 
By performing hole-parent exchange, guard on $x_t$ would move to $c(x_t)$ and no guard moves to $x_t$. So, as required, $x_t$ will not be occupied in the resultant configuration.

 If $c(x_t) \in C_2$, we need to show that $x_t$ is occupied with exactly one guard in the resultant configuration. 
 If $x_t \in C_2$, then the by hole-parent exchange, the guard on $x_t$ will not move, as $c(x_t) \in C_2$ and no other guard will come on $x_t$. Hence, $x_t$ is occupied with exactly one guard in the resultant configuration as required.
 If $x_t \notin C_2$, then by Observation~\ref{obs:f_in_c1}, $x_t \neq r$. Hence, the parent of $x_t$,  $p(x_t)$, would exist in $\Phi_3$. 
Further, since $C_2$ is a vertex cover by Claim~\ref{clm:classes_of_same_size}, $p(x_t) \in C_2$.  By hole-parent exchange, the only guard that moves to $x_t$ is from $p(x_t)$. So $x_t$ will be occupied with exactly one guard in the resultant configuration as required. 

\item \textbf{Case 2 :} $c(x_t) =x_j$. Note that $x_j \in C_2$ by our choice of $x_j$. Hence, showing $x_t$ is occupied with exactly one guard in the resultant configuration will suffice.  
We have $x_t \in V(\Phi_3)$ as $c(x_t) \neq r$. If $x_t \notin C_2$, then as discussed in Case $1$, $x_t \neq r$ and its parent, $p(x_t) \in V(\Phi_3) \cap C_2$. 
By hole-parent exchange, the only guard that moves to $x_t$  is from $p(x_t)$. If $x_t \in C_2$, then by hole-parent exchange, the guard on $x_t$ will not move because $c(x_t) \in C_2$. 
No other guard will come to $x_t$. Thus, as required, $x_t$ is occupied and has exactly one guard in the resultant configuration. 

\item \textbf{Case 3 :} $c(x_t) \in V(\Phi_2)$. 

If $c(x_t) \in C_2$, then by shift to parent movement, the guard from $c(x_t)$ will move to $x_t$ in the resultant configuration. Further, no other guard moves to $x_t$. Note that this holds even when $x_t =x_j$. 
If $x_t \in C_2 \cap V(\Phi_2)$, the guard on $x_t$ moves to its parent, whereas if $x_t = x_j$, the guard on $x_t$ moves to one of its cycle neighbours.
Thus, $x_t$ is occupied with exactly one guard in the resultant configuration.

If $c(x_t) \notin C_2$, then $x_t \in C_2$. By the guard movements, no guard will move to $x_t$. Further, if $x_t$ is $x_j$, the guard on $x_t$ will move to one of its cycle neighbours. Otherwise, the guard on $x_t$ will move to $p(x_t)$ by the shift to parent movement. Thus, as required, in the resultant configuration $x_t$ will have no guards. 

\item \textbf{Case 4 :} $c(x_t) =x_i$. 

If $x_i\in C_2$, the guard from $x_i$ moves to its parent $x_t$ during the movement. Now, if $x_t =x_j$, by guards movements, the guard from $x_j$ moves to one of its cycle neighbours and no other guard moves to $x_j$ other than the guard from $c(x_t)=x_i$. If $x_t \neq x_j$, then $x_t \in \Phi_2$. By shift to parent movement, if $x_t \in C_2$, then the guard on $x_t$ moves to $p(x_t)$ and the only guard that comes to $x_t$ is from $c(x_t)=x_i$. If $x_t \notin C_2$, then also no other guard comes to $x_t$ other than the guard from $c(x_t)$. Thus, as needed, $x_t$ has exactly one guard in the resultant configuration.

If $c(x_t)=x_i \notin C_2$, then we need to show that $x_t$ is not occupied in the resultant  configuration.
Since $C_2$ is a vertex cover, $x_t \in C_2$. 
If $x_t = x_j$, by the guard movements, the guard from $x_j$ moves to one of its cycle neighbours and no guard moves to $x_j$. If $x_t \neq x_j$, then $x_t \in \Phi_2$. By shift to parent movement,  the guard on $x_t$ moves to $p(x_t)$ and no other guard comes to $x_t$. Thus, as needed, $x_t$ is not occupied in the resultant configuration.

Note that these arguments go through even when $x_i =f$.

\item \textbf{Case 5 :} $c(x_t) \in V(\Phi_1)$. Recall that by definition of $\Phi_1$, $\Phi_1 = \emptyset$ whenever $x_i =f$. Hence, $x_i \neq f$. By our choice of $x_i$, $x_i \notin C_2$. Now, there are two possibilities: either $x_t = x_i$ or $x_t \in V(\Phi_1)$.

If $x_t = x_{i}$ then $c(x_i) \in C_2 $.
In this case, we want to show that $x_i$ has exactly one guard in the resultant configuration.
Now, by the guard movements defined in the lemma, $x_i$ gets a guard from one of its cycle neighbours. Also, there is no guard movement from $c(x_i)$ or $p(x_i)$ to $x_i$. Hence $x_i$ is occupied by exactly one guard in the resultant configuration. 

Now, consider the case when $x_t \in \Phi_1$. If $c(x_t) \in C_2$ and $x_t \in C_2$ then hole-parent exchange movement, no guard moves to $x_t$ and the guard on $x_t$ remains on $x_t$.
If $c(x_t) \in C_2$ and $x_t \notin C_2$, then $p(x_t) \in C_2$. This implies $p(x_t) \neq x_i$. Now, by hole-parent exchange movement, $x_t$ gets a guard from its parent and no other guard comes to $x_t$.
Hence, whenever $c(x_t) \in C_2$, $x_t$ gets exactly one guard in the resultant configuration after the movements.  If $c(x_t) \notin C_2$, then $x_t \in C_2$. By the hole-parent exchange movement, the guard from $x_t$ moves to $c(x_t)$ and no other guard comes to $x_t$. Hence, as required, $x_t$ has no guard in the resultant configuration.

Note that these arguments go through even when $c(x_t)=f$.

 \end{itemize}
Hence, in all cases, the resultant guard positions on the spine are the same as those in $C_1$ given by Algorithm~\ref{algo:main}.
\end{proof}
\begin{proof}[Proof of second part of Lemma~\ref{lem:spiltcyclemove}]
Let $C_2$ be the starting configuration. Suppose that we perform the movements as mentioned in the second part of the lemma. 
Since the paths $\Psi_1
$ and $\Psi_2$ are vertex disjoint and vertices $c_i$ and $c_j$ do not lie on them, no guard is moved more than once. 
We need to prove that a vertex $c_t \in V(C)$ is occupied in the resultant configuration if and only if $c_t \in C_1$.  
We will also show that in the resultant configuration, every vertex has at most one guard.

By Algorithm~\ref{algo:main},  $c_{t} \in C_1$  if and only if $c_{t+1} \in C_2$.
Hence, to show that $c_t$ is occupied in the resultant configuration if and only if $c_t \in C_1$, it suffices to show that $c_{t+1} \in C_2$ if and only if $c_{t}$ is in the resultant configuration. 

Observe that by construction of $C_2$, vertices $c_{i-1},c_i$ and $c_{i+1}$ cannot be simultaneously occupied in $C_2$. Similarly, vertices $c_{j-1},c_j$ and $c_{j+1}$ cannot be simultaneously occupied in $C_2$. Further, note that $c_t$ can get a guard only from $c_{t-1},c_{t+1}$ or from its spine neighbour (the last case is possible only when $c_t = c_i$).
\begin{itemize}
    \item \textbf{Case 1:} $c_{t+1}=c_i$.
    
    By assumption, $c_i \notin C_2$. Hence, it suffices to show that $c_t =c_{i-1}$ is not occupied in the resultant configuration. Since $C_2$ is a vertex cover by Claim~\ref{clm:classes_of_same_size}, $c_{i-1} \in C_2$. Since $c_i \notin C_2$, $c_{i-1}$ can get a guard only from $c_{i-2}$.
    
    If $c_{i-1} = c_j$, then the path $\Psi_2 = \emptyset$. By the guard movements defined in the lemma, the guard from $c_j$ moves to its spine neighbour.
    Now, $c_{j-1} \in V(\Psi_1)$ and a restricted anticlockwise shift is performed on $\Psi_1$. 
    Since $c_j \in C_2$, even when $c_{j-1} \in C_2$, no guard will move to $c_j$. Hence, in the resultant configuration $c_{t} =c_j$ will have no guard.
    
    If $c_{i-1} \neq c_j$, then $c_{i-1} \in V(\Psi_2)$. Observe that $c_{i-2}$ is either in $V(\Psi_2)$ or $c_{i-2} = c_j$. Therefore, no guard moves from $c_{i-2}$ to $c_{i-1}$. By performing a clockwise shift on $\Psi_2$, the guard from $c_{i-1}$ will move to $c_{i-2}$. 
    Hence, $c_{t}=c_{i-1}$ will be unoccupied in the resultant configuration.
    
    \item \textbf{Case 2:} $c_{t+1} \in V(\Psi_2)$. 
    In this case there are two possibilities: either $c_t \in V(\Psi_2)$ or $c_t = c_j$.

    When $c_{t} = c_j$, by the guard movements defined in the lemma, the guard from $c_j$ moves to its spine neighbour. Now, either $\Psi_1 = \emptyset$ or $c_{j-1} \in V(\Psi_1)$. If $\Psi_1 = \emptyset$ then $c_{j-1} = c_i$.
    Since $c_i= c_{j-1} \notin C_2$ by our choice of $c_i$, no guard will move from $c_{j-1}$ to $c_{j}$. If $c_{j-1} \in V(\Psi_1)$, then  no guard moves from $c_{j-1}$ to $c_j$ as a restricted anticlockwise shift is applied on $\Psi_1$ and $c_j \in C_2$. Hence, a guard can come to $c_j$ only from $c_{j+1} = c_{t+1}$.
     A clockwise shift is applied on $\Psi_2$.
     If $c_{j+1} \in C_2$, then the guard from $c_{j+1}$ moves to $c_{j} = c_{t}$.  Hence, as required, $c_{t}=c_j$ has exactly one guard on it in the resultant configuration. 
    If $c_{j+1} \notin C_2$, on applying clockwise shift on $\Psi_2$, no guard moves from $c_{j+1}$  to $c_{j}$. Hence, as required,  $c_j = c_{t}$ has no guard in the resultant configuration.

    Now, suppose that $c_{t} \in V(\Psi_2)$. Notice that $c_{t-1} \notin V(\Psi_1)$ and so no guard can move from $c_{t-1}$ to $c_t$ by the movements defined in the lemma. If $c_{t+1} \in C_2$, then on performing the clockwise shift on $\Psi_2$, the guard on $c_{t+1}$ moves to $c_{t}$. If $c_{t} \in C_2$, then the guard from $c_{t}$ will move to $c_{t-1}$. Hence, as required, $c_{t}$ will have exactly one guard in the resultant configuration. If $c_{t+1} \notin C_2$, then $c_{t} \in C_2$. On performing the clockwise shift, the guard on $c_{t}$ will move to $c_{t-1}$ and no other guard will come on $c_{t}$. Hence, as required, $c_{t}$ will be unoccupied in the resultant configuration.

    \item \textbf{Case 3:} $c_{t+1} = c_j$. 

    By assumption, $c_j \in C_2$. Hence, it suffices to show that $c_t =c_{j-1}$ is occupied with exactly one guard in the resultant configuration. Since, the guard on $c_j$ moves to its spine neighbour, $c_{j-1}$ can get guard only from its spine neighbour (when $c_{j-1} =c_i$) or from $c_{j-2}$. 
    
    Now, there are two possibilities: either $c_t =c_{j-1}=c_i$ or $c_t \in V(\Psi_1)$.
     If $c_{j-1}=c_i$, then by our choice, $c_i \notin C_2$. By the guard movements defined in the lemma, a guard from spine is moved to $c_i$. Since, $c_{j-2}=c_{i-1} \in V(\Psi_2)$, with a clockwise shift on $\Psi_2$, it will not contribute any guard to $c_{j-1}=c_i$. Hence, as required, $c_i = c_{t}$ has exactly one guard in the resultant configuration.

     If $c_{j-1} \neq c_i$, then $c_t = c_{j-1} \in V(\Psi_1$). In this case, $c_{j-1}=c_t$ can get a guard only from $c_{j-2}$. If $c_{j-1} \in C_2$, then by restricted anticlockwise shift movement on $\Psi_1$, the guard from $c_{j-1}$ does not move as $c_{j} \in C_2$. Similarly, no guard moves from  $c_{j-2}$ to $c_{j-1}$ as $c_{j-1} \in C_2$. 
     Hence, as required, there is exactly one guard on $c_{j-1}=c_{t}$ in the resultant configuration.
     If $c_{j-1} \notin C_2$, then $c_{j-2} \in C_2$. Therefore, by our choice of $c_i$, $c_{j-2} \neq c_i$. Hence, $c_{j-2} \in V(\Psi_1)$. By the restricted anticlockwise shift movement on $\Psi_1$, the guard from $c_{j-2}$ will move to $c_{j-1}$. Hence, as required, $c_{j-1}=c_{t}$ will have exactly one guard in the resultant configuration.
    
    \item \textbf{Case 4:} $c_{t+1} \in V(\Psi_1)$.
    
 Notice that no guard moves from $c_{t+1}$ to $c_t$, as guard movement on $\Psi_1$ is restricted anticlockwise shift. Hence, $c_t$ can get a guard only from its spine neighbour (when $c_t=c_i$) or from $c_{t-1}$. 
 
 There are two possibilities: either $c_t \in V(\Psi_1)$ or $c_t = c_i$.

    When $c_{t} = c_i$, by our choice of $c_i$, $c_t= c_i \notin C_2$. This implies that $c_{t+1} \in C_2$. Hence, in this case, we need to show that $c_t$ gets exactly one guard in the resultant configuration.
    Now, $c_{t-1} =c_{i-1} \notin \Psi_1$. Hence, by the guard movements defined in the lemma, no guard moves from $c_{i-1}$ to $c_{i}$ and
     $c_i$ gets a guard from its spine neighbour. 
    Hence $c_i=c_{t}$ has exactly one guard in the resultant configuration. 

    Now consider the case when $c_t \in  V(\Psi_1)$.  
    If $c_{t+1} \in C_2$ and $c_{t} \in C_2$, then by our choice of $C_2$, $c_{t-1} \notin C_2$. Hence no guard can come to $c_t$ from $c_{t-1}$. Further, on performing the restricted anticlockwise shift on $\Psi_1$, the guard on $c_{t}$ will not move. Hence, as required, in the resultant configuration, $c_{t}$ has exactly one guard.
    If $c_{t+1} \in C_2$ and $c_{t} \notin C_2$, then $c_{t-1} \in C_2$.
    By our choice of $c_i$, $c_{t-1} \neq c_i$ and so $c_{t-1} \in \Psi_1$.
    Further, on performing the restricted anticlockwise shift on $\Psi_1$, the guard on $c_{t-1}$ will move $c_{t}$.  Hence, as required, in the resultant configuration, $c_{t}$ has exactly one guard.
    If $c_{t+1} \notin C_2$, then $c_{t} \in C_2$. Further, on performing the restricted anticlockwise shift, the guard on $c_{t}$ will move to $c_{t+1}$. Also, since $c_t \in C_2$, no guard will move from $c_{t-1}$ to $c_{t}$. Hence, as required, in the resultant configuration, $c_{t}$ has no guard.

\end{itemize}
Hence, in all cases, the resultant guard positions on the vertices of the cycle are the same as those in $C_1$ given by Algorithm~\ref{algo:main}.
\end{proof}

\end{document}